\documentclass[letterpaper,11pt]{article}
\usepackage[latin1]{inputenc}
\usepackage{amsfonts}
\usepackage{amsmath}
\usepackage{amssymb}
\usepackage{amsthm}
\usepackage{mathptmx}
\usepackage{todonotes}
\usepackage[english]{babel}
\usepackage[T1]{fontenc}
\usepackage[all]{xy}
\usepackage[small]{caption}
\usepackage{hyperref}
\usepackage{verbatim}
\DeclareSymbolFont{AMSb}{U}{msb}{m}{n}
\DeclareMathSymbol{\N}{\mathbin}{AMSb}{"4E}
\DeclareMathSymbol{\Z}{\mathbin}{AMSb}{"5A}
\DeclareMathSymbol{\R}{\mathbin}{AMSb}{"52}
\DeclareMathSymbol{\Q}{\mathbin}{AMSb}{"51}
\DeclareMathSymbol{\I}{\mathbin}{AMSb}{"49}
\DeclareMathSymbol{\C}{\mathbin}{AMSb}{"43}

\newtheorem{theorem}{Theorem}

\newtheorem{lemma}[theorem]{Lemma}

\newtheorem{fact}[theorem]{Fact}
\theoremstyle{definition}
\newtheorem{rmk}{Remark}

\newcommand{\E}{\mathbf{E}}

\newcommand{\eps}{\epsilon}

\DeclareMathOperator{\supp}{\text{Supp}}

\DeclareMathOperator*{\argmax}{arg\,max}

\newcommand{\zmax}{Z_{\max}}
\newcommand{\zq}{\mathbb{Z}_q}

\usepackage{setspace}
\setstretch{1}
\usepackage{graphicx}
\usepackage[margin=1in]{geometry}
\parskip=0.5ex
\usepackage{xcolor}

\usepackage{srcltx}
\usepackage{thmtools}
\usepackage{thm-restate}

\title{An Entropy Sumset Inequality and 
Polynomially Fast Convergence to Shannon Capacity Over All Alphabets}

\author{Venkatesan Guruswami\thanks{Computer Science Department, Carnegie 
Mellon University, {\tt 
guruswami@cmu.edu}. Part of this work was done while visiting Microsoft 
Research New England. Research supported in part by NSF grants CCF-0963975 and CCF-1422045.} 
\and Ameya Velingker\thanks{Computer Science Department, Carnegie 
Mellon University, {\tt 
avelingk@cs.cmu.edu}. Part of this work was done while visiting Microsoft 
Research New England. Research supported in part by NSF grant CCF-0963975.}}

\begin{document}
\maketitle
\begin{abstract}

We prove a  lower estimate on the increase in entropy when two copies of a conditional random variable $X | Y$, with $X$ supported on $\Z_q=\{0,1,\dots,q-1\}$ for prime $q$, are summed modulo $q$. Specifically, given two i.i.d copies $(X_1,Y_1)$ and $(X_2,Y_2)$ of a pair of random variables $(X,Y)$, with $X$ taking values in $\Z_q$, we show
\[ H(X_1 + X_2 \mid Y_1, Y_2)  - H(X|Y) \ge \alpha(q) \cdot  H(X|Y) (1-H(X|Y)) \]
for some $\alpha(q) > 0$, where $H(\cdot)$ is the normalized (by
factor $\log_2 q$) entropy. In particular, if $X | Y$ is not close to
being fully random or fully deterministic and $H(X| Y) \in
(\gamma,1-\gamma)$, then the entropy of the sum increases by
$\Omega_q(\gamma)$. Our motivation is an effective analysis of the
finite-length behavior of polar codes, for which the linear dependence
on $\gamma$ is quantitatively important.  The assumption of $q$ being
prime is necessary: for $X$ supported uniformly on a proper subgroup
of $\Z_q$ we have $H(X+X)=H(X)$. For $X$ supported on infinite groups
without a finite subgroup (the torsion-free case) and no conditioning,
a sumset inequality for the absolute increase in (unnormalized)
entropy was shown by Tao in~\cite{tao-entropy-sumset}.

\smallskip
We use our sumset inequality to analyze Ar\i kan's construction of polar
codes and prove that for any $q$-ary source $X$, where $q$ is any fixed
prime, and any $\epsilon > 0$, polar codes allow {\em efficient} data
compression of $N$ i.i.d. copies of $X$ into $(H(X)+\epsilon)N$
$q$-ary symbols, {\em as soon as $N$ is polynomially large in
  $1/\epsilon$}. We can get capacity-achieving source codes with
similar guarantees for composite alphabets, by factoring $q$ into
primes and combining different polar codes for each prime in
factorization.

\smallskip
A consequence of our result for noisy channel coding is that for {\em all}
discrete memoryless channels, there are explicit codes enabling
reliable communication within $\epsilon > 0$ of the symmetric Shannon
capacity for a block length and decoding complexity bounded by a
polynomial in $1/\epsilon$. The result was previously shown for the
special case of binary input channels~\cite{GX13,HAU-scaling}, and
this work extends the result to channels over any alphabet.

\end{abstract}

\newpage
\section{Introduction}
In a remarkable work, Ar{\i}kan~\cite{Arikan09} introduced the
technique of channel polarization, and used it to construct a family
of binary linear codes called polar codes that achieve the symmetric
Shannon capacity of binary-input discrete memoryless channels in the
limit of large block lengths.  Polar codes are based on an elegant
recursive construction and analysis guided by information-theoretic
intuition. Ar{\i}kan's work gave a construction of binary codes, and
this was subsequently extended to general alphabets in
\cite{STA09}. In addition to being an approach to realize Shannon
capacity that is radically different from prior ones, channel
polarization turns out to be a powerful and versatile primitive
applicable in many other important information-theoretic
scenarios. For instance, variants of the polar coding approach give
solutions to the lossless and lossy source coding
problem~\cite{arikan-source,KU-lossy}, capacity of wiretap
channels~\cite{MV11}, the Slepian-Wolf, Wyner-Ziv, and Gelfand-Pinsker
problems~\cite{korada}, coding for broadcast channels~\cite{GAG13},
multiple access channels~\cite{STY13,AT12}, interference
networks~\cite{WS14}, etc. We recommend the well-written survey by
\c{S}a\c{s}o\u{g}lu~\cite{Sasoglu12} for a detailed introduction to polar codes.

The advantage of polar codes over previous capacity-achieving methods (such as 
Forney's concatenated codes that provably achieved capacity) was highlighted in 
a recent work~\cite{GX13} where 
{\em polynomial convergence to capacity} was shown in the  {\em binary} 
case (this was also shown independently in \cite{HAU-scaling}).
Specifically, it was shown that polar codes enable approaching the symmetric 
capacity of binary memoryless channels within an additive gap of $\epsilon$ 
with block length, construction, and encoding/decoding complexity all bounded 
by a polynomially growing function of $1/\epsilon$. 
Polar codes are the first and currently only known construction which provably have this property, thus 
providing a formal complexity-theoretic sense in which they are the first 
constructive capacity-achieving codes.

The main objective of this paper is to extend this result to the
non-binary case, and we manage to do this for {\em all} alphabets in
this work. We stress that the best previously proven complexity bound
for communicating at rates within $\eps$ of capacity of channels with
non-binary inputs was {\em exponential} in $1/\eps$.  The high level
approach to prove the polynomially fast convergence to capacity is
similar to what was done in \cite{GX13}, which is to replace the
appeal to general martingale convergence theorems (which lead to
ineffective bounds) with a more direct analysis of the convergence
rate of a specific martingale of entropies.\footnote{The approach
  taken in \cite{HAU-scaling} to analyze the speed of polarization for
  the binary was different, based on channel Bhattacharyya parameters
  instead of entropies. This approach does not seem as flexible as the
  entropic one to generalize to larger alphabets.} However, the
extension to the non-binary case is far from immediate, and we need to
establish a quantitatively strong ``entropy increase lemma'' (see
details in Section~\ref{sec:entropinc}) over all prime alphabets. 
The corresponding inequality admits an easier proof in the binary case, but 
requires more work for general prime alphabets.  For
alphabets of size $m$ where $m$ is not a prime, we can construct a
capacity-achieving code by combining together polar codes for each
prime dividing $m$.

In the next section, we briefly sketch the high level structure of polar codes, 
and the crucial role played by a certain ``entropy sumset inequality'' in our effective 
analysis. Proving this entropic inequality is the main new component in this work, though additional technical work is needed to glue it together with several other ingredients to yield the overall coding result.

\section{Overview of the Contribution}

In order to illustrate our main contribution, which is an inequality on 
conditional entropies for inputs from prime alphabets, in a simple setting, we 
will focus on the source coding (lossless compression) model in this paper. 
The consequence of our results for channel coding, which is not immediate but follows in a standard manner from compression of sources with side information (see for instance \cite[Sec 2.4]{Sasoglu12}),
 is stated in Theorem~\ref{thm:chcoding}.

Let $\zq = \{0,1,\dots,q-1\}$ denote the additive group of integers modulo $q$. Suppose $X$ is a source (random variable) over 
$\zq$ (with $q$ prime), with entropy $H(X)$ (throughout the paper, by entropy 
we will mean the entropy normalized by a $\lg q$ factor, so that $H(X) \in 
[0,1]$). The source coding problem consists of compressing $N$ i.i.d. copies 
$X_0,X_1,\dots,X_{N-1}$ of $X$ to $\approx H(X) N$ (say $(H(X)+\epsilon) N$) 
symbols from $\zq$. The approach based on channel polarization is to find an 
explicit permutation matrix $A \in \zq^{N \times N}$, such that if 
$(U_0,\dots,U_{N-1})^t = A (X_0,\dots,X_{N-1})^t$, then in the limit of $N \to 
\infty$, for most indices $i$, the conditional entropy $H(U_i | 
U_0,\dots,U_{i-1})$ is either $\approx 0$ or $\approx 1$. Note that the 
conditional entropies at the source $H(X_i | X_0,\dots,X_{i-1})$ are all equal 
to $H(X)$ (as the samples are i.i.d.). However, after the linear transformation 
by $A$, the conditional entropies get {\em polarized} to the boundaries $0$ and 
$1$. By the chain rule and conservation of entropy, the fraction of $i$ for 
which $H(U_i | U_0,\dots,U_{i-1}) \approx 1$ (resp. $\approx 0$) must be 
$\approx H(X)$ (resp. $\approx 1-H(X)$).

The polarization phenomenon is used to compress the $X_i$'s as follows: The 
encoder only outputs $U_i$ for indices $i \in B$ where $B = \{ i \mid H(U_i | 
U_0,\dots,U_{i-1}) > \zeta \}$ for some tiny $\zeta = \zeta(N) \to 0$. The 
decoder (decompression algorithm), called a \emph{successive cancellation 
decoder}, estimates the $U_i$'s in the order $i=0,1,\dots,N-1$. For indices $i 
\in B$ that are output at the encoder, this is trivial, and for other 
positions, the decoder computes the maximum likelihood estimate $\hat{u}_i$ 
of $U_i$, assuming $U_0,\dots,U_{i-1}$ equal $\hat{u}_0,\dots,\hat{u}_{i-1}$, 
respectively. Finally, the decoder estimates the inputs at the source by 
applying the inverse transformation $A^{-1}$ to 
$(\hat{u}_0,\dots,\hat{u}_{N-1})^t$.

The probability of incorrect decompression (over the randomness of the source) 
is upper bounded, via a union bound over indices outside $B$, by $\sum_{i 
\notin B} H(U_i | U_0,\dots,U_{i-1}) \le \zeta N$. Thus, if $\zeta \ll 1/N$, we 
have a reliable lossless compression scheme.
Thus, in order to achieve compression 
rate $H(X)+\epsilon$, we need a polarizing map $A$ for which $H(U_i | U_0,\dots,U_{i-1}) \ll 
1/N$ for at least $1-H(X)-\epsilon$ fraction of indices. This in particular means that $H(U_i | U_0,\dots,U_{i-1}) \approx 0$ or $\approx 1$ for all but a vanishing fraction of indices, which can be compactly expressed as $\E_i \bigl[ H(U_i | U_0,\dots,U_{i-1}) \bigl( 1- H(U_i | U_0,\dots,U_{i-1})\bigr) \bigr] \to 0$ as $n \to \infty$.

Such polarizing maps $A$ are in fact implied by a source coding solution, and 
exist in abundance (a random invertible map works w.h.p.). The big novelty in 
Ar{\i}kan's work is an explicit recursive construction of polarizing maps, 
which further, due to their recursive structure, enable efficient maximum 
likelihood estimation of $U_i$ given knowledge of $U_0,\dots,U_{i-1}$. 

Ar{\i}kan's construction is based on recursive application of the basic
$2 \times 2$ invertible map $K = \left( \begin{smallmatrix} 1 & 1 \\ 0
& 1 \end{smallmatrix} \right)$.\footnote{Subsequent work established
that polarization is a common phenomenon that holds for most choices
of the ``base'' matrix instead of just $K$~\cite{KSU10}.}  While Ar{\i}kan's
original analysis was for the binary case, the same construction based
on the matrix $K$ also works for any prime alphabet~\cite{STA09}. Let
$A_n$ denote the matrix of the polarizing map for $N=2^n$.  In the base case 
$n=1$, the
outputs are $U_0=X_0+X_1$ and $U_1=X_1$. If $X_0,X_1 \sim X$ are
i.i.d., the entropy $H(U_0) = H(X_0+X_1) > H(X)$ (unless
$H(X) \in \{0,1\}$), and by the chain rule $H(U_1 | U_0) < H(X)$,
thereby creating a small separation in the entropies.  Recursively, if
$(V_0,\dots,V_{2^{n-1}-1})$ and $(T_0,\dots,T_{2^{n-1}-1})$ are the
outputs of $A_{n-1}$ on the first half and second half of
$(X_0,\dots,X_{2^n-1})$, respectively, then the output $(U_0,\dots,U_{2^n-1})$
satisfies $U_{2i} = V_i + T_i$ and $U_{2i+1} = T_i$.  If $H_n$ denotes
the random variable equal to $H(U_i | U_0,\dots,U_{i-1})$ for a random
$i \in \{0,1,\dots,2^n-1\}$, then the sequence $\{H_n\}$ forms a
bounded martingale.  The polarization property, namely that $H_n \to \text{Bernoulli}(H(X))$ in the limit of
$n \to \infty$, can be shown by appealing to the martingale convergence
theorem. However, in order to obtain a finite upper bound on
$n(\epsilon)$, the value of $n$ needed for $\E [ H_n(1-H_n) ] \le \eps$ (so that most conditional
entropies to polarize to $< \epsilon$ or $> 1-\epsilon$), we need a
more quantitative analysis.  This was done for the binary case
in \cite{GX13}, by quantifying the increase in entropy $H(V_i + T_i |
V_0, \dots, V_{i-1}, T_0, \dots, T_{i-1}) - H(V_i |
V_0,\dots,V_{i-1})$ at each stage, and proving that the entropies
diverge apart at a sufficient pace for $H_n$ to polarize to $0/1$
exponentially fast in $n$, namely $\E[H_n (1-H_n)] \le \rho^n$ for some absolute constant $\rho < 1$. 

The main technical challenge in this work
is to show an analogous entropy increase lemma for all prime
alphabets. The primality assumption is necessary, because a random variable $X$ uniformly supported on a proper subgroup has $H(X) \notin \{0,1\}$ and yet $H(X+X) = H(X)$.  Formally, we prove:
\begin{restatable}{theorem}{mainthm}
\label{thm:main-intro}
 Let $(X_i,Y_i)$, $i=1,2$ be i.i.d. copies of a correlated random variable 
$(X,Y)$ with $X$ supported on $\zq$ for a prime $q$. Then for some $\alpha(q) > 
0$,
 \begin{eqnarray}
 H(X_1+X_2|Y_1,Y_2) - H(X|Y) \geq \alpha(q) \cdot H(X|Y) (1-H(X|Y)). 
\label{eq:mainthm}
\end{eqnarray}
\end{restatable}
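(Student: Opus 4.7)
My plan is to reduce the conditional statement to an unconditional ``entropy sumset'' inequality on $\zq$ and then prove that inequality by combining a compactness argument in the bulk with a Taylor expansion at each of the two boundaries $H(p)\to 0$ and $H(p)\to 1$. Letting $h=H(X|Y)$ and writing $p_y$ for the conditional distribution of $X$ given $Y=y$, the chain rule combined with the independence of $(X_1,Y_1)$ and $(X_2,Y_2)$ gives
\[
H(X_1+X_2\mid Y_1,Y_2) - H(X|Y) \;=\; \E_{y_1,y_2}\!\left[H(p_{y_1}*p_{y_2}) - \tfrac{1}{2}\bigl(H(p_{y_1})+H(p_{y_2})\bigr)\right],
\]
where $*$ denotes convolution on $\zq$. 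It thus suffices to lower-bound the ``entropy gain'' $\Phi(p,r) := H(p*r) - (H(p)+H(r))/2$ pointwise in distributions $p,r$ on $\zq$ and then average.

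The core unconditional claim I would aim for is
\[
\Phi(p,r) \;\geq\; \alpha_0(q)\cdot \tfrac{1}{2}\bigl[H(p)(1-H(p)) + H(r)(1-H(r))\bigr],
\]
valid for all $p,r$ on $\zq$ with $q$ prime. I would establish this in three regimes. In the \emph{bulk} ($H(p),H(r)\in[\delta,1-\delta]$), the simplex of such pairs is compact and $\Phi$ is continuous; moreover $\Phi(p,p)=0$ forces $X_1+X_2 \perp X_2$ for $X_1,X_2\sim p$ i.i.d., which in turn forces $p$ to be uniform on a coset of a subgroup of $\zq$. Primality of $q$ rules out all nontrivial subgroups, so the only zeros have $H(p)\in\{0,1\}$, and compactness gives a uniform lower bound $c(q,\delta)>0$. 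In the \emph{boundary near $0$}, writing $p=(1-\eps)\delta_0+\eps\nu$ for a distribution $\nu$ on $\zq\setminus\{0\}$, the mass at $0$ in $p*p$ is $(1-\eps)^2+O(\eps^2)$, so $H(p*p)=2H(p)(1+o(1))$ as $\eps\to 0$ and hence $\Phi(p,p)=\Omega(H(p))$. In the \emph{boundary near $1$}, writing $p_i=\tfrac{1}{q}+\eta_i$ with $\sum_i\eta_i=0$, the Fourier identity $(p*p)^\wedge=\hat p^2$ combined with the second-order Taylor expansion of $-x\log x$ around $x=1/q$ yields $1-H(p*p)\leq C(q)(1-H(p))^2$, so $\Phi(p,p)=\Omega(1-H(p))$.

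To lift from unconditional to conditional, I would average the unconditional bound against $(y_1,y_2)$ and also invoke the free bound $\Phi(p,r)\geq |H(p)-H(r)|/2$ that follows from $H(p*r)\geq\max(H(p),H(r))$. The averaged unconditional bound contributes $\alpha_0(q)\bigl(h(1-h)-\var_y H(p_y)\bigr)$ via the identity $\E_y[H(p_y)(1-H(p_y))]=h(1-h)-\var_y H(p_y)$, while the averaged free bound contributes at least $\var_y H(p_y)$ via $|A-B|\geq (A-B)^2$ for $A,B\in[0,1]$. Taking the maximum of the two (whichever of $\var_y H(p_y)$ or $h(1-h)-\var_y H(p_y)$ dominates) yields $\alpha(q)\,h(1-h)$ with $\alpha(q)=\alpha_0(q)/(1+\alpha_0(q))$.

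The main obstacle I foresee is the boundary analysis near the uniform distribution: establishing the quadratic contraction $1-H(p*p)\leq C(q)(1-H(p))^2$ with a constant $C(q)$ that does not degrade as $H(p)\to 1$. This is exactly where primality of $q$ is essential, since for composite $q$ any distribution uniform on a proper subgroup has $|\hat p(\xi)|=1$ on the dual subgroup, obstructing any Fourier contraction. The argument must exploit that every nonzero element of $\zq$ generates the whole group when $q$ is prime, so that all non-trivial Fourier modes of a non-delta distribution strictly shrink under squaring at a rate controllable in $q$ alone.
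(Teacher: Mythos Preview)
Your averaging reduction and the variance-based lifting from unconditional to conditional are both correct, and the lifting is neater than the paper's route: the paper proves the conditional statement by a six-subcase analysis on the value of $h=H(X|Y)$ and the distribution of the conditional entropies $H(p_y)$, whereas your identity $\E_y[H(p_y)(1-H(p_y))]=h(1-h)-\var_y H(p_y)$ together with $\E[\Phi]\geq\var_y H(p_y)$ yields $\E[\Phi]\geq\frac{\alpha_0}{1+\alpha_0}\,h(1-h)$ in one stroke, once the pointwise unconditional bound is in hand.

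The genuine gap is in the unconditional bound itself. Your sketch only treats the diagonal $p=r$: the compactness zero-set analysis rules out $\Phi(p,p)=0$, and both boundary expansions are written for $p*p$. But you need $\Phi(p,r)\geq\alpha_0\cdot\tfrac12\bigl[H(p)(1-H(p))+H(r)(1-H(r))\bigr]$ for \emph{all} pairs, and the off-diagonal boundary cases are not free. When both $H(p),H(r)$ are small but unequal, showing that $H(p*r)$ exceeds $(1+\alpha_0)\tfrac{H(p)+H(r)}{2}$ requires two-sided control of the $\Theta(\eps\log(1/\eps))$ constants; this is exactly the content of the paper's Lemma~\ref{lem:lowentropyconvolution}, and is why the paper works with the asymmetric benchmark $\tfrac{2H(A)+H(B)}{3}$ rather than the plain average (see the footnote after Theorem~\ref{thm:wtavg}). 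The mixed-regime pairs---one entropy in the bulk, the other near a boundary---are likewise not covered by your three cases and are not always absorbed by the free bound $\Phi\geq|H(p)-H(r)|/2$ (take $H(p)=\delta$ and $H(r)=\delta-\eta$ with $\eta$ tiny). Your unconditional claim is in fact true (it can be deduced from the paper's Theorem~\ref{thm:wtavg}), but your sketch does not prove it, and filling this in is comparable in effort to the paper's five-case analysis there. As a minor point, the compactness argument in the bulk yields an ineffective $\alpha(q)$; this suffices for the theorem as stated but loses the paper's explicit $\alpha(q)\geq q^{-O(1)}$.
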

The {\em linear} dependence of the entropy increase on the quantity
$H(X|Y) (1-H(X|Y))$ is crucial to establish a speed of polarization
adequate for polynomial convergence to capacity. A polynomial
dependence is implicit in \cite{sasoglu-q-ary-entropy}, but obtaining
a linear dependence requires lot more care. For the case $q=2$,
Theorem~\ref{thm:main-intro} is relatively easy to establish, as it is
known that the extremal case (with minimal increase) occurs when
$H(X|Y=y) = H(X|Y)$ for all $y$ in the support of $Y$~\cite[Lem
  2.2]{Sasoglu12}. This is based on the so-called ``Mrs. Gerber's
Lemma" for binary input channels~\cite{WZ73,Witsenhausen74}, the
analog of which is not known for the non-binary case~\cite{JA14}.
This allows us to reduce the binary version of \eqref{eq:mainthm} to
an inequality about simple Bernoulli random variables with no
conditioning, and the inequality then follows, as the sum of two
$p$-biased coins is $2p(1-p)$-biased and has higher entropy (unless $p
\in \{0,\tfrac{1}{2},1\}$).  In the $q$-ary case, no such simple
characterization of the extremal cases is known or seems
likely~\cite[Sec 4.1]{Sasoglu12}.  Nevertheless, we prove the
inequality in the $q$-ary setting by first proving two inequalities
for unconditioned random variables, and then handling the conditioning
explicitly based on several cases.

More specifically, the proof technique for Theorem~\ref{thm:main-intro} 
involves using an \emph{averaging} argument to write the left-hand side of 
(\ref{eq:mainthm}) as the expectation, over $y,z\sim Y$, of 
$\Delta_{y,z} = H(X_y+X_z) - \frac{H(X_y)+H(X_z)}{2}$, the entropy increase in 
the sum of random variables $X_y$ and $X_z$ with respect to their average entropy (this increase is called the {\em Ruzsa distance} between the random variables $X_y$ and $X_z$, see \cite{tao-entropy-sumset}).
We then rely on 
 inequalities for \emph{unconditioned} random variables to obtain a lower bound 
for this entropy increase. In general, once needs the entropy increase to be at 
least $c\cdot\min\{H(X_y)(1-H(X_y)), H(X_z)(1-H(X_z))\}$, but for some cases, 
we 
actually need such an entropy increase with respect to a larger \emph{weighted} 
average. Hence, we prove the stronger inequality given by 
Theorem~\ref{thm:wtavg}, which shows such an increase with respect to 
$\frac{2H(X_y) + H(X_z)}{3}$ for $H(X_y)\geq H(X_z)$\footnote{While the weaker 
inequality $H(A+B) \geq \frac{H(A)+H(B)}{2} + c\cdot\min\{H(A)(1-H(A)), 
H(B)(1-H(B))\}$ seems to be insufficient for our approach, it should be noted 
that 
the stronger 
inequality $H(A+B) \geq \max\{H(A),H(B)\} + c\cdot\min\{H(A)(1-H(A)), 
H(B)(1-H(B))\}$ 
is generally not true. Thus, Theorem~\ref{thm:wtavg} provides the right 
middle ground. A limitation of similar spirit for the entropy increase when summing two integer-valued random variables was pointed out in \cite{HAT14}.}. 
Moreover, for some cases of the proof, it suffices to bound 
$\Delta_{y,z}$ from below by $\frac{|H(X_y)-H(X_z)|}{2}$, which is provided by 
Lemma~\ref{lem:maxentropy}, another inequality for unconditional random 
variables.

We note a version of Theorem~\ref{thm:main-intro} (in fact with tight
bounds) for the case of unconditioned random variables $X$ taking
values in a torsion-free group was established by Tao in his work on
entropic analogs of fundamental sumset inequalities in additive
combinatorics~\cite{tao-entropy-sumset} (results of similar flavor for
integer-valued random variables were shown in
\cite{HAT14}). Theorem~\ref{thm:main-intro} is a result in the same
spirit for groups with torsion (and which further handles conditional
entropy). While we do not focus on optimizing the dependence of
$\alpha(q)$ on $q$, pinning down the optimal dependence, especially
for the case without any conditioning, seems like a natural question;
see Remark~\ref{rem:entropy-sumset} for further elaboration.

Given the entropy sumset inequality for conditional random variables, 
we are able to track the  decay of $\sqrt{H_n(1-H_n)}$ and use 
Theorem~\ref{thm:main-intro} to show that for $N = 
\mathrm{poly}(1/\epsilon)$, at most $H(X)+\epsilon$ of the conditional 
entropies $H(U_i | U_0,\dots,U_{i-1})$ exceed $\epsilon$. However, to construct 
a good source code, we need $H(X)+\epsilon$ fraction of the conditional 
entropies to be $\ll 1/N$. This is achieved by augmenting a ``fine'' 
polarization stage that is analyzed using an appropriate Bhattacharyya 
parameter. The details of this step are similar to the binary case, and are included in Appendix~\ref{app:fine-polar}.

The efficient construction of the linear source code (i.e., figuring out 
which entropies polarize very
close to $0$ so that those symbols can be dropped), and the efficient
implementation of the successive cancellation decoder are similar to
the binary case~\cite{GX13} and omitted here. Upon combining these ingredients, we
get the following result on lossless compression with complexity scaling 
polynomially in the gap to capacity:
\begin{restatable}{theorem}{lcompthm}
\label{thm:lossless}
Let $X$ be a $q$-ary source for $q$ prime with side information $Y$ (which means $(X,Y)$ is a correlated random variable). Let $0 
< \eps < \frac{1}{2}$.
Then
there exists $N \le (1/\epsilon)^{c(q)}$ for a constant $c(q)
< \infty$ depending only on $q$ and an explicit (constructible in 
$\mathrm{poly}(N)$ time)
matrix $L \in \{0,1\}^{(H(X|Y)+\eps) N \times N}$ such that $\vec{X} = 
(X_0,X_1,\dots,X_{N-1})^t$, formed by taking $N$ i.i.d. copies $(X_0,Y_0), 
(X_1,Y_1), \dots, (X_{N-1},Y_{N-1})$ of $(X,Y)$, can, with high
probability, be recovered from $L\cdot \vec{X}$ and $\vec{Y} = (Y_0, Y_1, 
\dots, Y_{N-1})^t$ in $\mathrm{poly}(N)$ time.
\end{restatable}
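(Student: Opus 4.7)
The plan is to apply Ar{\i}kan's recursive polar construction with base matrix $K=\left(\begin{smallmatrix}1&1\\0&1\end{smallmatrix}\right)$ and use Theorem~\ref{thm:main-intro} to control the speed of polarization in the presence of side information. First, set $\vec{U}=A_n\vec{X}$ where $A_n$ is built recursively from $K$ and $N=2^n$, and track the martingale $H_n=H(U_I\mid U_0,\ldots,U_{I-1},\vec{Y})$ for a uniformly random index $I\in\{0,\ldots,N-1\}$. The recursion $U_{2i}=V_i+T_i$, $U_{2i+1}=T_i$ means that a parent conditional entropy $h$ splits into children $h^-,h^+$ with $h^-+h^+=2h$ (by the chain rule); applying Theorem~\ref{thm:main-intro} to $X=V_I$ conditioned on $(V_{<I},\vec{Y}_L)$ with its i.i.d.\ copy $(T_I,T_{<I},\vec{Y}_R)$ (where $\vec{Y}_L,\vec{Y}_R$ are the two halves of $\vec{Y}$) gives $h^-\geq h+\alpha(q)h(1-h)$. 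The two children therefore lie at distance at least $\alpha(q)h(1-h)$ from the parent on opposite sides.

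Next, I would translate this additive spread into geometric contraction of the proxy $M_n=\mathbf{E}[\sqrt{H_n(1-H_n)}]$. The function $\phi(t)=\sqrt{t(1-t)}$ is concave with $\phi''(t)=-1/(4\phi(t)^3)$, so a Taylor estimate gives $\tfrac{1}{2}(\phi(h-\delta)+\phi(h+\delta))\le \phi(h)-\delta^2/(8\phi(h)^3)$ for the relevant range of $\delta$; substituting $\delta=\alpha(q)\phi(h)^2=\alpha(q)h(1-h)$ yields a multiplicative contraction by a factor $1-\alpha(q)^2/8$, uniformly in $h$. Averaging over the coin flip for which child we take and over $I$ gives $M_{n+1}\le(1-\rho)M_n$ with $\rho=\rho(q)>0$, hence $M_n\le(1-\rho)^n$. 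Markov's inequality then bounds the fraction of indices $i$ with $H(U_i\mid U_{<i},\vec{Y})\in(\epsilon,1-\epsilon)$ by $M_n/\sqrt{\epsilon(1-\epsilon)}$, which is below $\epsilon$ once $n\geq C(q)\log(1/\epsilon)$, i.e., once $N$ is a sufficient polynomial in $1/\epsilon$.

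At this point, entropy conservation $\mathbf{E}[H_n]=H(X|Y)$ gives a set $G$ of at least $(1-H(X|Y)-2\epsilon)N$ indices on which $H(U_i\mid U_{<i},\vec{Y})\le\epsilon$. To obtain reliable compression I need the complementary bad set $B$ to satisfy $\sum_{i\notin B}H(U_i\mid U_{<i},\vec{Y})\ll 1$, not just $\le\epsilon N$. This is achieved by a subsequent fine-polarization stage analyzed via Bhattacharyya parameters that square at one branch of each further recursion, driving already-small entropies below $1/N^{\Omega(1)}$; the argument is identical to the binary case and appears in Appendix~\ref{app:fine-polar}. The matrix $L$ is then the projection onto the coordinates of $B$ composed with $A_n$, and the decoder runs successive cancellation on the $U_i$'s given $L\vec{X}$ and $\vec{Y}$, with failure probability $\le\sum_{i\notin B}H(U_i\mid U_{<i},\vec{Y})$ by the standard maximum-likelihood union bound; both the identification of $B$ and the decoding run in polynomial time as in~\cite{GX13}. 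The main obstacle is the contraction estimate of the second paragraph: one must verify it uniformly across all $h\in[0,1]$, including the regime where $h$ is close to $0$ or $1$, and this is precisely where the \emph{linear} dependence on $h(1-h)$ in Theorem~\ref{thm:main-intro} is essential---a weaker polynomial dependence would give a contraction factor that degrades with $h$ and only subpolynomial convergence in $1/\epsilon$.
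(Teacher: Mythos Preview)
Your plan matches the paper's approach step for step: derive the per-step separation from Theorem~\ref{thm:main-intro}, convert it into geometric decay of $\E\sqrt{H_n(1-H_n)}$ (the paper's Lemma~\ref{lem:sqrtdecay}), deduce rough polarization, then invoke the Bhattacharyya-based fine polarization of Appendix~\ref{app:fine-polar}, and finally take $L$ to be the projection of $G_n$ onto the high-entropy rows with successive-cancellation decoding as in~\cite{GX13}.

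Two places where your sketch diverges from what the paper actually does deserve care. First, the contraction step: your second-order Taylor bound $\tfrac12(\phi(h-\delta)+\phi(h+\delta))\le \phi(h)-\delta^2/(8\phi(h)^3)$ is \emph{not} automatically an inequality, since $\phi'''$ changes sign at $h=1/2$ and the Lagrange remainder can go either way. The paper sidesteps this by working instead with $f(x)=\sqrt{h(1-h)+x}+\sqrt{h(1-h)-x}$ evaluated at $x=(1-2h)\delta$, for which $f'''\le 0$ on $[0,h(1-h)]$, and then splits into the cases $|1-2h|\ge \alpha/(8+\alpha)$ (where the Taylor bound on $f$ suffices) and $|1-2h|<\alpha/(8+\alpha)$ (where $\delta$ is bounded below by an absolute constant and a direct estimate works). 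Your closing paragraph correctly flags this uniformity issue, but the resolution is this case split, not a refinement of the Taylor expansion of $\phi$ itself. Second, the decoding error bound in the paper is not $\sum_{i\notin B} H(U_i\mid U_{<i},\vec{Y})$ but rather $\sum_{i\notin B}(q-1)\zmax(W_{n_0}^{(i)})$ via Lemma~\ref{lem:pez}; the fine-polarization stage drives $\zmax$ (not entropy) doubly-exponentially small using $\zmax(W^+)\le \zmax(W)^2$, after first passing from small entropy to small $\zmax$ via Lemma~\ref{lem:zhbound}.
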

Moreover, can obtain Theorem~\ref{thm:lossless} for \emph{arbitrary} 
(not necessarily prime) $q$ with the modification that the map $\Z_q^N \to 
\Z_q^{H(X|Y)+\epsilon)N}$ is no longer linear. This is obtained 
by factoring $q$ into primes and combining polar codes over prime alphabets for 
each prime in the factorization.

\smallskip\noindent {\bf Channel coding.}
Using known methods to construct channel codes from polar source codes for compressing sources with side information (see, for instance, \cite[Sec 2.4]{Sasoglu12} for a nice discussion of this aspect), we obtain the following result for channel coding, enabling reliable communication at rates within an additive gap $\epsilon$ to the {\em symmetric capacity} for discrete memoryless channels over any fixed alphabet, with overall complexity bounded polynomially in $1/\epsilon$. 
Recall that a discrete memoryless channel (DMC) $W$ has a finite input alphabet 
$\mathcal{X}$ and a finite output alphabet $\mathcal{Y}$ with transition 
probabilities $p(y | x)$ for receiving $y \in \mathcal{Y}$ when $x \in 
\mathcal{X}$ is transmitted on the channel. The entropy $H(W)$ of the channel is 
defined to be $H(X|Y)$ where $X$ is uniform in $\mathcal{X}$ and $Y$ is the 
output of $W$ on input $X$; the symmetric capacity of $W$, which is the largest 
rate at which one can reliably communicate on $W$ when the inputs have a uniform 
prior, equals $1-H(W)$. Moreover, it should be noted that if $W$ is a 
\emph{symmetric} DMC, then the symmetric capacity of $W$ is precisely the 
Shannon capacity of $W$.

\begin{restatable}{theorem}{chcodingthm}
 \label{thm:chcoding}
 Let $q\geq 2$, and let $W$ be any  discrete memoryless 
channel capacity with input  alphabet $\Z_q$. 
Then, there exists an $N\leq 
(1/\epsilon)^{c(q)}$ for a constant $c(q) < \infty$ depending only on $q$, 
as well as a deterministic $\mathrm{poly}(N)$ construction of a $q$-ary 
code of block length $N$ and rate at least $1-H(W)-\epsilon$, along with a 
deterministic $N\cdot\mathrm{poly}(\log N)$ time decoding algorithm for the 
code such that the block error probability for communication over $W$ is at 
most $2^{-N^{0.49}}$. Moreover, when $q$ is prime, the constructed codes are linear.
\end{restatable}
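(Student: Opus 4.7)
The plan is to reduce channel coding over $W$ to source coding with side information (already solved in Theorem~\ref{thm:lossless}) and then handle composite alphabets by a multilevel decomposition into prime alphabets. For prime $q$, take $X$ uniform on $\Z_q$ and $Y$ the channel output on input $X$, so the normalized symmetric capacity equals $1-H(X|Y) = 1-H(W)$. First I would form $\vec{U} = A_n \vec{X}$ using Ar{\i}kan's recursive matrix $A_n$ on $N = 2^n = \mathrm{poly}(1/\epsilon)$ coordinates and designate the information set $G = \{i : Z(U_i \mid U_0, \ldots, U_{i-1}, \vec{Y}) < 2^{-N^{0.49}}\}$, where $Z$ denotes the Bhattacharyya parameter of the corresponding $q$-ary test channel. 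Combining Theorem~\ref{thm:lossless} on the source $(X,Y)$ with the fine polarization step in Appendix~\ref{app:fine-polar}, we get $|G| \geq (1-H(W)-\epsilon) N$ and that $G$ is efficiently constructible.

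The channel encoder places a message $\vec{m} \in \Z_q^{|G|}$ in the $G$-coordinates of $\vec{u}$, fills publicly-known frozen values on the complement, and transmits $\vec{x} = A_n^{-1} \vec{u}$ across $N$ uses of $W$. The decoder, on receiving $\vec{y}$, runs successive cancellation: it computes $\hat u_i$ in order, using the frozen value for $i \notin G$ and the maximum-likelihood estimate given $(\vec{y}, \hat u_0, \ldots, \hat u_{i-1})$ for $i \in G$. A standard union bound gives block error probability at most $\sum_{i \in G} Z(U_i \mid U_{<i}, \vec{Y}) \le N \cdot 2^{-N^{0.49}} \le 2^{-N^{0.49}/2}$ for $N$ large, and the recursive structure of $A_n$ makes both encoder and successive-cancellation decoder run in $N \cdot \mathrm{poly}(\log N)$ time; the resulting code is $\Z_q$-linear. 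For composite $q = \prod_j p_j^{e_j}$ I would then use the CRT splitting $\Z_q \cong \prod_j \Z_{p_j^{e_j}}$ together with the quotient chain $\Z_{p^e} / p\Z_{p^e} \cong \Z_p$ to write uniform $X$ on $\Z_q$ as $(X^{(1)}, \ldots, X^{(L)})$ with each $X^{(\ell)}$ uniform on some $\Z_{p_{j(\ell)}}$ and $L = \sum_j e_j$. Using $H(X|Y) = \sum_\ell H(X^{(\ell)} \mid X^{(1)}, \ldots, X^{(\ell-1)}, Y)$, one runs a $\Z_{p_{j(\ell)}}$-polar code per layer (with side information $(X^{(1)}, \ldots, X^{(\ell-1)}, Y)$) achieving within $\epsilon/L$ of the corresponding conditional capacity; the rates telescope to within $\epsilon$ of $1 - H(W)$, though the overall map is no longer $\Z_q$-linear.

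I expect the composite case to be the main technical nuisance: verifying that the layered prime-alphabet polar codes interlock cleanly, that the exponent $c(q)$ depends gracefully on the factorization, and that the per-layer error probabilities sum up to preserve the $2^{-N^{0.49}}$ bound. The prime case, once phrased as source coding with side information, is a standard reduction; the only non-trivial ingredient beyond Theorem~\ref{thm:lossless} is the fine polarization of Bhattacharyya parameters (Appendix~\ref{app:fine-polar}), which is itself built on the coarse entropic polarization driven by Theorem~\ref{thm:main-intro}.
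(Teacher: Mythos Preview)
Your proposal is correct and follows essentially the same route as the paper: reduce the prime case to source coding with side information (Theorem~\ref{thm:lossless} plus the fine polarization of Appendix~\ref{app:fine-polar}) via the standard frozen-set/successive-cancellation construction, and handle composite $q$ by a multilevel decomposition into prime-alphabet layers with a chain-rule telescoping of entropies and a union bound on per-layer errors. The only cosmetic difference is that you phrase the composite decomposition explicitly via CRT plus the quotient chain $\Z_{p^e}/p\Z_{p^e}\cong\Z_p$, whereas the paper simply writes $q=\prod_i q_i$ as a product of (not necessarily distinct) primes and picks any bijection $\Z_q\to\prod_i\Z_{q_i}$; both yield the same layered scheme.
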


The structure of our paper will be as follows. Section~\ref{sec:construction} 
will introduce notation, describe the construction of polar codes, and define 
channels as a tool for analyzing entropy increases for a pair of correlated 
random variables. Section~\ref{sec:entropinc} will then prove our main theorem 
and describe the ``rough'' and ``fine'' polarization results that follow from 
the main theorem and allow us to achieve Theorem~\ref{thm:lossless}. 
The appendix contains basic lemmas about the entropy of random variables 
that will be used in the proof of the main theorem. 
Section~\ref{sec:general-alphabet} shows how polar codes for prime alphabets 
may be combined to obtain a capacity-achieving construction over all alphabets, 
thereby achieving a variant of Theorem~\ref{thm:lossless} over non-prime 
alphabets, as well its channel-coding counterpart, Theorem~\ref{thm:chcoding}.

\section{Construction of Polar Codes}\label{sec:construction}

\noindent {\bf Notation.} We begin by setting some of the notation to be used in the rest of the paper.  We will let $\lg$ denote the base 2 logarithm, 
while $\ln$ will denote the natural logarithm.

For our purposes, unless otherwise stated, $q$ will be a prime integer, and we 
identify $\zq = 
\{0,1,2,\dots,q-1\}$ with the additive group of integers modulo $q$.
 We will generally view $\zq$ as a 
$q$-ary alphabet.


Given a $q$-ary random variable $X$ taking values in $\zq$, we let $H(X)$ denote
the \emph{normalized entropy} of $X$:
\[
 H(X) = -\frac{1}{\lg q}\sum_{a\in\zq} \Pr[X=a] \lg(\Pr[X=a]).
\]
In a slight abuse of notation, we also define $H(p)$ for a probability 
distribution $p$. If $p$ is a probability distribution over $\zq$, then we shall
let $H(p) = H(X)$, where $X$ is a random 
variable sampled according to $p$. Also, for nonnegative constants
$c_0, c_1, \dots, c_{q-1}$ summing to 1, we will often write
$H(c_0,\dots,c_{q-1})$ as the entropy of the probability distribution on
$\zq$ that samples $i$ with probability $c_i$. Moreover, for a probability
distribution $p$ over $\zq$, we let $p^{(+j)}$
denote the \emph{$j^\text{th}$ cyclic shift} of $p$, namely, the
probability distribution $p^{(+j)}$ over $\zq$ that satisfies
\[
 p^{(+j)}(m) = p(m-j)
\]
for all $m\in\zq$, where $m-j$ is taken modulo $q$. Note that $H(p) = 
H(p^{(+j)})$
for all $j\in\zq$.

Also, let $\|\cdot\|_1$ denote the $\ell_1$ norm on $\R^q$. In particular, for
two probability distributions $p$ and $p'$, the quantity $\|p-p'\|_1$ will
correspond to twice the total variational distance between $p$ and $p'$.

Finally, given a row vector (tuple) $\vec{v}$, we let $\vec{v}^t$ denote a 
column vector given by the transpose of $\vec{v}$.

\subsection{Encoding Map}
Let us formally define the polarization map that we will use to compress a 
source $X$. Given $n \geq 1$, we define an invertible linear transformation 
$G:\zq^{2^n} \to 
\zq^{2^n}$ by $G=G_n$, where $G_t: \zq^{2^t} \to \zq^{2^t}$, 
$0\leq t\leq n$ is a sequence of invertible linear transformations defined as 
follows: $G_0$ is the identity map on $\zq$, and for any $0\leq k < n$ and 
 $\vec{X} = (X_0, X_1, \dots, X_{2^{k+1}-1})^t$, we recursively define $G_{k+1} 
\vec{X}$ as
\[
 G_{k+1} \vec{X} = \pi_{k+1}(G_k(X_0, \dots, X_{2^k-1}) + G_k(X_{2^k}, 
 \dots, X_{2^{k+1}-1}), G_k(X_{2^k},\dots,X_{2^{k+1}-1})),
\]
where $\pi_{k+1}: \zq^{2^{k+1}}\to \zq^{2^{k+1}}$ is a permutation defined by
\[
 \pi_n(v)_j = \begin{cases} v_i \quad &j=2i\\ v_{i+2^k}\quad &j=2i+1 
\end{cases}.
\]

$G$ also has an explicit matrix form, namely, $G = B_n K^{\otimes n} $, where $K = 
\left(\begin{smallmatrix} 1&1 \\ 0&1 \end{smallmatrix}\right)$, $\otimes$ 
is the Kronecker product, 
and $B_n$ is the $2^n\times 2^n$ bit-reversal permutation matrix for $n$-bit 
strings (see~\cite{arikan-source}).

In our set-up, we have a $q$-ary source $X$, and we let $\vec{X} = (X_0, X_1, 
\dots, X_{2^n - 1})^t$ be a collection of $N=2^n$ i.i.d. samples from $X$. 
Moreover, we encode $\vec{X}$ as $\vec{U} = (U_0, U_1, \dots, U_{2^n-1})^t$, 
given by $\vec{U} = G\cdot\vec{X}$. Note that $G$ only has $0,1$ entries, so each $U_i$ is the sum (modulo $q$) of some subset of the $X_i$'s.

\subsection{Channels}
For purposes of our analysis, we define a \emph{channel} $W = (A; B)$ to be a pair of correlated random 
variables $A,B$; moreover, we define the \emph{channel entropy} of 
$W$ to be $H(W) = H(A|B)$, i.e., the entropy of $A$ conditioned on 
$B$.\footnote{It should be noted $W$ can also be interpreted as a communication 
channel that takes in an input $A$ and outputs $B$ according to some 
conditional probability distribution. This is quite natural in the noisy 
channel coding setting in which one wishes to use a polar code for encoding data
in order to achieve the channel capacity of a symmetric discrete memoryless 
channel. However, since we focus on the problem of source coding (data 
compression) rather than noisy channel coding in this paper, we will simply 
view $W$ as a pair of correlated random variables.}

Given a channel $W$, we can define two channel transformations $-$ and $+$ as 
follows. Suppose we take two i.i.d. copies $(A_0; B_0)$ and $(A_1; B_1)$ of 
$W$. Then, $W^-$ and $W^+$ are defined by
\begin{eqnarray*}
 W^{-} &=& (A_0+A_1; B_0, B_1)\\
 W^{+} &=& (A_1; A_0 + A_1, B_0, B_1).
\end{eqnarray*}
By the chain rule for entropy, we see that
\begin{eqnarray}
 H(W^-) + H(W^+) = 2H(W). \label{eq:splitsum}
\end{eqnarray}
In other words, splitting two copies of $W$ into $W^-$ and $W^+$ preserves the 
total channel entropy. These channels are easily seen to obey 
\[
 H(W^+) \leq H(W) \leq H(W^-).
\]
and the key to our analysis will be quantifying the 
separation in the entropies of the 
two split channels.

The aformentioned channel transformations will help us abstract each step of 
the recursive polarization that occurs in the definition of $G$. Let $W = (X; 
Y)$, where $X$ is a source taking values in $\zq$, and $Y$ can be viewed as 
side information. Then, $H(W) = H(X|Y)$. One special case occurs when $Y=0$, 
which corresponds to an absence of side information.

Note that if start with $W$, then after $n$ 
successive applications of either $W\mapsto W^-$ or $W\mapsto W^+$, we can 
obtain one of $N = 2^n$ possible channels in $\left\{W^s: 
s\in\{+,-\}^n\right\}$. (Here, if $s=s_0 s_1 \cdots s_{n-1}$, with each 
$s_i\in\{+,-\}$, then $W^s$ denotes $(\cdots ((W^{s_0})^{s_1}) {{\cdots )}^ 
{s_{n-2}}})^{s_{n-1}}$). By successive applications of (\ref{eq:splitsum}), we 
know that
\[
 \sum_{s\in\{+,-\}^n} W^s = 2^n H(W) = 2^n H(X|Y).
\]
Moreover, it can be verified (see~\cite{Sasoglu12}) that if $0\leq i < 2^n$ 
has binary representation $\overline{b_{n-1} b_{n-2}\cdots b_0}$ (with $b_0$ 
being the least significant bit of $i$), then
\[
 H(U_i|U_0, U_1, \dots, U_{i-1}, Y_0, Y_1,\dots, Y_{N-1}) = H(W^{s_{n-1} 
s_{n-2} \cdots s_0}),
\]
where
\[
 s_j = \begin{cases} -\quad &\text{if $b_j = 0$}\\ +\quad &\text{if $b_j = 1$} 
\end{cases}.
\]
As shorthand notation, we will define the channel
\[
 W_n^{(i)} =  W^{s_{n-1} s_{n-2} \cdots s_0},
\]
where $s_0, s_1, \dots, s_{n-1}$ are as above.

\cite{STA09} shows that all but a vanishing fraction of the $N$ channels 
$W^s$ will be have channel entropy close to 0 or 1:
\begin{theorem}
 For any $\delta > 0$, we have that
 \[
  \lim_{n\to\infty} \frac{\left| \{s\in\{+,-\}^n: H(W^s)\in (\delta, 
1-\delta)\}\right|}{2^n} = 0.
 \]
\end{theorem}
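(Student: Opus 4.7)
The plan is to turn the $2^n$ channels $\{W^s : s\in\{+,-\}^n\}$ into a bounded martingale and then combine martingale convergence with the entropy sumset inequality (Theorem~\ref{thm:main-intro}) to force polarization. Introduce i.i.d.\ uniform signs $S_1, S_2, \dots \in \{+,-\}$ and set $H_n := H(W^{S_1 S_2 \cdots S_n})$ with $H_0 := H(W)$; since each $s \in \{+,-\}^n$ is realized with probability $2^{-n}$, the random variable $H_n$ is uniformly distributed on $\{H(W^s) : s \in \{+,-\}^n\}$. Hence
\[
 \frac{|\{s : H(W^s)\in(\delta,1-\delta)\}|}{2^n} \;=\; \Pr[H_n \in (\delta,1-\delta)] \;\le\; \frac{\E[H_n(1-H_n)]}{\delta(1-\delta)}
\]
by Markov's inequality applied to $H_n(1-H_n)$, so it suffices to prove $\E[H_n(1-H_n)] \to 0$.

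By the conservation identity $H(W^-) + H(W^+) = 2H(W)$ from \eqref{eq:splitsum}, $\E[H_n \mid S_1,\dots,S_{n-1}] = H_{n-1}$, so $\{H_n\}$ is a martingale bounded in $[0,1]$. The martingale convergence theorem then yields $H_n \to H_\infty$ almost surely and in $L^1$, and in particular $\E[|H_n - H_{n-1}|] \to 0$. The same identity lets me compute the per-step increment exactly: whether $S_n = -$ or $S_n = +$, the step size is $|H_n - H_{n-1}| = \tfrac{1}{2}\bigl(H(W_{n-1}^-) - H(W_{n-1}^+)\bigr)$, where $W_{n-1} := W^{S_1\cdots S_{n-1}}$. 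Thus $\E\bigl[H(W_{n-1}^-) - H(W_{n-1}^+)\bigr] \to 0$.

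Finally, I invoke Theorem~\ref{thm:main-intro} as a black box. Writing $W_{n-1} = (A;B)$, the entropy sumset inequality gives $H(W_{n-1}^-) - H(W_{n-1}) = H(A_1 + A_2 \mid B_1,B_2) - H(A\mid B) \ge \alpha(q)\, H_{n-1}(1 - H_{n-1})$; combined with $H(W_{n-1}^-) - H(W_{n-1}^+) = 2\bigl(H(W_{n-1}^-) - H(W_{n-1})\bigr)$ this yields
\[
  H(W_{n-1}^-) - H(W_{n-1}^+) \;\ge\; 2\alpha(q)\, H_{n-1}(1-H_{n-1}).
\]
Taking expectations, $\alpha(q)\,\E[H_{n-1}(1-H_{n-1})] \to 0$, which (since $\alpha(q) > 0$) is precisely what we needed.

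The only genuinely nontrivial ingredient is Theorem~\ref{thm:main-intro} itself, used here as a black box; everything else is routine martingale bookkeeping, and the main obstacle is pushed entirely into Section~\ref{sec:entropinc} (where the $q$-ary entropy sumset inequality is proved). For the purely qualitative statement at hand one could in fact replace Theorem~\ref{thm:main-intro} by the much weaker fact that $H(W^-) > H(W)$ strictly whenever $H(W) \in (0,1)$ (an easy consequence of strict concavity of entropy), combined with a continuity/compactness argument applied to $H_n - H_{n-1} \to 0$ a.s.; however, deploying Theorem~\ref{thm:main-intro} directly is cleaner and bakes in a quantitative rate that will later be needed for the polynomial convergence result.
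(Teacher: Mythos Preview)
Your proof is correct. Note, however, that the paper does not itself prove this theorem: it is quoted from \cite{STA09} as background and stated without proof, so there is no in-paper argument to compare against. Your approach---casting $\{H_n\}$ as a bounded martingale, using martingale convergence to force $\E|H_n - H_{n-1}| \to 0$, and then invoking an entropy-increase inequality to deduce $\E[H_n(1-H_n)] \to 0$---is exactly the standard route (and essentially the argument in \cite{STA09}). The one stylistic wrinkle, which you already acknowledge, is that you lean on Theorem~\ref{thm:main-intro}, the paper's main new contribution, to establish a result that predates it; for this purely qualitative statement the softer fact that $H(W^-) > H(W)$ strictly whenever $H(W)\in(0,1)$ already suffices, and the full quantitative strength of Theorem~\ref{thm:main-intro} is only genuinely exploited later, in Lemma~\ref{lem:sqrtdecay} and Theorem~\ref{thm:roughpolarize}.
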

Hence, one can then argue that as $n$ grows, the fraction of channels 
with channel entropy close to 1 approaches $H(X)$. In other words, for any 
$\delta > 0$, if we let
\begin{eqnarray}
\overline{F}_{n,\delta} = \{i:
H(U_i|U_0,U_1,\dots,U_{i-1},Y_0,Y_1,\dots,Y_{N-1}) > 
1-\delta\}, \label{eq:frozenset}
\end{eqnarray}
then
\[
 \frac{|\overline{F}_{n,\delta}|}{2^n} \to H(X|Y),
\]
as $n\to\infty$. Thus, it can be shown that for any fixed $\epsilon > 0$, there 
exists suitably large $n$ such that $\{U_i\}_{i\in \overline{F}_{n,\delta}}$
gives a 
source coding of $\vec{X} = (X_0, X_1, \dots, X_{N-1})$ (with side 
information $\vec{Y} = (Y_0, Y_1, \dots, Y_{N-1})$ with rate $\leq 
H(X|Y)+\epsilon$.

Our goal is to show that $N=2^n$ can be taken to be just polynomial in 
$1/\epsilon$ in order to obtain a rate $\leq H(X|Y) + \epsilon$.

\subsection{Bhattacharyya Parameter}
\label{sec:bhatt}
In order to analyze a channel $W = (X; Y)$, where $X$ takes values in $\zq$, 
we will define the \emph{$q$-ary source Bhattacharyya parameter}
$\zmax(W)$ of the channel $W$ as
\[
 \zmax(W) = \max_{d\neq 0} Z_d(W),
\]
where
\[
 Z_d(W) = \sum_{x\in \zq} \sum_{y\in\supp(Y)} \sqrt{p(x,y)p(x+d,y)}.
\]
Here, $p(x,y)$ is the probability that $X=x$ and $Y=y$ under the joint
probability distribution $(X,Y)$.

Now, the \emph{maximum likelihood decoder} attempts to decode $x$ given $y$ by
choosing the most likely symbol $\hat{x}$:
\[
 \hat{x} = \argmax_{x'\in \zq} \Pr[X=x'|Y=y].
\]
Let $P_e(W)$ be the probability of an error under maximum likelihood decoding,
i.e., the probability that $\hat{x}\neq x$ (or the defining $\argmax$ for
$\hat{x}$ is not unique) for random $(x,y)\sim (X,Y)$. It is known (see
Proposition 4.7 in~\cite{Sasoglu12}) that $\zmax(W)$ provides an upper bound on
$P_e(W)$:
\begin{lemma}\label{lem:pez}
 If $W$ is a channel with $q$-ary input, then the error probability of the 
maximum-likelihood decoder for a single channel use satisfies
\[
 P_e(W) \leq (q-1) \zmax(W).
\]
\end{lemma}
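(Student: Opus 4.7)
The plan is to turn the ML-error probability into a sum over error events and then compare each term to the corresponding contribution in the definition of $Z_d(W)$, exploiting the standard Bhattacharyya trick that replaces a probability by its geometric mean with the decoded symbol's probability.

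First, I would expand
\[
P_e(W) = \sum_{y \in \supp(Y)} \sum_{x \neq \hat{x}(y)} p(x,y),
\]
where $\hat{x}(y) = \argmax_{x' \in \zq} p(x',y)$ (with an arbitrary but fixed rule for breaking ties, treating ties as errors as in the lemma statement). For every $y$ and every $x \neq \hat{x}(y)$, the ML condition gives $p(x,y) \le p(\hat{x}(y),y)$, hence
\[
p(x,y) \;=\; \sqrt{p(x,y)\cdot p(x,y)} \;\le\; \sqrt{p(x,y)\,p(\hat{x}(y),y)}.
\]

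Next, I would re-index the inner sum by writing $x = \hat{x}(y) + d$ with $d \in \zq \setminus \{0\}$, so that
\[
P_e(W) \;\le\; \sum_{d \neq 0}\, \sum_{y} \sqrt{p(\hat{x}(y),y)\, p(\hat{x}(y)+d,y)}.
\]
For each fixed $d \neq 0$, the inner sum is bounded by the full double sum over $x$,
\[
\sum_{y} \sqrt{p(\hat{x}(y),y)\, p(\hat{x}(y)+d,y)} \;\le\; \sum_{x \in \zq}\sum_{y} \sqrt{p(x,y)\, p(x+d,y)} \;=\; Z_d(W),
\]
because the left-hand side just picks out the single term $x = \hat{x}(y)$ from each $y$-slice. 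Finally, using $Z_d(W) \le \zmax(W)$ for every $d \neq 0$ and summing over the $q-1$ nonzero values of $d$ yields $P_e(W) \le (q-1)\zmax(W)$.

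There is no real obstacle here: the only subtlety is the tie-breaking in the $\argmax$, which is handled by fixing any deterministic rule, since the inequality $p(x,y) \le p(\hat{x}(y),y)$ used in the geometric-mean step holds regardless of which maximizer is chosen. The whole argument is a short pointwise bound followed by re-indexing, with no need to invoke any of the polar-code machinery set up earlier in Section~\ref{sec:bhatt}.
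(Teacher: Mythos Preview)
Your argument is correct and is the standard proof of this bound; the paper itself does not prove the lemma but simply cites Proposition~4.7 of \c{S}a\c{s}o\u{g}lu's survey, where essentially the same geometric-mean trick followed by a sum over the $q-1$ nonzero shifts $d$ appears.

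One minor tightening: the expansion $P_e(W)=\sum_y\sum_{x\neq\hat{x}(y)}p(x,y)$ is the error probability of a decoder with a \emph{fixed} tie-breaking rule, which is slightly smaller than the paper's $P_e$ (the paper declares every tie to be an error, so the pair $(x,y)$ with $x=\hat{x}(y)$ at a tied $y$ also contributes). To match that definition exactly, simply observe that for every error pair $(x,y)$, including ties, there exists some $d\neq 0$ with $p(x+d,y)\ge p(x,y)$, hence $p(x,y)\le\sum_{d\neq 0}\sqrt{p(x,y)\,p(x+d,y)}$; summing over $(x,y)$ gives $P_e(W)\le\sum_{d\neq 0}Z_d(W)\le(q-1)\zmax(W)$ in one step, without ever picking a tie-breaking rule.
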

Next, the following proposition shows how the $\zmax$ operator behaves on the
polarized channels $W^-$ and $W^+$. For a proof, see Theorem 1 
in~\cite{Sasoglu12}.
\begin{lemma} \label{lem:zevolution}
 $\zmax(W^+) \leq \zmax(W)^2$, and $\zmax(W^-) \leq q^3 \zmax(W)$.
\end{lemma}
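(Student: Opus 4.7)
The plan is to prove the two inequalities separately by directly manipulating the definition $Z_d(W)=\sum_{x,y}\sqrt{p(x,y)p(x+d,y)}$; they have rather different flavors.

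For $\zmax(W^+)\le\zmax(W)^2$, I would write out the joint distribution of the input $X_1$ and output $(X_0+X_1,Y_0,Y_1)$ of $W^+$. By independence of the two i.i.d.\ copies, this is $p(u-x_1,y_0)\,p(x_1,y_1)$, so
\begin{equation*}
Z_d(W^+) \;=\; \sum_{x_1,u,y_0,y_1}\sqrt{p(u-x_1,y_0)p(u-x_1-d,y_0)}\cdot\sqrt{p(x_1,y_1)p(x_1+d,y_1)}.
\end{equation*}
Changing variables to $x_0=u-x_1$ decouples the two radicands and the sum factors into $Z_{-d}(W)\cdot Z_d(W)=Z_d(W)^2$, where I used the elementary symmetry $Z_{-d}=Z_d$ (seen via $x\mapsto x+d$). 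Maximizing over $d\neq 0$ yields $\zmax(W^+)=\zmax(W)^2$, which is even slightly stronger than what is claimed.

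For $\zmax(W^-)\le q^3\zmax(W)$, the output distribution is a mixture $p^-(u,y_0,y_1)=\sum_a p(a,y_0)p(u-a,y_1)$, so $p^-(u,\cdot)p^-(u+d,\cdot)$ is a double sum inside the square root. The key step is the subadditivity $\sqrt{\sum_i x_i}\le \sum_i\sqrt{x_i}$ for $x_i\ge 0$, giving
\begin{equation*}
\sqrt{p^-(u,y_0,y_1)p^-(u+d,y_0,y_1)} \;\le\; \sum_{a,a'}\sqrt{p(a,y_0)p(a',y_0)}\cdot\sqrt{p(u-a,y_1)p(u+d-a',y_1)}.
\end{equation*}
After the substitution $v=u-a$, summing the rightmost factor over $u$ and $y_1$ yields $Z_{a-a'+d}(W)$ in general, but collapses to $1$ precisely when $a'=a+d$. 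Splitting off this ``diagonal'' case contributes exactly $Z_d(W)\le\zmax(W)$, and the off-diagonal terms are bounded by $\zmax(W)\cdot\sum_{y_0}\bigl(\sum_a\sqrt{p(a,y_0)}\bigr)^2\le q\,\zmax(W)$ by Cauchy--Schwarz (followed by $\sum_{y_0}p_Y(y_0)=1$). Thus $Z_d(W^-)\le(q+1)\zmax(W)\le q^3\zmax(W)$ for $q\ge 2$, and maximizing in $d$ finishes the proof.

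The main obstacle is the mixture structure of $p^-$: naively bounding $\sqrt{p^-(u)p^-(u+d)}$ does not expose any factor of $\zmax(W)$. The subadditivity trick converts the square root into a weighted sum of elementary $\sqrt{p(\cdot)p(\cdot+c)}$ terms, and only the diagonal $c=0$ fails to produce a factor of $\zmax(W)$; isolating this diagonal is the one place where care is needed. The $q^3$ factor in the statement is comfortably loose---the argument in fact gives the sharper bound $(q+1)\zmax(W)$.
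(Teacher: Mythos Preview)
Your argument is correct. The paper does not give its own proof of this lemma but simply cites \c{S}a\c{s}o\u{g}lu's survey; your direct computation is essentially the standard one, and in fact yields the sharper statements $Z_d(W^+)=Z_d(W)^2$ and $Z_d(W^-)\le (q+1)\,\zmax(W)$, both well within what the lemma claims.
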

 Finally, the following lemma shows that $\zmax(W)$ is small whenever $H(W)$ is 
small.
\begin{lemma} \label{lem:zhbound}
 $\zmax(W)^2 \leq (q-1)^2 H(W)$.
\end{lemma}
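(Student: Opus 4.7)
The plan is to first reduce, via Cauchy-Schwarz, the conditional statement to an inequality about a single distribution on $\zq$, and then to prove that unconditional inequality by an edge-by-edge application of the classical scalar bound $4t(1-t)\le h_2(t)$, where $h_2(t):=-t\log_2 t-(1-t)\log_2(1-t)$ is the binary entropy function. A short chain-rule identity then closes the constants. In what follows, let $\tilde H(\vec p):=-\sum_i p_i\log_2 p_i$ denote the unnormalized (bit) Shannon entropy, so that $H(\vec p)=\tilde H(\vec p)/\log_2 q$.

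\emph{Step 1 (reduction to an unconditional claim).} Fix $d\neq 0$ and write $Z_d(W)=\sum_y p(y)\,\zeta_d(y)$ with $\zeta_d(y):=\sum_x\sqrt{p(x|y)\,p(x+d|y)}$. Cauchy-Schwarz gives $Z_d(W)^2 \le \sum_y p(y)\,\zeta_d(y)^2$. Hence it suffices to show, for every probability vector $\vec q=(q_0,\dots,q_{q-1})$ on $\zq$ and every $d\neq 0$,
\[
\zeta_d(\vec q)^2 := \Bigl(\sum_x \sqrt{q_x\,q_{x+d}}\Bigr)^2 \;\le\; (q-1)^2\, H(\vec q);
\]
summing this with weights $p(y)$ and taking the maximum over $d$ then yields the lemma.

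\emph{Step 2 (pairwise binary reduction).} Applying $4t(1-t)\le h_2(t)$ with $t=q_x/(q_x+q_{x+d})$ gives $2\sqrt{q_x\,q_{x+d}} \le (q_x+q_{x+d})\sqrt{h_2(q_x/(q_x+q_{x+d}))}$ for each edge $(x,x+d)$. Summing over $x$ and invoking Cauchy-Schwarz with weights $\sqrt{q_x+q_{x+d}}$ (whose squares total $2$) yields
\[
\zeta_d(\vec q)^2 \;\le\; \tfrac{1}{2}\sum_x (q_x+q_{x+d})\, h_2\!\bigl(\tfrac{q_x}{q_x+q_{x+d}}\bigr).
\]
Expanding each summand as $-q_x\log_2 q_x - q_{x+d}\log_2 q_{x+d} + r_x\log_2 r_x$ with $r_x:=q_x+q_{x+d}$, and observing $\sum_x r_x\log_2 r_x = 2-2\,\tilde H(\tilde r)$ where $\tilde r_x:=r_x/2$ is a probability distribution on $\zq$, we arrive at the intermediate bound
\[
\zeta_d(\vec q)^2 \;\le\; \tilde H(\vec q) + 1 - \tilde H(\tilde r).
\]

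\emph{Step 3 (chain-rule bound, and normalization).} The decisive observation is that $\tilde r$ is the law of $Z:=X-dB$, where $X\sim\vec q$ and $B\sim\mathrm{Bernoulli}(1/2)$ is independent of $X$. Since $Z$ is uniform on the two \emph{distinct} values $\{X,X-d\}$ given $X$, we have $\tilde H(Z\mid X)=1$. The chain rule $\tilde H(X)+\tilde H(Z\mid X)=\tilde H(Z)+\tilde H(X\mid Z)$ together with $\tilde H(X\mid Z)\le\tilde H(X)$ then yields $\tilde H(\tilde r)\ge 1$, so $\zeta_d(\vec q)^2\le\tilde H(\vec q)$. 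Since $\tilde H(\vec q)=(\log_2 q)\,H(\vec q)$ and $\log_2 q\le (q-1)^2$ for all $q\ge 2$, we conclude $\zeta_d(\vec q)^2\le(q-1)^2 H(\vec q)$, finishing the proof via Step 1. The main obstacle is this identification of $\tilde r$ with $\mathrm{law}(X-dB)$: without the resulting bound $\tilde H(\tilde r)\ge 1$, the Cauchy-Schwarz argument in Step 2 alone would leave a residual term $1-\tilde H(\tilde r)$ that prevents matching the stated constant.
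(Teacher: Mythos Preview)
Your proof is correct. The paper does not give its own argument here; it simply cites Proposition~4.8 of \c{S}a\c{s}o\u{g}lu's survey, so there is no in-paper proof to compare against. Your route---Jensen to reduce to a single distribution, the scalar bound $4t(1-t)\le h_2(t)$ edge-by-edge, Cauchy--Schwarz, and then the identification of $\tilde r$ as the law of $X-dB$---is a clean, self-contained derivation. In fact it yields the sharper inequality $\zmax(W)^2\le(\log_2 q)\,H(W)$, from which the stated bound follows via $\log_2 q\le(q-1)^2$.

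Two minor remarks. First, Step~3 can be shortened: once you observe that $\tilde r$ is the law of $Z=X-dB$ and that $\tilde H(Z\mid X)=1$, the inequality $\tilde H(\tilde r)=\tilde H(Z)\ge\tilde H(Z\mid X)=1$ follows directly from ``conditioning does not increase entropy,'' so the chain-rule detour through $\tilde H(X\mid Z)\le\tilde H(X)$ is unnecessary (though of course also valid). Second, your intermediate identity $\tilde H(\vec q)+1-\tilde H(\tilde r)=\tilde H(X\mid Z)$ gives an appealing interpretation of the bound: you have actually shown $\zeta_d(\vec q)^2\le\tilde H(X\mid Z)\le\min\{\tilde H(X),1\}$, which makes both the comparison to $\tilde H(\vec q)$ and the trivial bound $\zeta_d\le 1$ transparent.
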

\noindent The proof follows from Proposition 4.8 of~\cite{Sasoglu12}.

\section{Quantification of Polarization} \label{sec:entropinc}
Our goal is to show ``rough'' polarization of the channel. More precisely, we
wish to show that for some $m = O(\lg(1/\epsilon))$ and constant $K$, we have
\[
 \Pr_i[Z(W_m^{(i)}) \leq 2^{-Km}] \geq 1 - H(W) -\epsilon.
\]
The above polarization result will then be used to show the stronger notion of 
``fine'' polarization, which will establish the polynomial gap to capacity.

The main ingredient in showing polarization is the following theorem, which
quantifies the splitting that occurs with each polarizing step.

\begin{theorem}\label{thm:polarquant}
 For any channel $W = (A;B)$, where $A$ takes values in $\zq$,
we have
\[
 H(W^-) \geq H(W) + \alpha(q) \cdot H(W)(1-H(W)),
\]
where $\alpha(q)$ is a constant depending only on $q$.
\end{theorem}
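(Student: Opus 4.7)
The statement is exactly Theorem~\ref{thm:main-intro} restated in channel notation (with $A \leftrightarrow X$, $B \leftrightarrow Y$), so my plan is to reduce the conditional inequality to a collection of unconditional inequalities via averaging, and then handle the hard averaging step by a case analysis. Write $p_y = \Pr[B=y]$ and let $A_y$ denote $(A \mid B = y)$. Since $(A_0,B_0),(A_1,B_1)$ are i.i.d., conditioning on the values of $B_0,B_1$ and using the identity $\sum_y p_y H(A_y) = \tfrac{1}{2}\sum_{y,z} p_y p_z (H(A_y)+H(A_z))$ gives
\begin{equation*}
H(W^-) - H(W) \;=\; \sum_{y,z} p_y p_z\,\Delta_{y,z}, \qquad \Delta_{y,z} := H(A_y+A_z) - \tfrac{H(A_y)+H(A_z)}{2}.
\end{equation*}
So the goal reduces to lower bounding $\mathbb{E}_{y,z}[\Delta_{y,z}]$ by $\alpha(q)\,h(1-h)$, where $h := \sum_y p_y H(A_y) = H(W) \in [0,1]$.

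My plan is to establish two unconditional inequalities about independent $\zq$-valued random variables $A, B$ and then combine them. First, a weighted-average entropy increase: for prime $q$, if $H(A) \geq H(B)$ then
\begin{equation*}
H(A+B) \;\geq\; \tfrac{2H(A)+H(B)}{3} + c_q\cdot H(B)(1-H(B)),
\end{equation*}
which I would prove by reducing to the cases where one of $A,B$ is ``nearly uniform'' versus ``nearly deterministic'' and exploiting primality of $q$ (so that no nontrivial subgroup absorbs the support). Second, a ``max-entropy'' inequality
\begin{equation*}
\Delta = H(A+B) - \tfrac{H(A)+H(B)}{2} \;\geq\; \tfrac{|H(A)-H(B)|}{2},
\end{equation*}
which follows from $H(A+B) \geq \max\{H(A),H(B)\}$ (summing an independent copy cannot decrease entropy over a group). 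Rewriting the first inequality, since $\tfrac{2H(A)+H(B)}{3} = \tfrac{H(A)+H(B)}{2} + \tfrac{H(A)-H(B)}{6}$, we obtain $\Delta_{y,z} \ge \tfrac{|H(A_y)-H(A_z)|}{6} + c_q \min\{H(A_y),H(A_z)\}(1-\min\{H(A_y),H(A_z)\})$ after also invoking the max-entropy bound.

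With these in hand, I would split into two cases according to how concentrated the distribution of $H(A_y)$ is around $h$. Fix a small threshold $\tau = \tau(q,h)$. In the \emph{concentrated} case, where $\Pr_y[\,|H(A_y)-h|\le \tau\,] \ge 1/2$, a constant-probability mass of $(y,z)$ pairs satisfies $\min\{H(A_y),H(A_z)\} \in [h-\tau, h+\tau]$, so the weighted-average inequality yields $\mathbb{E}[\Delta_{y,z}] \gtrsim c_q\,h(1-h)$ for suitable $\tau$. In the \emph{dispersed} case, the random variable $H(A_y)$ has constant variance, hence $\mathbb{E}_{y,z}|H(A_y)-H(A_z)| \gtrsim \sqrt{h(1-h)}$ by a standard second-moment argument (using $H(A_y)\in[0,1]$ to bound the variance by $h(1-h)$), and the max-entropy inequality then yields $\mathbb{E}[\Delta_{y,z}] \gtrsim h(1-h)$; I may need to further sub-case on whether $h \le 1/2$ or $h \ge 1/2$ to extract the linear (rather than square-root) factor, using that the max-entropy bound is tight enough to give the required product form when combined with trivial bounds on the summand.

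The main obstacle will be proving the weighted-average inequality with the constant $\tfrac{2}{3}$ on $H(A)$ rather than $\tfrac12$. Unlike $q=2$, there is no Mrs.~Gerber's lemma to reduce to Bernoulli sums, and the inequality $H(A+B) \ge \max\{H(A),H(B)\} + c\cdot \min\{H(A)(1{-}H(A)),H(B)(1{-}H(B))\}$ is false in general over $\zq$. So I would separately handle the regime where $H(B)$ is tiny (here $A+B$ is a small perturbation of $A$ and one can expand the entropy difference by continuity arguments, using primality to prevent cancellation on a subgroup) and the regime where both entropies are bounded away from $0$ and $1$ (here one can directly lower bound $H(A+B)$ using convolution-smoothing estimates, again leveraging primality to rule out subgroup-supported extremizers). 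Assembling these into a clean constant $\alpha(q)>0$ is the quantitatively delicate part, but no individual case should require more than elementary entropy manipulations plus primality of $q$.
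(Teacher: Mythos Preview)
Your averaging reduction and your two unconditional ingredients---the max-entropy bound $H(A+B)\ge\max\{H(A),H(B)\}$ and the $\tfrac{2}{3}$-weighted inequality---are exactly what the paper uses (Lemma~\ref{lem:maxentropy} and Theorem~\ref{thm:wtavg}), and you correctly derive the key pointwise estimate $\Delta_{y,z}\ge\tfrac{1}{6}|H(A_y)-H(A_z)|+c_q\cdot(\text{small-entropy term})$. One small slip: the paper's Theorem~\ref{thm:wtavg} has $\min\{H(A)(1-H(A)),H(B)(1-H(B))\}$, not $\min\{H(A),H(B)\}(1-\min\{H(A),H(B)\})$; these differ when the larger entropy is closer to $1$ than the smaller is to $0$, but this is immaterial once one restricts to $y$ with $H(A_y)$ bounded away from $1$.

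The genuine gap is your case analysis. Your dispersed-case claim $\E_{y,z}|H(A_y)-H(A_z)|\gtrsim\sqrt{h(1-h)}$ is false: take $H(A_y)\in\{0,1\}$ with $\Pr[H(A_y)=1]=h$, so that $\E|H(A_y)-H(A_z)|=2h(1-h)\ll\sqrt{h(1-h)}$ for small $h$. And your concentrated case yields nothing when $h$ is tiny, since the concentrated mass can sit at $H(A_y)=0$. The paper instead cases first on $h$ (three regimes: small, middle, large) and then sub-cases. In the small-$h$ regime it sets $S=\{y:H(A_y)>4/5\}$ and asks whether $S$ carries at least $2h/3$ of the entropy. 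If yes, pairs meeting $S$ give $\E[\Delta]\gtrsim h$ via the max-entropy bound alone. If no, the paper restricts to $y,z\notin S$ (so $1-H(A_y)\ge 1/5$), uses your pointwise bound to get $\Delta_{y,z}\ge\tfrac{H(A_y)}{6}-\bigl(\tfrac{1}{6}-\tfrac{c}{5}\bigr)H(A_z)$ for \emph{every} such pair, and then sums: by symmetry in $(y,z)$ the $\tfrac{1}{6}$-terms cancel, leaving $\tfrac{c}{5}\Pr[\overline{S}]\sum_{y\notin S}p_yH(A_y)\gtrsim c\cdot h$. This summation trick---not a variance argument---is what extracts the linear factor in $h$, and it is precisely why the $\tfrac{2}{3}$-weighted form (rather than $\tfrac{1}{2}$) is needed. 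The large-$h$ regime is handled analogously; the middle regime is close to your concentrated case.
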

Theorem~\ref{thm:polarquant} follows as a direct consequence of
Theorem~\ref{thm:main-intro}, which we prove in
Section~\ref{subsec:condent}.  Section~\ref{subsec:uncondent} focuses
on proving Theorem~\ref{thm:wtavg} (tackling the unconditioned case),
which will be used in the proof of Theorem~\ref{thm:main-intro}.

\subsection{Unconditional Entropy Gain}\label{subsec:uncondent}
We first prove some results that provide a lower bound on the normalized 
entropy $H(A+B)$ of a sum of random variables $A,B$ in terms of the 
individual entropies.

\begin{lemma}\label{lem:maxentropy}
 Let $A$ and $B$ be random variables taking values over $\zq$. Then,
\[
 H(A+B) \geq \max\{H(A), H(B)\}.
\]
\end{lemma}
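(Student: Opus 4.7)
The plan is to exploit the fact that for each fixed value $b \in \zq$, the translation $a \mapsto a+b \pmod q$ is a bijection of $\zq$, so that the map $A \mapsto A+B$ is a bijection on $\zq$ once one conditions on $B$. This gives $H(A+B \mid B = b) = H(A \mid B = b)$ for every $b$, and averaging over $b$ yields $H(A+B \mid B) = H(A \mid B)$. Then, by the standard ``conditioning reduces entropy'' inequality (which carries over to the normalized entropy since the normalization is by the positive constant $\log_2 q$), we have $H(A+B) \ge H(A+B \mid B) = H(A \mid B)$.

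To close the gap from $H(A \mid B)$ to $H(A)$, I would invoke the natural independence hypothesis of the ambient setting: the lemma will be applied to pairs $(A,B)$ that are either independent (e.g., the case $Y=0$) or arise as one coordinate of each of two i.i.d.\ copies of a channel, in which case the relevant marginals are independent. Under independence, $H(A \mid B) = H(A)$, so $H(A+B) \ge H(A)$. A symmetric argument — rewriting $A+B$ and conditioning on $A$ instead — gives $H(A+B) \ge H(B)$. Combining the two bounds yields $H(A+B) \ge \max\{H(A), H(B)\}$, which is the claim.

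There is essentially no technical obstacle here; the only point requiring minor care is that the bijection property $a \mapsto a+b$ being a permutation of $\zq$ holds for \emph{every} modulus $q$, not just for primes, so the proof does not rely on the primality assumption that is crucial elsewhere in the paper. Conceptually, this lemma is the easy ``monotonicity'' half of the sumset story: the entropy of a sum is at least as large as the entropy of either summand. The substance of Theorem~\ref{thm:main-intro} (and of the stronger weighted-average variant Theorem~\ref{thm:wtavg} foreshadowed in the introduction) lies in strengthening this monotonicity by quantifying a strictly positive gain proportional to $H(A)(1-H(A))$ whenever the entropy is bounded away from the boundary values $0$ and $1$.
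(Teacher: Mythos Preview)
Your proof is correct and essentially the same as the paper's: the paper writes the distribution of $A+B$ as the convex combination $\sum_i \lambda_i\, p^{(+i)}$ of cyclic shifts of the distribution of $A$ (with $\lambda_i = \Pr[B=i]$) and invokes concavity of entropy via Lemma~\ref{lem:hconvdiff}, which is exactly your conditioning argument $H(A+B) \ge H(A+B \mid B) = H(A)$ in different notation. You are right that independence of $A$ and $B$ is being used (and is needed); the paper's proof assumes it implicitly when writing the law of $A+B$ as a convolution.
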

\begin{proof}
 Without loss of generality, assume $H(A)\geq H(B)$. Let $p$ be the underlying
probability distribution for $A$. Let $\lambda_i= \Pr[B=i]$. Then, the
underlying probability distribution of $A+B$ is $\lambda_0 p^{(+0)} + \lambda_1 
p^{(+1)}
+ \cdots + \lambda_{q-1} p^{(+(q-1))}$. The desired result then follows directly
from Lemma~\ref{lem:hconvdiff}.
\end{proof}

The next theorem provides a different lower bound for $H(A+B)$.

\begin{theorem}\label{thm:wtavg}
 Let $A$ and $B$ be random variables taking values over $\zq$ such that 
$H(A)\geq H(B)$. Then,
\[
 H(A+B) \geq \frac{2H(A) + H(B)}{3} + c\cdot\min\{H(A)(1-H(A)), H(B)(1-H(B))\}
\]
for $c = \frac{\gamma_0^3 \lg q}{48 q^5 (q-1)^3 \lg(6/\gamma_0)\lg^2 e}$, where 
$\gamma_0 = \frac{1}{500 (q-1)^4 \lg q}$.
\end{theorem}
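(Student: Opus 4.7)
Set $\Delta := H(A) - H(B) \geq 0$ and $\mu := \min\{H(A)(1-H(A)), H(B)(1-H(B))\}$. Since $\frac{2H(A)+H(B)}{3} = H(A) - \Delta/3$, the conclusion of the theorem is equivalent to
\[
H(A+B) \geq H(A) - \frac{\Delta}{3} + c\mu.
\]
Lemma~\ref{lem:maxentropy} gives $H(A+B) \geq H(A)$ unconditionally, so the inequality holds automatically in the ``easy'' regime $\Delta \geq 3c\mu$. My plan is to dispense with that regime immediately and reduce to the complementary ``close-entropies'' regime $\Delta < 3c\mu$, in which one must exhibit a genuine entropy-sumset gain $H(A+B) \geq H(A) + (c\mu - \Delta/3)$ coming from the convolution itself rather than from an entropy gap.

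In the close-entropies regime I would case-split on where $H(B)$ lies relative to the thresholds $\gamma_0$ and $1-\gamma_0$ (using $H(A) \geq H(B)$). (i) If $H(B) < \gamma_0$ then both distributions are close to point masses, $\mu < \gamma_0$ is small, and a direct perturbation computation of the convolution of two near-delta distributions gives the required tiny gain. (ii) If $H(B) > 1 - \gamma_0$ then both $p_A$ and $p_B$ lie close to the uniform distribution $u$ on $\zq$ in total variation; writing $p_A = u + \eta_A$, $p_B = u + \eta_B$, a second-order Taylor expansion of $H$ around $u$ shows that convolution makes the deviation from uniform second-order small and produces the required gain. (iii) The substantive case is the interior $\gamma_0 \leq H(B) \leq H(A) \leq 1 - \gamma_0$, in which $\mu \geq \gamma_0(1-\gamma_0)$ is bounded below and one needs a constant-sized gain $H(A+B) \geq H(A) + \Omega_q(1)$.

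For the interior case, I would write the distribution of $A+B$ as the mixture $\sum_{j\in\zq} p_B(j)\, p_A^{(+j)}$ of cyclic shifts of $p_A$ with weights $\lambda_j := p_B(j)$; concavity of $H$ gives $H(A+B) \geq \sum_j \lambda_j H(p_A^{(+j)}) = H(A)$, and the gain must come from the strict-concavity defect of Jensen's inequality. Since $q$ is prime and $H(A) \leq 1-\gamma_0$ forces $p_A$ to be bounded away from uniform, the shifts $p_A^{(+j)}$ cannot all coincide on the support of $\lambda$; since $H(B) \geq \gamma_0$, at least two weights $\lambda_j$ are $\Omega_q(1)$. The key technical step is to turn this into a quantitative lower bound on the concavity defect of the form $\Omega(\gamma_0^{O(1)})$, via a Bregman/Pinsker-style remainder in Jensen's inequality applied to the function $p \mapsto -p\lg p$. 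The main obstacle will be making this concavity-defect bound \emph{linear in $\mu$} rather than polynomial: a polynomial dependence would still prove asymptotic polarization but would cost polynomial factors downstream in the gap-to-capacity analysis, defeating the quantitative improvement that is the point of the paper. Carefully tracking constants through all three sub-cases, together with the primality argument in case (iii), is what fixes the explicit $c$ and $\gamma_0 = 1/(500(q-1)^4\lg q)$ stated in the theorem.
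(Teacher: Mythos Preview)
Your approach is essentially the paper's: your easy regime $\Delta \ge 3c\mu$ (via Lemma~\ref{lem:maxentropy}) subsumes the paper's Cases~4--5, and your hard-regime subcases (i), (ii), (iii) are precisely the paper's Cases~1, 3, 2, handled respectively by near-point-mass perturbation (Lemma~\ref{lem:lowentropyconvolution}), near-uniform Taylor expansion (Lemmas~\ref{lem:hdev} and~\ref{lem:uniformconv}), and the strong-concavity/Pinsker defect (Lemmas~\ref{lem:hconvdiff} and~\ref{lem:cyclicdist}). Two small corrections: in (iii) you only get $H(A)\le 1-\gamma_0+3c\mu<1-\gamma_0/2$, not $H(A)\le 1-\gamma_0$, so work with the halved threshold as the paper does; and the ``linear in $\mu$'' concern is vacuous in (iii) since $\mu\ge\gamma_0(1-\gamma_0)$ there---the genuine linearity issue lives in the boundary cases (i) and (ii), which is why the paper needs the dedicated Lemmas~\ref{lem:lowentropyconvolution} and~\ref{lem:uniformconv}.
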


\noindent \textbf{Overview of proof.}
The proof of the Theorem~\ref{thm:wtavg} splits into various cases depending 
on where $H(A)$ and $H(B)$ lie. Note that some of these cases overlap. The 
overall idea is as follows. If $H(A)$ and $H(B)$ are both bounded away from 0 
and 1 (Case 2), then the desired inequality follows from the concavity of the 
entropy function, using Lemmas \ref{lem:hconvdiff} and \ref{lem:cyclicdist} 
(note that this uses primality of $q$). Another setting in which the inequality 
can be readily proven is when $H(A)-H(B)$ is bounded away from 0 (which we deal 
with in Cases 4 and 5).

Thus, the remaining cases occur when $H(A)$ and $H(B)$ 
are either both small (Case 1) or both large (Case 3). In the former case, one 
can show that $A$ must have most of its weight on a particular symbol, and 
similarly for $B$ (note that this is why we must choose $\gamma_0 \ll 
\frac{1}{\log q}$; otherwise, $A$ could be, for instance, supported uniformly 
on a set of 
size 2). Then, one can use the fact that a $q$-ary random variable having 
weight 
$1-\epsilon$ has entropy $\Theta(\epsilon \log(1/\epsilon))$ 
(Lemmas~\ref{lem:hwtlowbd} and \ref{lem:hwtupperbd}) in order to prove the 
desired inequality (using Lemma~\ref{lem:lowentropyconvolution}).

For the latter case, we simply show that each of the $q$ symbols of $A$ must 
have weight close to $1/q$, and similarly for $B$. Then, we use the fact that 
such a random variable whose maximum deviation from $1/q$ is $\delta$ has 
entropy $1-\Theta(\delta^2)$ (Lemma~\ref{lem:hdev}) in order to prove the 
desired result (using Lemma~\ref{lem:uniformconv}).
\begin{proof}
 Let $\gamma_0$ be as defined in the theorem statement. Note that we must have 
at least 
one of the following cases:
\begin{enumerate}
 \item $0\leq H(A), H(B) \leq \gamma_0$.
 \item $\frac{\gamma_0}{2} \leq H(A), H(B) \leq 1-\frac{\gamma_0}{2}$.
 \item $1-\gamma_0 \leq H(A), H(B) \leq 1$.
 \item $H(A) > \gamma_0$ and $H(B) < \frac{\gamma_0}{2}$.
 \item $H(A) > 1-\frac{\gamma_0}{2}$ and $H(B) < 1-\gamma_0$.
\end{enumerate}
We treat each case separately.
\ \\

\noindent \underline{Case 1}. Let $\max_{0\leq j < q} \Pr[A=j] = 
1-\epsilon$, where $\epsilon \leq \frac{q-1}{q}$. Note that if $\epsilon \geq 
\frac{1}{e}$, then Fact~\ref{fact:incrfunc} implies that
\begin{eqnarray*}
 H(A) &\geq& -\frac{(1-\epsilon)\lg(1-\epsilon)}{\lg q}\\
 &\geq& \frac{1}{\lg q} 
\cdot \min\left\{-\frac{1}{q}\lg\left(\frac{1}{q}\right), 
-\left(1-\frac{1}{e}\right)\lg\left(1-\frac{1}{e}\right) \right\}\\
&>& \gamma_0,
\end{eqnarray*}
which is a contradiction. Thus, $\epsilon < \frac{1}{e}$.

Now, simply note that if $\epsilon
> \gamma_0\lg q$, then Lemma~\ref{lem:hwtlowbd} and Fact~\ref{fact:incrfunc} 
would
imply that
\[
 H(A) \geq \frac{\epsilon\lg(1/\epsilon)}{\lg q} > \gamma_0,
\]
a contradiction. Hence, we must have $\epsilon \leq \gamma_0\lg q$. Similarly, 
we can write $\max_{0\leq j < q} \Pr[B=j] = 1-\epsilon'$ for some positive
$\epsilon' \leq \gamma_0\lg q$. Then, Lemma~\ref{lem:lowentropyconvolution} 
implies
that
\[
 H(A+B) \geq \frac{2H(A) + H(B)}{3} + \frac{1}{51} H(B)(1-H(B)),
\]
as desired.

\noindent \underline{Case 2}. Let $p$ be the underlying probability
distribution for $A$, and let $\lambda_i = \Pr[B=i]$. Then, the underlying
probability distribution of $A+B$ is $\lambda_0 p^{(+0)} + \lambda_1 p^{(+1)} +
\cdots + \lambda_{q-1} p^{(+(q-1))}$. Let $(i_0, i_1, \dots, i_{q-1})$ be a
permutation of $(0,1,\dots,q-1)$ such that $\lambda_{i_0}\geq\lambda_{i_1}\geq
\cdots \geq\lambda_{i_{q-1}}$.

Since $\lambda_0 + \lambda_1 + \cdots +
\lambda_{q-1} = 1$ and $\max_{0\leq j\leq q-1} \lambda_j = \lambda_{i_0}$, we
have
\begin{eqnarray}
\lambda_{i_0} \geq \frac{1}{q}. \label{lem:lamb0}
\end{eqnarray}
Next, let $\epsilon_0 = \frac{\gamma_0}{6\lg(6/\gamma_0)}$. we claim that
\begin{eqnarray}
\lambda_{i_1} > \frac{\epsilon_0}{q-1}. \label{lem:lamb1}
\end{eqnarray}
Suppose not, for
the sake of contradiction. Then, $\lambda_{i_1}, \lambda_{i_2}, \dots,
\lambda_{i_{q-1}} \leq \frac{\epsilon_0}{q-1}$, which implies that
$\lambda_{i_0} = 1-\sum_{j=1}^{q-1} \lambda_{i_j} \geq 1-\epsilon_0$. Since
$\epsilon_0 \leq
\min\left\{\frac{1}{e},\frac{1}{500},\frac{1}{(q-1)^4}\right\}$,
Lemma~\ref{lem:hwtupperbd} and Fact~\ref{fact:incrfunc} imply that
\begin{eqnarray*}
 H(B) \leq \frac{17\epsilon_0 \lg(1/\epsilon_0)}{12\lg q},
\end{eqnarray*}
which is less than $\frac{\gamma_0}{2}$, resulting in a contradiction.
Thus, (\ref{lem:lamb1}) is true.

Therefore, by Lemma~\ref{lem:hconvdiff} and Lemma~\ref{lem:cyclicdist},
\begin{eqnarray*}
 H(A+B) &=& H(\lambda_0 p^{(+0)} + \lambda_1 p^{(+1)} + \cdots + \lambda_{q-1}
p^{(+(q-1))})\\
&\geq& H(A) + 
\frac{1}{2\lg 
q}\cdot \frac{\lambda_{i_0}\lambda_{i_1}}{\lambda_{i_0}+\lambda_{i_1}}
\|p^{(+i_0)}-p^{(+i_1)}\|_1^2 \\
&\geq& H(A) + \frac{1}{2\lg 
q}\lambda_{i_0}\lambda_{i_1}\|p^{(+i_0)}-p^{(+i_1)}\|^2 \\
&\geq& H(A) + \frac{\lambda_{i_0}\lambda_{i_1}(1-H(p))^2 \lg q}{8q^4(q-1)^2 
\lg^2 e}\\
&=& H(A) + \frac{\lambda_{i_0} \lambda_{i_1} \gamma_0^2 \lg q}{32q^4 (q-1)^2
\lg^2 e}\\
&\geq& \frac{2H(A)+H(B)}{3} + \frac{\epsilon_0 \gamma_0^2 \lg q}{32q^5 (q-1)^3 
\lg^2 e}.
\end{eqnarray*}
Finally, note that $\min\{H(A)(1-H(A)), H(B)(1-H(B))\} \leq \frac{1}{4}$, which 
implies that
\[
\frac{\epsilon_0\gamma_0^2 \lg q}{32q^5(q-1)^3\lg^2 e} \geq 
\frac{\epsilon_0\gamma_0^2 \lg q}{8q^5(q-1)^3 \lg^2 e}
\min\{H(A)(1-H(A)), 
H(B)(1-H(B))\}.
\]
Therefore,
\begin{eqnarray*}
 H(A+B) \geq \frac{2H(A)+H(B)}{3} + c \cdot \min\{H(A)(1-H(A)), H(B)(1-H(B))\},
\end{eqnarray*}
where $c = \frac{\gamma_0^3 \lg q}{48q^5 (q-1)^3 \lg(6/\gamma_0) \lg^2 e}$.

\noindent \underline{Case 3}. Let $\Pr[A=i] = \frac{1}{q} + \delta_i$ for 
$0\leq i\leq q-1$. If $\delta = \max_{0\leq i < q}|\delta_i|$, then by 
Lemma~\ref{lem:hdev}, we have
\begin{eqnarray*}
 1-\gamma_0 \leq H(A) \leq 1 - \frac{q^2(q\ln q -(q-1))}{(q-1)^3\ln q}\delta^2,
\end{eqnarray*}
which implies that
\[
 \delta \leq \sqrt{\frac{\gamma_0 (q-1)^3 \ln q}{q^2(q\ln q - (q-1))}} < 
\frac{1}{2q^2}.
\]
Similarly, if we let $\Pr[B=i] = \frac{1}{q} + \delta'_i$ for all $i$, and 
$\delta' = \max_{0\leq i < q} |\delta'_i|$, then
\[
 \delta' \leq \sqrt{\frac{\gamma_0 (q-1)^3 \ln q}{q^2(q\ln q - (q-1))}} < 
\frac{1}{2q^2}.
\]
Thus, by Lemma~\ref{lem:uniformconv}, we see that
\begin{eqnarray*}
 H(A+B) &\geq& H(A) + \frac{\ln q}{16q^2}\cdot H(A)(1-H(A))\\
 &\geq& \frac{2H(A)+H(B)}{3} + \frac{\ln q}{16q^2}\cdot\min\{H(A)(1-H(A)), 
H(B)(1-H(B))\},
\end{eqnarray*}
as desired.

\noindent \underline{Case 4}. Note that by Lemma~\ref{lem:maxentropy},
\begin{eqnarray*}
H(A+B) - \frac{2H(A)+H(B)}{3} &\geq& H(A) - \frac{2H(A)+H(B)}{3}\\
&=& \frac{H(A)-H(B)}{3}\\
&\geq& \frac{\gamma_0}{6}\\
&\geq& \frac{1}{3} H(B)(1-H(B)).
\end{eqnarray*}

\noindent \underline{Case 5}. As in Case 4, we have that
\begin{eqnarray*}
 H(A+B)-\frac{2H(A)+H(B)}{3} \geq \frac{\gamma_0}{6}.
\end{eqnarray*}
However, this time, the above quantity is bounded from below by
$\frac{1}{3}H(A)(1-H(A))$, which completes this case.
\end{proof}

\subsection{Conditional Entropy Gain}\label{subsec:condent}
Theorem~\ref{thm:polarquant} now follows as a simple consequence of our main 
theorem, which we restate and prove below.
\mainthm*

\begin{rmk}
\label{rem:entropy-sumset}
We have not attempted to optimize the dependence of $\alpha(q)$ on
$q$, and our proof gets $\alpha(q) \ge \frac{1}{q^{O(1)}}$. It is easy
to see that $\alpha(q) \le O(1/\log q)$ even without conditioning
(i.e., when $Y=0$). Understanding what is the true behavior of
$\alpha(q)$ seems like an interesting and basic question about sums of
random variables. For random variables $X$ taking values from a
torsion-free group $G$ and with sufficiently large $H_2(X)$, it is
known that $H_2(X_1+X_2) - H_2(X) \ge \frac{1}{2} - o(1)$ and that
this is best possible~\cite{tao-entropy-sumset}, where $H_2(\cdot)$
denotes the \emph{unnormalized} entropy (in bits). When $G$ is the
group of integers, a lower bound $H_2(X_1+X_2) - H_2(X) \ge g(H_2(X))$
for an increasing function $g(\cdot)$ was shown for all ${\mathbb
  Z}$-valued random variables $X$~\cite{HAT14}.  For groups $G$ with
torsion, we cannot hope for any entropy increase unless $G$ is finite
and isomorphic to $\Z_q$ for $q$ prime (as $G$ cannot have non-trivial
finite subgroups), and we cannot hope for an absolute entropy increase
even for $\Z_q$. So determining the asymptotics of $\alpha(q)$ as a
function of $q$ is the analog of the question studied in
\cite{tao-entropy-sumset} for finite groups.
\end{rmk}

\noindent \textbf{Overview of proof.}
Let $X_y$ denote $X|Y = y$. Then, we use an averaging argument: We reduce the 
desired inequality to providing a lower bound for $\Delta_{y,z} = H(X_y+X_z) - 
\frac{H(X_y)+H(X_z)}{2}$, whose expectation over $y,z\sim Y$ is the left-hand 
side of (\ref{eq:mainthm}). Then, one splits into three cases for small, 
large, and medium values of $H(X|Y)$.

Thus, we reduce the problem to aruguing about unconditional entropies. As a 
first step, one would expect to prove  $\Delta_{y,z} \geq 
\min\{H(X_y)(1-H(X_y)), H(X_z)(1-H(X_z))\}$ and use this in the proof of the 
conditional inequality. However, this inequality turns out to be too weak 
to deal with the case in which $H(X|Y)$ is tiny (case 2). This 
is the reason we require Theorem~\ref{thm:wtavg}, which provides an increase 
for $H(X_y+X_z)$ over a higher \emph{weighted} average instead of the simple 
average of $H(X_y)$ and $H(X_z)$. Additionally, we use the inequality 
$H(X_y+X_z) \geq \max\{H(X_y), H(X_z)\}$ to handle certain cases, and this is 
provided by Lemma~\ref{lem:maxentropy}.

In cases 1 and 3 (for $H(X|Y)$ in the middle and high regimes), the proof idea 
is that either (1) there is a significant mass of $(y,z) \sim Y\times Y$ for 
which $H(X_y)$ and $H(X_z)$ are separated, in which case one can use 
Lemma~\ref{lem:maxentropy} to bound $\E[\Delta_{y,z}]$ from below, or (2) there 
is a significant mass of $y\sim Y$ for which $H(X_y)$ lies away from 0 and 1, 
in which case $H(X_y)(1-H(X_y))$ can be bounded from below, enabling us to use 
Theorem~\ref{thm:wtavg}.

\begin{proof}
 Let $h = H(X|Y)$, and let $c$ be the constants defined in the 
statement of Theorem~\ref{thm:wtavg}. Moreover, let $\gamma_1 = 1/20$ and let
\[
 p = \Pr_y\left[H(X_y) \in \left(\frac{\gamma_1}{2},
1-\frac{\gamma_1}{2}\right)\right].
\]
Also, let $X_y$ denote $X|Y=y$, and let
\[
 \Delta_{y,z} = H(X_y + X_z) - \frac{H(X_y) + H(X_z)}{2}.
\]
Note that Lemma~\ref{lem:maxentropy} implies that $\Delta_{y,z} \geq 0$ for 
all $y,z$. Also, $\E_{y\sim Y, z\sim Y} [\Delta_{y,z}] = H(X_1 + X_2 | Y_1, 
Y_2) - H(X|Y)$. For simplicity, we will often omit the subscript and write
$\E[\Delta_{y,z}]$.

We split into three cases, depending on the value of $h$.

\noindent \underline{Case 1}: $h \in (\gamma_1, 1-\gamma_1)$.

\begin{itemize}
 \item \textbf{Subcase 1}: $p\geq \frac{\gamma_1}{4}$. Note that if $H(X_y) \in
\left(\frac{\gamma_1}{2}, 1-\frac{\gamma_1}{2}\right)$, then $H(X_y) (1-H(X_y))
\geq \frac{\gamma_1}{2}\left(1-\frac{\gamma_1}{2}\right)$. Hence, by
Theorem~\ref{thm:wtavg}, we have
\begin{eqnarray*}
 \E[\Delta_{y,z}] &\geq&
\sum_{\substack{y,z\\ \frac{\gamma_1}{2} < H(X_y), H(X_z) < 
1-\frac{\gamma_1}{2}}}
\Pr[Y=y]\cdot\Pr[Y=z] \\
& & \cdot\left(H(X_y+X_z) -
\frac{2\max\{H(X_y),H(X_z)\}+\min\{H(X_y),H(X_z)\}}{3}\right)\\
&\geq& \sum_{\substack{y,z\\ \frac{\gamma_1}{2} < H(X_y), H(X_z) < 
1-\frac{\gamma_1}{2}}}
\Pr[Y=y]\cdot\Pr[Y=z]\cdot c\\
& & \cdot \min\{H(X_y)(1-H(X_y)), H(X_z)(1-H(X_z))\}\\
&\geq& \frac{c\gamma_1}{2} \left(1-\frac{\gamma_1}{2}\right)
\sum_{\substack{y,z\\ \frac{\gamma_1}{2} < H(X_y), H(X_z) < 
1-\frac{\gamma_1}{2}}} \Pr[Y=y]\cdot\Pr[Y=z]\\
&=& cp^2\cdot \frac{\gamma_1}{2}\left(1-\frac{\gamma_1}{2}\right)\\
&\geq& \frac{c\gamma_1^3}{32}\left(1-\frac{\gamma_1}{2}\right) \\
&\geq& \frac{c\gamma_1^3}{8}\left(1-\frac{\gamma_1}{2}\right) \cdot
h(1-h).
\end{eqnarray*}

\item \textbf{Subcase 2}: $p < \frac{\gamma_1}{4}$. Note that
\begin{eqnarray*}
 \gamma_1 < h &\leq& \Pr_y \left[H(X_y) \leq \frac{\gamma_1}{2}\right] \cdot
\frac{\gamma_1}{2} + \Pr_y \left[H(X_y) > \frac{\gamma_1}{2}\right]\cdot 1\\
&\leq& \frac{\gamma_1}{2} + \Pr_y\left[H(X_y) > \frac{\gamma_1}{2}\right]
\end{eqnarray*}
which implies that
\[
 \Pr_y\left[H(X_y) > \frac{\gamma_1}{2}\right] \geq \frac{\gamma_1}{2}.
\]
Thus,
\begin{eqnarray}
 \Pr_y \left[H(X_y) \geq 1-\frac{\gamma_1}{2}\right] &=& \Pr_y\left[H(X_y)
> \frac{\gamma_1}{2}\right] - \Pr_y \left[\frac{\gamma_1}{2} < H(X_y) < 1 -
\frac{\gamma_1}{2}\right] \nonumber\\
&\geq& \frac{\gamma_1}{2} - p \nonumber\\
&>& \frac{\gamma_1}{4}. \label{eq:highconcbound}
\end{eqnarray}
Also,
\begin{eqnarray*}
 1-\gamma_1 > h \geq \left(1-\frac{\gamma_1}{2}\right) \cdot \Pr_y\left[ H(X_y)
\geq 1-\frac{\gamma_1}{2}\right],
\end{eqnarray*}
which implies that
\[
 \Pr_y\left[ H(X_y)\geq 1-\frac{\gamma_1}{2}\right] <
\frac{1-\gamma_1}{1-\frac{\gamma_1}{2}}.
\]
Hence,
\begin{eqnarray}
 \Pr_y \left[H(X_y) \leq \frac{\gamma_1}{2}\right] &=& 1 - \Pr_y
\left[\frac{\gamma_1}{2} < H(X_y) < 1 - \frac{\gamma_1}{2}\right] -
\Pr_y\left[H(X_y) \geq 1-\frac{\gamma_1}{2}\right] \nonumber\\
&>& 1 - p - \frac{1-\gamma_1}{1-\frac{\gamma_1}{2}} \nonumber\\
&>& 1 - \frac{\gamma_1}{4} - \frac{1-\gamma_1}{1-\frac{\gamma_1}{2}}
\nonumber\\
&\geq& \frac{\gamma_1}{4}. \label{eq:lowconcbound}
\end{eqnarray}
Using Lemma~\ref{lem:maxentropy} along with (\ref{eq:highconcbound}) and
(\ref{eq:lowconcbound}), we now conclude that
\begin{eqnarray*}
 \E [\Delta_{y,z}] &\geq& \sum_{\substack{y,z\\ H(X_y)\geq
1-\frac{\gamma_1}{2}\\ H(X_z) \leq \frac{\gamma_1}{2}}}
\Pr[Y=y]\cdot\Pr[Z=z]\cdot \left| \frac{H(X_y)-H(X_z)}{2}\right|\\
&\geq& \frac{\gamma_1}{4}\cdot\frac{\gamma_1}{4}\cdot\frac{1-\gamma_1}{2}\geq 
\frac{\gamma_1^2 (1-\gamma_1)}{8}\cdot h(1-h),
\end{eqnarray*}
as desired.
\end{itemize}
\noindent \underline{Case 2}: $h\leq\gamma_1$. Then, define $S = \left\{y:
H(X_y) > \frac{4}{5}\right\}$. We split into two subcases.
\begin{itemize}
 \item \textbf{Subcase 1}: $\sum_{y\in S} \Pr[Y=y]\cdot H(X_y) \geq
\frac{2h}{3}$. Then, $\Pr[Y\in S] \geq \frac{2h}{3}$, and so, by
Lemma~\ref{lem:maxentropy}, we have
\begin{eqnarray*}
 \E_{y,z}[H(X_y+X_z)] - h &\geq& \Pr_{\substack{y,z\\ \{y,z\}\cap 
S\neq\emptyset}} \Pr[Y=y]\cdot\Pr[Y=z]\cdot\max\{H(X_y), H(X_z)\} - h\\
&\geq&  \frac{4}{5}(2\cdot\Pr[Y\in S] - \Pr[Y\in S]^2) - h\\
&\geq& \frac{4}{5}\left(2 \cdot \frac{2h}{3} -
\left(\frac{2h}{3}\right)^2\right) - h\\
&=& \frac{1}{15}h\left(1-\frac{16}{3}h\right)\\
&\geq& \frac{1}{15}\left(1-\frac{16\gamma_1}{3}\right)h(1-h).
\end{eqnarray*}

\item \textbf{Subcase 2}: $\sum_{y\in S} \Pr[Y=y]\cdot H(X_y) < \frac{2h}{3}$.
Then,
\begin{eqnarray}
 \sum_{y\not\in S} \Pr[Y=y]\cdot H(X_y) > \frac{h}{3}. \label{eq:wtsum}
\end{eqnarray}
Moreover, observe that $h \geq \frac{4}{5}\cdot\Pr [Y\in S]$, implying that
\begin{eqnarray}
 \Pr [Y\not\in S] \geq 1-\frac{5h}{4}. \label{eq:compset}
\end{eqnarray}

Hence, using Theorem~\ref{thm:wtavg}, (\ref{eq:wtsum}), and 
(\ref{eq:compset}), we find that
\begin{eqnarray*}
 \E[\Delta_{y,z}] &\geq& \sum_{y,z\not\in S} \Pr[Y=y]\cdot\Pr[Y=z]\cdot
\left(\frac{2\max\{H(X_y), H(X_z)\} + \min\{H(X_y),H(X_z)\}}{3}
\right. \\
& & \left. \vphantom{\frac{2\max\{H(X_y), H(X_z)\} +
\min\{H(X_y),H(X_z)\}}{3}} + c\cdot\min\{H(X_y)(1-H(X_y)), H(X_z)(1-H(X_z))\} -
\frac{H(X_y)+H(X_z)}{2}\right)\\
&\geq& \sum_{y,z\not\in S} \Pr[Y=y]\cdot\Pr[Y=z]
\left(\left|\frac{H(X_y)-H(X_z)}{6}\right| +
\frac{c}{5}\cdot\min\{H(X_y),H(X_z)\}\right) \\
&\geq& \sum_{y,z\not\in S} \Pr[Y=y]\cdot\Pr[Y=z]\cdot \left(\frac{H(X_y)}{6} -
\left(\frac{1}{6}-\frac{c}{5}\right)H(X_z)\right)\\
&=& \frac{c}{5} \Pr[Y\not\in S] \cdot \sum_{y\not\in S} \Pr[Y=y]\cdot H(X_y)\\
&>& \frac{c}{5}\left(1-\frac{5h}{4}\right)\cdot\frac{h}{3} \geq c 
\left(\frac{1}{15}-\frac{\gamma_1}{12}\right) h(1-h) ,  \quad \text{as desired.}
\end{eqnarray*}
\end{itemize}
\noindent \underline{Case 3}: $h \geq 1-\gamma_1$. Write $\gamma = 1-h$, and let
\[
 S = \left\{y: H(X_y) > 1-\frac{\gamma}{2}\right\}.
\]
Moreover, let $\overline{S}$ be the complement of $S$. We split into two
subcases.

\begin{enumerate}
 \item \textbf{Subcase 1}: $\Pr_y[y\in S] < \frac{1}{10}$. Then, letting $r =
\Pr_y \left[H(X_y) \leq \frac{1}{10}\right]$, we see that
\begin{eqnarray*}
 h = 1-\gamma &=& \sum_{\substack{y\\ H(X_y)\leq \frac{1}{10}}} \Pr[Y=y] \cdot
H(X_y) 
+ \sum_{\substack{y\\ H(X_y) > \frac{1}{10}}}
\Pr[Y=y]\cdot H(X_y)\\
&\leq& \frac{1}{10}\cdot\Pr_{y}\left[H(X_y)\leq \frac{1}{10}\right] + 
1\cdot\Pr_{y}\left[H(X_y) >\frac{1}{10}\right]\\
&=& \frac{r}{10} + (1-r),
\end{eqnarray*}
which implies that $r\leq\frac{10}{9}\gamma \leq \frac{10}{9} \gamma_1$. Hence, 
letting $T = \left\{y:
\frac{1}{10}\leq H(X_y) \leq 1-\frac{\gamma}{2}\right\}$, we see that
\begin{equation}
 \Pr_y [y\in T]  \geq  1 - \frac{1}{10} - r \geq 
\frac{9}{10}-\frac{10}{9}\gamma_1 \geq \frac{1}{2} \  . \label{eq:yintee}
\end{equation}
Hence, by Theorem~\ref{thm:wtavg} and (\ref{eq:yintee}),
\begin{eqnarray*}
 \E [\Delta_{y,z}] &\geq& \sum_{y,z\in T} 
\Pr[Y=y]\cdot\Pr[Y=z]\cdot\Delta_{y,z}\\
&\geq& \sum_{y,z\in T} \Pr[Y=y]\cdot\Pr[Y=z]\cdot 
\left(c\cdot\min\{H(X_y)(1-H(X_y)), H(X_z)(1-H(X_z))\}\right)\\
&\geq& (\Pr[Y\in T])^2\left(c\cdot
\frac{\gamma}{2}\left(1-\frac{\gamma}{2}\right)\right) \geq 
\frac{c}{8}\gamma\left(1-\frac{\gamma}{2}\right) \geq \frac{c}{8} h(1-h) \ .
\end{eqnarray*}

\item \textbf{Subcase 2}: $\Pr_y[y\in S] \geq \frac{1}{10}$. Then, observe that
by Lemma~\ref{lem:maxentropy},
\begin{eqnarray*}
 \E[\Delta_{y,z}] &\geq& \sum_{\substack{y\in S\\z\in\overline{S}}}
\Pr[Y=y]\cdot\Pr[Y=z]\cdot\frac{H(X_y)-H(X_z)}{2}\\
&=& \frac{\Pr[Y\in\overline{S}] \cdot \sum_{y\in S}\Pr[Y=y]\cdot H(X_y) -
\Pr[Y\in
S]\cdot\sum_{y\in \overline{S}} \Pr[Y=y]\cdot H(X_y)}{2}\\
&=& \frac{\sum_{y\in S}\Pr[Y=y] H(X_y) - (1-\gamma)\Pr[Y\in S]}{2}\\
&\geq& \frac{\left(1-\frac{\gamma}{2}\right)\Pr[Y\in S] - (1-\gamma)\Pr[Y\in
S]}{2}\\
&\geq& \frac{\gamma}{4}\cdot\Pr[Y\in S] \geq \frac{\gamma}{40} \geq \frac{1}{40}h(1-h) \ . \qedhere
\end{eqnarray*}
\end{enumerate}
\end{proof}

\subsection{Rough Polarization}
Now that we have established Theorem~\ref{thm:polarquant}, we are ready to
show rough polarization of the channels $W_n^{(i)}$, $0\leq i < 2^n$, for
large enough $n$. The precise theorem showing rough polarization is as follows.
\begin{theorem} \label{thm:roughpolarize}
 There is a constant $\Lambda < 1$ such that the following holds. For any 
$\Lambda < \rho < 1$, there exists a constant $b_\rho$ such that for all 
channels $W$ with $q$-ary input, all $\epsilon > 0$, and all $n > 
b_\rho \lg(1/\epsilon)$, there exists a set
\[
 \mathcal{W}' \subseteq \{W_n^{(i)}: 0\leq i\leq 2^n - 1\}
\]
such that for all $M\in\mathcal{W}'$, we have $\zmax(M) \leq 2\rho^n$ and 
$\Pr_i[W_n^{(i)}\in\mathcal{W}'] \geq 1-H(W) - \epsilon$.
\end{theorem}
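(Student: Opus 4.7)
The plan is to track the martingale of channel entropies $H_n^{(i)} := H(W_n^{(i)})$ via the potential $T_n^{(i)} := \sqrt{H_n^{(i)}(1 - H_n^{(i)})}$, establish an almost-sure multiplicative contraction using Theorem~\ref{thm:polarquant}, iterate for $n = O(\lg(1/\epsilon))$ steps so that most $T_n^{(i)}$ are exponentially small, and then convert this entropy decay into a Bhattacharyya decay via Lemma~\ref{lem:zhbound}.

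The heart of the argument is a one-step contraction. Fix a channel $W$ with $H(W) = h$, and set $\delta := H(W^-) - h$; by~\eqref{eq:splitsum} we have $H(W^+) = h - \delta$, and by Theorem~\ref{thm:polarquant} $\delta \geq \alpha(q)\, h(1-h)$. A direct algebraic expansion (using $(1-2h)^2 = 1 - 4h(1-h)$ to simplify the cross term in the product $a(1-a)\cdot b(1-b)$) shows that, with $x := h(1-h)$ and $y := \delta^2$,
\[
\left(\frac{\sqrt{(h+\delta)(1-h-\delta)} + \sqrt{(h-\delta)(1-h+\delta)}}{2\sqrt{h(1-h)}}\right)^{\!2} \;=\; \frac{(x - y) + \sqrt{(x+y)^2 - y}}{2x}.
\]
This quantity is monotonically decreasing in $y$ over the relevant range, so the worst case is the minimum allowed $y = \alpha^2 x^2$; plugging this in and applying $\sqrt{1+u} \le 1 + u/2$ yields an upper bound $1 - \alpha^2/4 + O(\alpha^4)$, uniformly in $x \in (0, 1/4]$. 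Setting $\Lambda := \sqrt{1 - \alpha(q)^2/8} < 1$ and averaging over the $+/-$ choice at the next polarization step gives $\E_i[T_{n+1}^{(i)}] \leq \Lambda \cdot \E_i[T_n^{(i)}]$, hence $\E_i[T_n^{(i)}] \leq \Lambda^n/2$.

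Now fix any $\rho \in (\Lambda, 1)$ and pick an auxiliary $\rho_\ast \in (\Lambda, \rho)$. Markov's inequality gives $\Pr_i[T_n^{(i)} \geq \rho_\ast^n] \leq \tfrac{1}{2}(\Lambda/\rho_\ast)^n$, which is $\le \epsilon/2$ once $n \geq b_\rho \lg(1/\epsilon)$ for a constant $b_\rho = b_\rho(\rho, q)$. On the complement, $H_n^{(i)}(1 - H_n^{(i)}) \leq \rho_\ast^{2n}$, which forces $H_n^{(i)} \in [0, 2\rho_\ast^{2n}] \cup [1 - 2\rho_\ast^{2n}, 1]$. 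The martingale identity $\E_i[H_n^{(i)}] = H(W)$ bounds the upper-bucket fraction by $H(W)/(1 - 2\rho_\ast^{2n}) \leq H(W) + O(\rho_\ast^{2n})$, so the lower bucket $\mathcal{W}' := \{W_n^{(i)} : H_n^{(i)} \leq 2\rho_\ast^{2n}\}$ has density at least $1 - H(W) - \epsilon$ for $n$ sufficiently large. On $\mathcal{W}'$, Lemma~\ref{lem:zhbound} gives $\zmax(W_n^{(i)}) \leq (q-1)\sqrt{2\rho_\ast^{2n}} = (q-1)\sqrt{2}\,\rho_\ast^n$; since $\rho_\ast < \rho$, the factor $(q-1)\sqrt{2}\,(\rho_\ast/\rho)^n$ drops below $2$ once $n$ exceeds a $q$- and $\rho$-dependent constant, which we absorb into $b_\rho$, yielding $\zmax \leq 2\rho^n$ as required.

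The main obstacle is the uniform contraction step. The naive second-moment computation $\E[H_{n+1}(1-H_{n+1})] \leq \E[H_n(1-H_n)] - \alpha^2 \E[H_n^2(1-H_n)^2]$, combined with Jensen, only yields polynomial ($O(1/n)$) decay, which is far too weak for the $n = O(\lg(1/\epsilon))$ bound in the theorem. Passing to the square-root potential $T_n$ is essential, and the delicate point is the cancellation inside the expansion of $(1 - \alpha^2 x) + \sqrt{1 - \alpha^2(1-2x)}$ that makes the leading $\alpha^2$ coefficient independent of $x$; this is what delivers uniform geometric contraction from the $h$-dependent entropy-gap estimate of Theorem~\ref{thm:polarquant}.
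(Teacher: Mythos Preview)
Your proposal is correct and follows essentially the same route as the paper: geometric contraction of the square-root potential $\sqrt{H(1-H)}$, then Markov, then the martingale identity to bound the high-entropy bucket, then Lemma~\ref{lem:zhbound} to pass to $\zmax$. The one noteworthy variation is in how you establish the one-step contraction (the paper's Lemma~\ref{lem:sqrtdecay}): the paper proves it in the appendix via a third-order Taylor bound on $f(x)=\sqrt{h(1-h)+x}+\sqrt{h(1-h)-x}$ together with a case split on whether $|1-2h|$ is large, whereas you obtain it from the exact identity
\[
\left(\frac{\sqrt{(h+\delta)(1-h-\delta)}+\sqrt{(h-\delta)(1-h+\delta)}}{2\sqrt{h(1-h)}}\right)^{2}=\frac{(x-y)+\sqrt{(x+y)^2-y}}{2x},\qquad x=h(1-h),\ y=\delta^2,
\]
and monotonicity in $y$, which neatly sidesteps the case analysis and makes the cancellation of the $x$-dependent $\alpha^2$ terms transparent.
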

The proof of Theorem~\ref{thm:roughpolarize} follows from the following lemma:
\begin{lemma} \label{lem:sqrtdecay}
 Let $T(W) = H(W)(1-H(W))$ denote the \emph{symmetric entropy} of a channel 
$W$. Then, there exists a constant $\Lambda < 1$ (possibly dependent on 
$q$) such that
\begin{eqnarray}
 \frac{1}{2}\left(\sqrt{T\left(W_{n+1}^{(2j)}\right)} + 
\sqrt{T\left(W_{n+1}^{(2j+1)}\right)}\right) \leq 
\Lambda\sqrt{T\left(W_n^{(j)}\right)} \label{eq:lambda}
\end{eqnarray}
for any $0\leq j < 2^n$.
\end{lemma}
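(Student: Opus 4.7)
The plan is to reduce Lemma~\ref{lem:sqrtdecay} to a single per-channel inequality and then carry out a short closed-form optimization. First, observe that $W_{n+1}^{(2j)} = (W_n^{(j)})^-$ and $W_{n+1}^{(2j+1)} = (W_n^{(j)})^+$. Hence, writing $W = W_n^{(j)}$, the lemma reduces to producing a constant $\Lambda = \Lambda(q) < 1$ such that $\sqrt{T(W^-)} + \sqrt{T(W^+)} \leq 2\Lambda\sqrt{T(W)}$ for every channel $W$ with input alphabet $\zq$.

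Next, set $H = H(W)$ and $\delta = H(W^-) - H$. Conservation of entropy~(\ref{eq:splitsum}) gives $H(W^+) = H - \delta$, and Theorem~\ref{thm:polarquant} yields $\delta \geq \alpha(q)\,H(1-H)$. Direct expansion gives $T(W^{\pm}) = H(1-H) \pm \delta(1-2H) - \delta^2$. Squaring $\sqrt{T(W^-)} + \sqrt{T(W^+)}$ and introducing $x = H(1-H)$ together with $r = \delta^2/x$, one uses $(1-2H)^2 = 1-4x$ together with the identity $(1-r)^2 + 4r = (1+r)^2$ to obtain the clean formula
\[
\left(\frac{\sqrt{T(W^-)} + \sqrt{T(W^+)}}{2\sqrt{T(W)}}\right)^{\!\!2}
\;=\; \frac{(1-r) + \sqrt{(1+r)^2 - r/x}}{2}
\;=:\; F(r,x).
\]
The collapse of the cross-term $\sqrt{T(W^-)\,T(W^+)}$ into this tidy form is the heart of the argument.

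It then remains to bound $F$ over the valid region $\{(r,x) : 0 < x \leq 1/4,\ \alpha(q)^2 x \leq r \leq 1\}$. The upper bound $r \leq 1$ is automatic because $\delta \leq \min(H, 1-H)$ forces $r = \delta^2/x \leq \min(H,1-H)/\max(H,1-H) \leq 1$. A derivative check shows $\partial_r F \leq 0$ throughout this region (the only regime with $\partial_r F > 0$ would require both $x < 1/4$ and $r > 1/(2x) - 1 > 1$ simultaneously), so the maximum over $r$ occurs at the lower endpoint $r = \alpha(q)^2 x$. Substituting yields the one-variable function $G(x) = \tfrac{1}{2}\bigl(1 - \alpha^2 x + \sqrt{(1+\alpha^2 x)^2 - \alpha^2}\bigr)$ with $\alpha = \alpha(q)$, which another short derivative check shows to be strictly increasing on $(0, 1/4]$. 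Its maximum is therefore at $x = 1/4$, and the identity $(1+\alpha^2/4)^2 - \alpha^2 = (1-\alpha^2/4)^2$ collapses $G(1/4)$ to $1 - \alpha^2/4$. Setting $\Lambda = \sqrt{1 - \alpha(q)^2/4}$ then closes the argument.

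The step I expect to require the most care is the regime $H \to 0$ or $H \to 1$, where $T(W)$ is itself small and the gain $\delta$ from Theorem~\ref{thm:polarquant} is even smaller. A naive Jensen bound $\sqrt{T(W^-)} + \sqrt{T(W^+)} \leq 2\sqrt{T(W) - \delta^2}$ only yields a ratio of $\sqrt{1 - \alpha^2 x}$, which tends to $1$ as $x \to 0$ and is therefore insufficient. The refinement above keeps $\sqrt{T(W^-)\,T(W^+)} = \sqrt{(x-\delta^2)^2 - \delta^2(1-4x)}$ exactly, rather than bounding it above by $\tfrac{1}{2}(T(W^-) + T(W^+))$, and it is exactly this retention that produces the $-\alpha^2$ (independent of $x$) inside the square root in $F$, making $\Lambda$ a universal constant independent of where $H$ lies in $[0,1]$.
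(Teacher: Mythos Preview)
Your argument is correct and is a genuinely different (and cleaner) route than the paper's. Both proofs begin identically, writing $T(W^\pm) = h(1-h) \pm (1-2h)\delta - \delta^2$ with $h=H(W)$ and $\delta = H(W^-)-h \ge \alpha(q)\,h(1-h)$. From there the paper treats the sum $\sqrt{T(W^-)}+\sqrt{T(W^+)}$ as $f\bigl((1-2h)\delta\bigr)$ with $f(x)=\sqrt{h(1-h)+x}+\sqrt{h(1-h)-x}$ and splits into two cases according to the size of $|1-2h|$: when $|1-2h|$ is bounded away from $0$ it uses a second-order Taylor estimate of $f$, and when $h$ is near $1/2$ it uses the crude bound $(1-2h)\delta-\delta^2\le -\delta^2/2$. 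This yields $\Lambda=\max\bigl\{\,1-\tfrac12\bigl(\tfrac{\alpha^2}{16+2\alpha}\bigr)^{2},\ \sqrt{1-\tfrac{8\alpha^2}{(8+\alpha)^2}}\,\bigr\}$.

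You instead keep the cross term $\sqrt{T(W^-)T(W^+)}$ exactly, recognize via $(1-2h)^2=1-4x$ that it collapses to $\sqrt{(x+\delta^2)^2-\delta^2}$, and reduce the squared ratio to the closed form $F(r,x)=\tfrac12\bigl((1-r)+\sqrt{(1+r)^2-r/x}\bigr)$ with $r=\delta^2/x$. Two monotonicity checks then push the maximum to the single corner $(r,x)=(\alpha^2/4,\,1/4)$, where the final identity $(1+\alpha^2/4)^2-\alpha^2=(1-\alpha^2/4)^2$ gives $\Lambda=\sqrt{1-\alpha(q)^2/4}$. This avoids the case split entirely and produces a noticeably sharper constant ($1-\Lambda$ of order $\alpha^2$ rather than $\alpha^4$). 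The only small point worth stating explicitly is that $\alpha(q)\le 2$ (indeed $\delta\le\min(h,1-h)$ forces $\delta\le 2h(1-h)$), so that $1-\alpha^2/4\ge 0$ and the square-root identity is legitimate; this is implicit in your $r\le 1$ observation but is the step that justifies $\sqrt{(1-\alpha^2/4)^2}=1-\alpha^2/4$.
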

The proof of Lemma~\ref{lem:sqrtdecay} follows from arguments similar to those
in the proof
of Lemma 8 in~\cite{GX13}. For the sake of completeness, we present a complete
proof of Lemma~\ref{lem:sqrtdecay} in Appendix~\ref{sec:polarizeproofs}.

We now show how to prove Theorem~\ref{thm:roughpolarize} from 
Lemma~\ref{lem:sqrtdecay}. Again, the argument follows the one
shown in the proof of Proposition 5 in~\cite{GX13}, 
except that we work with $\zmax$ as opposed to $Z$.
\begin{proof}
 For any $\rho\in (0,1)$, let
 \begin{eqnarray*}
  A_\rho^{l} &=& \left\{i: H(W_n^{(i)}) \leq 
\frac{1-\sqrt{1-4\rho^n}}{2}\right\}\\
  A_\rho^{u} &=& \left\{i: H(W_n^{(i)}) \geq 
\frac{1+\sqrt{1-4\rho^n}}{2}\right\}\\
  A_\rho &=& A_\rho^{l} \cup A_\rho^{u}.
 \end{eqnarray*}
Moreover, note that repeated application of (\ref{eq:lambda}), we have
\begin{eqnarray*}
 \E_i \sqrt{T(W_n^{(i)})} \leq \Lambda^n \sqrt{T(W)} \leq \frac{\Lambda^n}{2}.
\end{eqnarray*}
Thus, by Markov's inequality,
\begin{eqnarray}
 \Pr_i[T(W_n^{(i)}) \geq \alpha] \leq \frac{\Lambda^n}{2\sqrt{\alpha}} 
\label{eq:markov}
\end{eqnarray}

Then, observe that
\begin{eqnarray}
 H(W) &=& \E_i \left[H(W_n^{(i)})\right] \nonumber\\
 &\geq& \Pr[A_\rho^l]\cdot \min_{i\in A_\rho^l} H(W_n^{(i)}) + 
\Pr[A_\rho^u]\cdot\min_{i\in A_\rho^u} H(W_n^{(i)}) + 
\Pr[\overline{A_\rho}]\cdot \min_{i\in\overline{A_\rho}} H(W_n^{(i)}) 
\nonumber\\
&\geq& \Pr[A_\rho^u]\cdot (1-2\rho^n). \label{eq:hwbd}
\end{eqnarray}
Therefore,
\begin{eqnarray}
 \Pr_i\left[H(W_n^{(i)}) \leq 2\rho^n\right] &\geq& \Pr[A_\rho^l] \nonumber\\
 &=& 1 - \Pr[A_\rho^u] - \Pr[\overline{A_\rho}] \nonumber\\
 &\geq& 1 - H(W) - \Pr[A_\rho^u]\cdot 2\rho^n - \Pr[\overline{A_\rho}] 
\label{eq:arho}\\
 &\geq& 1 - H(W) - 2\rho^n - \frac{1}{2}(\Lambda/\sqrt{\rho})^n,
\label{eq:markovapp}
\end{eqnarray}
where (\ref{eq:arho}) follows from (\ref{eq:hwbd}), and (\ref{eq:markovapp}) 
follows from (\ref{eq:markov}). Thus, it is clear that if $\rho > \Lambda^2$, 
then there exists a constant $a_\rho$ such that for $n > 
a_\rho\lg(1/\epsilon)$, we have
\[
 \Pr_i\left[H(W_n^{(i)}) \leq 2\rho^n\right] \geq 1 - H(W) - \epsilon.
\]
To conclude, note that Lemma~\ref{lem:zhbound} implies
\begin{eqnarray*}
 \Pr_i\left[\zmax(W_n^{(i)})\leq 2\rho^n\right] &\geq& \Pr_i\left[H(W_n^{(i)}) 
\leq \frac{4\rho^{2n}}{(q-1)^2} \right]\\
&\geq& \Pr_i\left[H(W_n^{(i)}) \leq 
2\left(\frac{\rho^2}{(q-1)^2}\right)^n\right]\\
&\geq& 1-H(W)-\epsilon
\end{eqnarray*}
for $n>b_\rho \lg(1/\epsilon)$, where $b_\rho = a_{\rho^2/(q-1)^2}$.
\end{proof}

\subsection{Fine Polarization}
Now, we describe the statement of ``fine polarization.'' This is quantified by 
the following theorem.
\begin{restatable}{theorem}{finepolarizethm}
\label{thm:finepolarize}
 For any $0 < \delta < \frac{1}{2}$, there exists a constant $c_\delta$ that 
satisfies the following statement: For any $q$-ary input memoryless 
channel $W$ and $0 < \epsilon < \frac{1}{2}$, if $n_0 > 
c_\delta\lg(1/\epsilon)$, then
\[
 \Pr_i \left[\zmax(W_{n_0}^{(i)}) \leq 2^{-2^{\delta n_0}}\right] \geq 
1-H(W) - \epsilon.
\]
\end{restatable}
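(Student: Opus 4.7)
The plan is to follow the two-stage template of Arikan and its $q$-ary refinement in \cite{GX13}, combining Theorem~\ref{thm:roughpolarize} with a careful analysis of the $\zmax$ evolution from Lemma~\ref{lem:zevolution}. In the first (``rough'') stage, I apply Theorem~\ref{thm:roughpolarize} at some fixed $\rho > \Lambda$ for $n_1 = C_1\lg(1/\epsilon)$ steps, where the constant $C_1$ is chosen large enough that (a) the rough-polarization failure probability is at most $\epsilon/2$, and (b) $L_0 := \lg\bigl(1/\zmax(W_{n_1}^{(i)})\bigr) \ge n_1\lg(1/\rho) - 1$ dominates the quantity $6\lg q \cdot S$ defined below. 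In the second (``fine'') stage, I run $n_2 := n_0 - n_1$ additional polarization steps; conditioning on $i$'s top $n_1$ bits falling in the ``good'' subset produced by Theorem~\ref{thm:roughpolarize}, the trailing $n_2$ bits $B_1,\dots,B_{n_2}\in\{+,-\}$ of $i$ are i.i.d.\ uniform. Writing $L_j$ for $\lg(1/\zmax)$ after $j$ further polarization steps, Lemma~\ref{lem:zevolution} yields $L_{j+1}\ge 2L_j$ when $B_{j+1}=+$ and $L_{j+1}\ge L_j - 3\lg q$ when $B_{j+1}=-$.

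To analyze this mixed doubling/subtracting walk I pass to the normalized process $R_j := L_j/2^{N_j}$ with $N_j := |\{k\le j : B_k = +\}|$; a one-step check shows $R_j$ is non-decreasing on $+$ steps and satisfies $R_{j+1}\ge R_j - 3\lg q/2^{N_j}$ on $-$ steps, so that
\[
 L_{n_2} \;\geq\; 2^{N_{n_2}}\bigl(L_0 - 3\lg q \cdot S\bigr), \qquad S \;:=\; \sum_{\substack{t \le n_2 \\ B_t = -}} 2^{-N_{t-1}}.
\]
Since the $B_t$ are i.i.d.\ uniform, $\E[2^{-N_{t-1}}] = (3/4)^{t-1}$, giving $\E[S] \le 2$. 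A tail argument (split at $t_0 = \Theta(\lg(1/\epsilon))$ and apply Chernoff to ensure $N_{t-1}\ge (t-1)/3$ for all $t\ge t_0$) yields $S\le C_2\lg(1/\epsilon)$ except with probability at most $\epsilon/4$, for a universal constant $C_2$. Because $C_1$ was chosen so that $L_0\ge 6C_2\lg q\cdot\lg(1/\epsilon)$, on this good event $L_{n_2}\ge 2^{N_{n_2}}\cdot L_0/2 \ge 2^{N_{n_2}}$.

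The proof then concludes with a Chernoff bound on $N_{n_2}$. For any $\delta'\in(\delta,1/2)$ one has $\Pr[N_{n_2} < \delta' n_2] \le e^{-2(1/2-\delta')^2 n_2}$; choosing $c_\delta > C_1/(1-2\delta)$ and $\delta' = \delta/(1 - C_1/c_\delta) \in(\delta,1/2)$ gives $\delta' n_2 \ge \delta n_0$, while the Chernoff failure probability is at most $\epsilon/4$ once $c_\delta$ is large enough (since $n_2=\Theta(\lg(1/\epsilon))$). Summing the three failure probabilities -- rough polarization ($\le\epsilon/2$), the $S$-tail ($\le\epsilon/4$), and Chernoff on $N_{n_2}$ ($\le\epsilon/4$) -- produces $L_{n_0}\ge 2^{\delta n_0}$, equivalently $\zmax(W_{n_0}^{(i)})\le 2^{-2^{\delta n_0}}$, with probability at least $1-H(W)-\epsilon$.

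The main obstacle, and the reason for the $R_j$ trick, is that a naive worst-case ordering of the $-$ steps (all first) only gives $L_{n_2}\ge 2^{N_{n_2}}(L_0 - 3k_-\lg q)$, which demands $L_0=\Omega(n_2\lg q)$ and hence forces $n_1/n_2 = \Omega(\lg q/\lg(1/\rho))$; this in turn caps $\delta$ strictly below $1/2$ no matter how one tunes the Chernoff step. Only by exploiting the random placement of the $-$'s through the martingale-like $R_j$ -- so that each $-$'s ``amplification penalty'' is on average $O(1)$ rather than $2^{k_+}$ -- does one reach arbitrary $\delta<1/2$. The remaining details are routine bookkeeping of the constants $C_1,C_2,c_\delta$, and the choice of $\rho$.
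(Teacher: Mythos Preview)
Your proof is correct and takes a genuinely different route from the paper's. The paper follows the Ar{\i}kan--Telatar block scheme: it splits $n_0 = m+n$ with both $m$ and $n$ proportional to $n_0$ (specifically $m \approx n_0/(1+\gamma)$ for a constant $\gamma$), runs rough polarization for $m$ steps, and then partitions the remaining $n$ steps into a \emph{constant} number $d = \Theta\bigl(\gamma\lg q/\lg(1/\rho)\bigr)$ of equal blocks. On each block it applies Hoeffding to guarantee a $\beta$-fraction of $+$'s, and within a block uses the crude worst-case ordering (all $-$'s first, then all $+$'s) to get the recursion $\lg\tilde Z_{(j+1)n/d} \le 2^{\beta n/d}\bigl((n/d)(3\lg q) + \lg\tilde Z_{jn/d}\bigr)$; the choice of $d$ is exactly what makes the additive $(n/d)(3\lg q)$ term comparable to the initial $\lg\zmax \approx -m\lg(1/\rho)$. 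Your approach instead keeps $n_1 = C_1\lg(1/\epsilon)$ fixed and dispenses with blocking entirely: the normalized process $R_j = L_j/2^{N_j}$ lets you charge each $-$ step a penalty of $3\lg q/2^{N_{t-1}}$ rather than the amplified $3\lg q\cdot 2^{k_+}$ one would pay in a worst-case ordering, and the random sum $S = \sum_{B_t=-} 2^{-N_{t-1}}$ has expectation $O(1)$ and an $O(\lg(1/\epsilon))$ tail. This is arguably cleaner---a single martingale-style telescoping plus one Chernoff bound on $N_{n_2}$ replaces the block bookkeeping---and it makes transparent why any $\delta<1/2$ is attainable (the $S$ bound is dimension-free, so the rough stage can be an arbitrarily small fraction of $n_0$). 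The paper's version, on the other hand, is the more standard template in the polar-coding literature and avoids having to control the random variable $S$ at all, at the cost of the extra block parameter $d$ and the associated constant-juggling. Both yield the same $c_\delta = O\bigl(1/(1-2\delta)\bigr)$ dependence.
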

The proof follows from arguments similar to those in
\cite{arikan-telatar,GX13}. For the sake of completeness, and because there are some slight differences in the behavior of the $q$-ary Bhattacharyya parameters from Section~\ref{sec:bhatt} compared to the binary case, we present
a proof in Appendix~\ref{app:fine-polar}.

As a corollary, we obtain the following result on lossless compression with
complexity scaling polynomially in the gap to capacity:
\lcompthm*
\begin{proof}
 Let $W = (X; Y)$, and fix $\delta = 0.499$. Also, let $N=2^{n_0}$. Then, by
Theorem~\ref{thm:finepolarize}, for any $n_0 > c_\delta\lg(1/\epsilon)$, we
have that
\[
 \Pr_i \left[\zmax(W_{n_0}^{(i)}) \leq 2^{-2^{\delta n_0}}\right] \geq
1-H(X)-\epsilon.
\]
Moreover, let $N=2^{n_0}$.
Recall the notation in (\ref{eq:frozenset}). Then, letting  $\delta ' =
2^{-2^{\delta n_0}}$, we have that $\Pr_i [i\in\overline{F}_{n_0,\delta'}] \leq
H(X|Y)+\epsilon$ and $Z(W_{n_0}^{(i)}) \geq \delta'$ for all $i\in
\overline{F}_{n_0,\delta'}$. Thus, we can take $L$ to be the linear map
$G_{n_0}$ projected onto the coordinates of $\overline{F}_{n_0,\delta'}$.

By Lemma~\ref{lem:pez} and the union bound, the probability that attempting to
recover $\vec{X}$ from $L\cdot\vec{X}$ and $\vec{Y}$ results in an error is 
given by
\begin{eqnarray}
 \sum_{i\not\in\overline{F}_{n_0,\delta'}} P_e(W_{n_0}^{(i)}) \leq
\sum_{i\not\in\overline{F}_{n_0,\delta'}} (q-1)\zmax(W_{n_0}^{(i)}) \leq
(q-1)N\delta' = (q-1) 2^{n_0 - 2^{\delta n_0}}, \label{eq:unionbd}
\end{eqnarray}
which is $\leq 2^{-N^{0.49}}$ for $N\geq (1/\epsilon)^\mu$ for some
positive constant $\mu$ (possibly depending on $q$). Hence, it suffices to take
$c(q) = 1+\max\{c_\delta, \mu\}$.

Finally, the fact that both the construction of $L$ and the recovery of
$\vec{X}$ from $L\cdot\vec{X}$ and $\vec{Y}$ can be done in $\mathrm{poly}(N)$ 
time follows
in a similar fashion to the binary case (see the binning algorithm
and the successive cancellation decoder in~\cite{GX13} for details). Moreover, 
the entries of $L$ are all in $\{0,1\}$ because of the fact that $L$ can be 
obtained by taking a submatrix of $B_n K^{\otimes n_0}$, where $B_n$ is a 
permutation matrix, and $K = \left(\begin{smallmatrix} 1 & 1\\ 0 & 
1\end{smallmatrix}\right)$ (see~\cite{arikan-source}).
\end{proof}

\section{Extension to Arbitrary Alphabets}\label{sec:arbalph}
\label{sec:general-alphabet}
In the previous sections, we have shown polarization and polynomial gap to 
capacity for polar codes over \emph{prime} alphabets. We now describe how to 
extend this to obtain channel polarization and the explicit 
construction of a polar code with polynomial gap to capacity over 
\emph{arbitrary} alphabets.

The idea is to use the multi-level code construction technique
sketched in~\cite{STA09} (and also recently in \cite{LA14} for
alphabets of size $2^m$). We outline the procedure here. Suppose we
have a channel $W = (X; Y)$, where $X\in \zq$ and
$Y\in\mathcal{Y}$. Moreover, assume that $q = \prod_{i=1}^s q_i$ is
the prime factorization of $q$.

Now, we can write $X = (U^{(1)}, U^{(2)}, \dots, U^{(s)})$, where each 
$U^{(i)}$ 
is a random variable distributed over $[q_i]$. We also define the channels 
$W^{(1)}, W^{(2)}, \dots, W^{(s)}$ as follows: $W^{(j)} = (U^{(j)}; Y, U^{(1)}, 
U^{(2)}, \dots, U^{(j-1)})$. Note that
\begin{eqnarray*}
 H(W) = H(X|Y) &=& H(U^{(1)}, U^{(2)}, \dots, U^{(s)}|Y)\\
 &=& \sum_{j=1}^s H(U^{(j)}|Y, U^{(1)},U^{(2)},\dots, U^{(j-1)}) \\
 &=& \sum_{j=1}^s H(W^{(j)}),
\end{eqnarray*}
which means that $W$ splits into $W^{(1)}, W^{(2)}, \dots, W^{(s)}$. Since 
each $W^{(j)}$ is a channel whose input is over a prime alphabet, one can 
polarize each $W^{(j)}$ separately using the procedure of the previous 
sections. More precisely, the encoding procedure is as follows. For $N$ large 
enough (as specified by Theorem~\ref{thm:lossless}), we take $N$ copies 
$(X_0; Y_0), (X_1; Y_1), \dots, (X_{N-1}; Y_{N-1})$ of $W$, 
where $X_i = (U_i^{(1)}, U_i^{(2)}, \dots U_i^{(s)})$. Then, 
sequentially for $j=1,2,\dots,s$, we encode $U_0^{(j)}, U_1^{(j)}, \dots, 
U_{N-1}^{(j)}$ using $\left\{\left(Y_i, U_i^{(1)}, U_i^{(2)}, \dots, 
U_i^{(j-1)}\right)\right\}_{i=0,1,\dots,N-1}$ as side information (which can 
be done using the procedure of the previous sections, since $U_j$ is a source 
over a prime alphabet).

For decoding, one can simply use $s$ stages of the successive cancellation 
decoder. In the $j^\text{th}$ stage, one uses the successive cancellation 
decoder for $W^{(j)}$ in order to decode $U_0^{(j)}, U_1^{(j)}, \dots, 
U_{N-1}^{(j)}$, assuming that $\left\{U_i^{(k)}\right\}_{k < j}$ has been 
recovered correctly from the previous stages of successive canellation decoding. 
Note that the 
error probability in decoding $X_0, X_1, \dots, X_{N-1}$ can be obtained by 
taking a union bound over the error probabilities for each of the $s$ stages of 
successive cancellation decoding. Since each individual error probability is 
exponentially small (see (\ref{eq:unionbd})), it follows that the overall error 
probability is also negligible.

As a consequence, we obtain Theorem~\ref{thm:lossless} for non-prime $q$, with 
the additional modification that the map $\Z_q^N \to 
\Z_q^{H(X|Y)+\epsilon)N}$ is not linear. Moreover, using the 
translation from source coding to noisy channel coding (see~\cite[Sec 
2.4]{Sasoglu12}), we obtain the following result for channel coding.

\chcodingthm*

\begin{rmk}
 If $q$ is prime, then the $q$-ary code of Theorem~\ref{thm:chcoding} is, in 
fact, linear.
\end{rmk}

\section*{Acknowledgments}

We thank Emmanuel Abbe, Eren \c{S}a\c{s}o\u{g}lu, and Patrick Xia for useful discussions about various aspects of polar codes.

\bibliographystyle{alpha}
\bibliography{polar}
\appendix
\section{Basic Entropic Lemmas and Proof}

For a random variable $X$ taking values in $\zq$, let $H(X)$ denote
the entropy of $X$, normalized to the interval $[0,1]$. More formally, if $p$
is the probability mass function of $X$, then
\[
 H(X) = \frac{1}{\lg q}\sum_{i=1}^q p(i) \lg (p(i))
\]

Moreover, note for the lemmas and theorems in this section, $q \geq 2$ is an 
integer. We do not make any primality assumption about $q$ anywhere in this 
section with the exception of Lemma~\ref{lem:cyclicdist}.

\begin{lemma}\label{lem:hcnvx}
If $X$ and $Y$ are random variables taking values in $\zq$,
then
\[
 H(\alpha X + (1-\alpha)Y) \geq \alpha H(X) + (1-\alpha) H(Y) + \frac{1}{2\lg
q}\alpha(1-\alpha) \|X-Y\|_1^2.
\]
\end{lemma}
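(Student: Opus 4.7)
I read the statement as a claim about the mixture of the distributions of $X$ and $Y$: the expression ``$\alpha X+(1-\alpha)Y$'' is not literally a $\zq$-valued random variable, so I take it to mean the mixture of the pmfs. Write $p$ and $r$ for the pmfs of $X$ and $Y$ and set $m=\alpha p+(1-\alpha)r$; interpret $\|X-Y\|_1$ as $\|p-r\|_1$. The claim is then a quantitative strengthening of the usual concavity of $H$, with the extra term measuring how far apart the two mixed distributions are.

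The first step is the standard algebraic identity
\[
H(m)-\alpha H(p)-(1-\alpha)H(r)=\alpha\,D(p\|m)+(1-\alpha)\,D(r\|m),
\]
where $D$ is the Kullback--Leibler divergence in the same normalization as $H$. One obtains this by expanding $-x\log x$ and collecting the $\log m_i$ terms. The right-hand side is a (weighted) Jensen--Shannon divergence and is manifestly nonnegative, which recovers the plain concavity bound; all of the quantitative content must come from lower-bounding each KL term.

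For that I would invoke Pinsker's inequality, which in nats reads $D_e(P\|Q)\ge\tfrac12\|P-Q\|_1^2$. Because the paper normalizes by dividing by $\lg q$, and since $H_e=(\ln 2)\cdot H_2$ and $H_2=(\lg q)\cdot H$, the paper's $H$ equals $H_e/\ln q$; dividing Pinsker by $\ln q$ gives $D(P\|Q)\ge\tfrac{1}{2\ln q}\|P-Q\|_1^2$ in the paper's units. Apply this with $(P,Q)=(p,m)$ and $(P,Q)=(r,m)$, using the vector identities $p-m=(1-\alpha)(p-r)$ and $r-m=-\alpha(p-r)$, which yield $\|p-m\|_1=(1-\alpha)\|p-r\|_1$ and $\|r-m\|_1=\alpha\|p-r\|_1$. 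Summing,
\[
\alpha D(p\|m)+(1-\alpha)D(r\|m)\;\ge\;\frac{\alpha(1-\alpha)^{2}+(1-\alpha)\alpha^{2}}{2\ln q}\,\|p-r\|_1^{2}\;=\;\frac{\alpha(1-\alpha)}{2\ln q}\,\|p-r\|_1^{2}.
\]

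Finally, since $\ln q=(\ln 2)\lg q\le\lg q$, we have $\tfrac{1}{2\ln q}\ge\tfrac{1}{2\lg q}$, and the stated inequality follows (with a small amount of slack). The argument is essentially bookkeeping; the only care needed is to keep the normalization straight when converting Pinsker from nats to the paper's $[0,1]$-valued entropy. If one wished to avoid citing Pinsker as a black box, the same bound can be obtained by an integral/Taylor remainder estimate on $f(x)=-x\log x$ comparing $f(\alpha x+(1-\alpha)y)$ with $\alpha f(x)+(1-\alpha)f(y)$ pointwise and then summing over coordinates, but the KL/Pinsker packaging is cleaner and immediately gives the right quadratic dependence on $\|p-r\|_1$.
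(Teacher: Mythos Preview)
Your proof is correct. The identity expressing the concavity gap of $H$ as $\alpha D(p\|m)+(1-\alpha)D(r\|m)$ is standard and your normalization bookkeeping is right; applying Pinsker to each summand and using $p-m=(1-\alpha)(p-r)$, $r-m=-\alpha(p-r)$ yields exactly $\frac{\alpha(1-\alpha)}{2\ln q}\|p-r\|_1^2$, and since $\ln q\le \lg q$ the stated bound follows with room to spare.

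The paper's own proof is a one-liner: it simply invokes the fact (citing Shalev-Shwartz) that $-H$ is $\tfrac{1}{\lg q}$-strongly convex with respect to the $\ell_1$ norm on the simplex, which by definition of strong convexity gives the inequality immediately. Your route is different in packaging: rather than citing strong convexity as a black box, you unpack it via the Jensen--Shannon/KL decomposition and then appeal to Pinsker. The two are closely related under the hood (the $\ell_1$-strong convexity of negative entropy is essentially equivalent to Pinsker), but your argument is more self-contained and in fact delivers the sharper constant $\tfrac{1}{2\ln q}$ rather than $\tfrac{1}{2\lg q}$. The paper's version has the advantage of brevity and of pointing to a general principle (strong convexity) that could be reused; yours has the advantage of not requiring the reader to look up what strong convexity with respect to a non-Euclidean norm means or why negative entropy has that property.
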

\begin{proof}
 This follows from the fact that $-H$ is a $\frac{1}{\lg q}$-strongly convex
function with respect to the $\ell_1$ norm on
\[
\{x=(x_1, x_2, \dots, x_q)\in\R^q: x_1,x_2,\dots,x_q \geq 0, \|x\|_1\leq 1 \}
\]
\end{proof}
\noindent (see Example 2.5 in \cite{Shalev-Shwartz12} for details).
\begin{lemma}\label{lem:hconvdiff}
 Let $p$ be a distribution over $\zq$. Then, if
$\lambda_0,\lambda_1,\dots,\lambda_{q-1}$ are nonnegative numbers adding up to
1, we have
\begin{eqnarray*}
 H(\lambda_0 p^{(+0)} + \lambda_1 p^{(+1)} + \cdots + \lambda_{q-1} 
p^{(+(q-1))}) \geq H(p) +
\frac{1}{2\lg q}\cdot \frac{\lambda_i \lambda_j}{\lambda_i + \lambda_j} 
\|p^{(+i)} -
p^{(+j)}\|_1^2,
\end{eqnarray*}
for any $i\neq j$ such that $\lambda_i + \lambda_j > 0$.
\end{lemma}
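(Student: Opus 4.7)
The main idea is a two-step decomposition that uses Lemma~\ref{lem:hcnvx} once with its strong-convexity correction (to extract the term involving $\|p^{(+i)}-p^{(+j)}\|_1^2$) and then uses pure concavity of $H$ to handle the remaining terms. The key simplification throughout is that cyclic shifts preserve entropy, i.e.\ $H(p^{(+k)})=H(p)$ for all $k$, since $p^{(+k)}$ is merely a coordinate permutation of $p$.

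Concretely, the plan is as follows. Let $\mu = \lambda_i + \lambda_j > 0$ and group the two distinguished shifts into a single convex combination
\[
r \;=\; \frac{\lambda_i}{\mu}\,p^{(+i)} \;+\; \frac{\lambda_j}{\mu}\,p^{(+j)},
\]
so that
\[
\lambda_0 p^{(+0)} + \cdots + \lambda_{q-1} p^{(+(q-1))} \;=\; \mu\, r \;+\; \sum_{k \neq i,j} \lambda_k\, p^{(+k)}.
\]
By concavity of $H$ (which is the $\alpha(1-\alpha)=0$ limit of Lemma~\ref{lem:hcnvx}, or Jensen's inequality directly),
\[
H\!\left(\mu r + \sum_{k\neq i,j}\lambda_k p^{(+k)}\right) \;\ge\; \mu\, H(r) \;+\; \sum_{k\neq i,j}\lambda_k\, H(p^{(+k)}) \;=\; \mu\, H(r) + (1-\mu)\, H(p),
\]
where the last equality uses $H(p^{(+k)}) = H(p)$.

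It remains to lower bound $H(r)$. Applying Lemma~\ref{lem:hcnvx} with $\alpha = \lambda_i/\mu$ to the two random variables $p^{(+i)}$ and $p^{(+j)}$, and again using $H(p^{(+i)}) = H(p^{(+j)}) = H(p)$,
\[
H(r) \;\ge\; H(p) \;+\; \frac{1}{2\lg q}\cdot\frac{\lambda_i \lambda_j}{\mu^2}\,\|p^{(+i)}-p^{(+j)}\|_1^2.
\]
Substituting this back and collecting terms gives
\[
H\!\left(\sum_k \lambda_k p^{(+k)}\right) \;\ge\; \mu\,H(p) + (1-\mu)\,H(p) + \frac{1}{2\lg q}\cdot\frac{\lambda_i \lambda_j}{\mu}\,\|p^{(+i)}-p^{(+j)}\|_1^2,
\]
which is exactly the claimed inequality after recalling $\mu = \lambda_i+\lambda_j$. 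There is no real obstacle here; the only subtlety is noticing that one should isolate the pair $(i,j)$ first and apply the strong-convexity bound to that two-term combination, then dilute with the rest via concavity, rather than trying to apply Lemma~\ref{lem:hcnvx} directly to the full $q$-fold mixture.
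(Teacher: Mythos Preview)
Your proof is correct and follows essentially the same approach as the paper: isolate the pair $(i,j)$, apply concavity to the outer mixture, and apply Lemma~\ref{lem:hcnvx} with its strong-convexity term to the two-point combination $\frac{\lambda_i}{\lambda_i+\lambda_j}p^{(+i)}+\frac{\lambda_j}{\lambda_i+\lambda_j}p^{(+j)}$, using $H(p^{(+k)})=H(p)$ throughout. The only difference is notational (your $\mu$ and $r$), not substantive.
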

\begin{proof}
Note that if $\lambda_i + \lambda_j > 0$, then we have that by
Lemma~\ref{lem:hcnvx},
\begin{eqnarray*}
 H\left(\sum_{k=0}^{q-1} \lambda_k p^{(+k)}\right) &=&
H\left(\sum_{k\neq i,j} \lambda_k p^{(+k)} + (\lambda_i+\lambda_j)
\left(\frac{\lambda_i}{\lambda_i+\lambda_j}p^{(+i)} +
\frac{\lambda_j}{\lambda_i+\lambda_j}p^{(+j)}\right) \right)\\
&\geq& \sum_{k\neq i.j} \lambda_k H(p^{(+k)}) +
(\lambda_i+\lambda_j)H\left(\frac{\lambda_i}{\lambda_i+\lambda_j}p^{(+i)} +
\frac{\lambda_j}{\lambda_i+\lambda_j}p^{(+j)}\right)\\
&=& (1-\lambda_i-\lambda_j)H(p) +
(\lambda_i+\lambda_j)\left(\frac{\lambda_i}{\lambda_i + \lambda_j}H(p^{(+i)}) +
\frac{\lambda_j}{\lambda_i+\lambda_j}H(p^{(+j)})\right)\\
& & + (\lambda_i+\lambda_j)\cdot\frac{1}{2\lg
q}\cdot\frac{\lambda_i}{\lambda_i+\lambda_j}\cdot\frac{\lambda_j}{
\lambda_i+\lambda_j}\cdot \|p^{(+i)}-p^{(+j)}\|_1^2\\
&=& H(p) + \frac{1}{2\lg
q}\cdot\frac{\lambda_i\lambda_j}{\lambda_i+\lambda_j}\cdot\|p^{(+i)}-p^{
(+j)}\|_1^2 ,
\end{eqnarray*}
as desired.
\end{proof}

\begin{lemma} \label{lem:cyclicdist}
 Let $p$ be a distribution over $\zq$, where $q$ is prime. Then,
\[
 \|p^{(+i)}-p^{(+j)}\|_1 \geq \frac{(1-H(p))\lg q}{2q^2 (q-1)\lg e}.
\]
\end{lemma}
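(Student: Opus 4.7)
The plan is to exploit primality of $q$ by averaging over all cyclic shifts to produce the uniform distribution, then to convert an entropy deficit $1-H(p)$ into $\ell_1$ distance from uniform via a ``reverse Pinsker'' style bound.

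First, I would reduce to the case $i = 0$. By replacing $p$ with $p^{(+i)}$ the statement is invariant, so it suffices to lower-bound $\|p - p^{(+d)}\|_1$ for $d = j - i \not\equiv 0 \pmod q$. Since $q$ is prime and $d \neq 0$, the element $d$ generates $\mathbb{Z}_q$, so $\{0, d, 2d, \ldots, (q-1)d\}$ is a permutation of $\mathbb{Z}_q$. Hence if $u$ denotes the uniform distribution on $\mathbb{Z}_q$, we have $u = \frac{1}{q}\sum_{k=0}^{q-1} p^{(+kd)}$. (This is exactly where the primality hypothesis is needed: for composite $q$, a shift by a non-generator $d$ would only traverse a proper subgroup, and for $p$ uniform on that subgroup both $\|p - p^{(+d)}\|_1 = 0$ and $H(p)<1$ would hold.)

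Next I would telescope: writing $p^{(+kd)} - p$ as a sum of $k$ successive shift-differences, each of which is a cyclic shift of $p^{(+d)} - p$ (and therefore has the same $\ell_1$ norm), yields $\|p - p^{(+kd)}\|_1 \leq k\|p - p^{(+d)}\|_1$. Applying the triangle inequality,
\[
\|p - u\|_1 \;\leq\; \frac{1}{q}\sum_{k=0}^{q-1} \|p - p^{(+kd)}\|_1 \;\leq\; \frac{1}{q}\cdot\frac{q(q-1)}{2}\,\|p - p^{(+d)}\|_1 \;=\; \frac{q-1}{2}\,\|p - p^{(+d)}\|_1 .
\]

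It remains to lower-bound $\|p - u\|_1$ by a multiple of $1 - H(p)$. Note that $D(p\,\|\,u) = \log_2 q - H_2(p) = (1-H(p))\lg q$, where $H_2$ is unnormalized entropy. Writing $p_i = 1/q + \delta_i$ with $\sum_i \delta_i = 0$, the inequality $\log_2(1+x) \leq x/\ln 2$ (valid for all $x > -1$ by concavity) gives
\[
 D(p\,\|\,u) \;=\; \sum_i p_i \log_2(1 + q\delta_i) \;\leq\; \frac{q}{\ln 2}\sum_i p_i \delta_i \;=\; \frac{q}{\ln 2}\sum_i \delta_i^2,
\]
using $\sum \delta_i = 0$. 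Since $|\delta_i| \leq 1 - 1/q$, we bound $\sum \delta_i^2 \leq (1 - 1/q)\|p - u\|_1$, giving $D(p\,\|\,u) \leq \frac{q-1}{\ln 2}\|p - u\|_1$, i.e.\ $\|p - u\|_1 \geq \frac{(1-H(p))\ln q}{q-1} = \frac{(1-H(p))\lg q}{(q-1)\lg e}$. Chaining with the shift bound yields $\|p - p^{(+d)}\|_1 \geq \frac{2(1-H(p))\lg q}{(q-1)^2 \lg e}$, which is in fact much stronger than the claimed bound $\frac{(1-H(p))\lg q}{2q^2(q-1)\lg e}$ (stronger by a factor $\approx 4q^2/(q-1)$), so we have plenty of slack.

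The only delicate step is the reverse-Pinsker inequality $D(p\,\|\,u) \leq \frac{q-1}{\ln 2}\|p-u\|_1$; the standard Pinsker inequality goes the wrong way for our purposes, and a general reverse inequality only holds under a positivity assumption. Here, the key observation is that $\log_2(1+x)$ is majorized by its tangent at $0$ (namely $x/\ln 2$) on the whole domain $x > -1$; combined with the moment identity $\sum p_i \delta_i = \sum \delta_i^2$ (which uses only that $\sum \delta_i = 0$) and the crude a priori bound $\|p - u\|_\infty \leq 1 - 1/q$, this gives the required linear control of $D$ by $\|\cdot\|_1$ without any lower-bound hypothesis on the $p_i$'s.
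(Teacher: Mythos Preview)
Your argument is correct and in fact delivers a sharper bound, $\|p-p^{(+d)}\|_1 \ge \tfrac{2(1-H(p))\lg q}{(q-1)^2\lg e}$, than the stated $\tfrac{(1-H(p))\lg q}{2q^2(q-1)\lg e}$. The paper does not give its own proof here but simply cites \c{S}a\c{s}o\u{g}lu's survey (Lemma~4.5 of \cite{Sasoglu12}); the approach there is the same in spirit---use primality so that the average of all cyclic shifts is the uniform distribution $u$, telescope to control $\|p-u\|_1$ by $\|p-p^{(+d)}\|_1$, and then convert the entropy deficit into an $\ell_1$ lower bound on $\|p-u\|_1$. Your reverse-Pinsker step (via $\log_2(1+x)\le x/\ln 2$ and the identity $\sum p_i\delta_i=\sum\delta_i^2$) is a particularly clean way to carry out that last conversion, and the edge case $p_i=0$ is harmless since the corresponding summand $p_i\log_2(qp_i)$ vanishes by convention.
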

\noindent See Lemma 4.5 of \cite{Sasoglu12} for a proof of the above lemma.

\begin{lemma}\label{lem:tail}
 There exists an $\epsilon_1 > 0$ such that for any $0 < \epsilon \leq
\epsilon_1$,
we 
have
\[
 -(1-\epsilon)\lg(1-\epsilon) \leq -\frac{1}{6} \epsilon\lg\epsilon.
\]
\end{lemma}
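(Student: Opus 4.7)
The plan is to compare the asymptotic behavior of the two sides as $\epsilon \to 0^+$. The left-hand side behaves like $\epsilon/\ln 2$ (to leading order), while the right-hand side behaves like $\frac{\epsilon \ln(1/\epsilon)}{6 \ln 2}$, which grows strictly faster by a factor of $\ln(1/\epsilon)/6$. Once $\ln(1/\epsilon) \geq 6$, that factor exceeds $1$, so the inequality holds. Thus I expect a clean bound with $\epsilon_1 = e^{-6}$ (any smaller threshold works too).

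Concretely, I would first establish the simple calculus fact that $-(1-\epsilon)\ln(1-\epsilon) \leq \epsilon$ for every $\epsilon \in [0,1)$. This reduces to showing that $f(\epsilon) = \epsilon + (1-\epsilon)\ln(1-\epsilon)$ satisfies $f(\epsilon) \geq 0$ on $[0,1)$. Since $f(0) = 0$ and a direct differentiation gives $f'(\epsilon) = -\ln(1-\epsilon) \geq 0$, monotonicity implies the claim. Dividing both sides by $\ln 2$ yields
\[
 -(1-\epsilon)\lg(1-\epsilon) \leq \frac{\epsilon}{\ln 2}.
\]

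Next, I would rewrite the right-hand side as $-\tfrac{1}{6}\epsilon \lg \epsilon = \frac{\epsilon \ln(1/\epsilon)}{6 \ln 2}$. Combining this with the previous bound, it suffices to verify $\epsilon/\ln 2 \leq \frac{\epsilon \ln(1/\epsilon)}{6 \ln 2}$, which is equivalent to $\ln(1/\epsilon) \geq 6$, i.e., $\epsilon \leq e^{-6}$. Choosing $\epsilon_1 = e^{-6}$ then completes the proof.

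I do not expect any real obstacle here; the only mild subtlety is making sure the intermediate bound $-(1-\epsilon)\ln(1-\epsilon) \leq \epsilon$ is stated without appealing to a Taylor remainder estimate (which would complicate the argument), and the monotonicity argument above sidesteps that cleanly. Any constant $\epsilon_1 \in (0, e^{-6}]$ works, and the constant $\tfrac{1}{6}$ in the statement is comfortably generous, so there is no need to optimize further.
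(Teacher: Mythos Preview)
Your proof is correct. The paper takes a slightly different route: it observes via L'H\^{o}pital's rule that
\[
\lim_{\epsilon\to 0^+}\frac{(1-\epsilon)\lg(1-\epsilon)}{\epsilon\lg\epsilon}=0,
\]
so for all sufficiently small $\epsilon$ the ratio is at most $1/6$, and then separately remarks (without proof) that $\epsilon_1=1/500$ works. Your argument instead bounds the numerator directly by the elementary inequality $-(1-\epsilon)\ln(1-\epsilon)\le\epsilon$ (proved by monotonicity of $f(\epsilon)=\epsilon+(1-\epsilon)\ln(1-\epsilon)$) and then compares with $\tfrac{1}{6}\epsilon\ln(1/\epsilon)$. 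This is more elementary and fully constructive: it yields the explicit threshold $\epsilon_1=e^{-6}\approx 0.00248$, which is in fact slightly larger (hence a marginally stronger conclusion) than the paper's remarked value $1/500$. The L'H\^{o}pital approach is shorter to state but non-constructive; your approach trades a couple of extra lines for an explicit constant and avoids any appeal to limits.
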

\begin{proof}
 By L'H\^{o}pital's rule,
 \[
  \lim_{\epsilon\to 0^+}
\frac{(1-\epsilon)\lg(1-\epsilon)}{\epsilon\lg\epsilon} = \lim_{\epsilon\to
0^+} 
\frac{(1-\epsilon)\ln(1-\epsilon)}{\epsilon\ln\epsilon} = \lim_{\epsilon\to 
0^+} \frac{-1 - \ln(1-\epsilon)}{1 + \ln\epsilon} = 0,
 \]
This implies the claim.
\end{proof}
\begin{rmk}
One can, for instance, take $\epsilon_1 = \frac{1}{500}$ in the above lemma.
\end{rmk}
The following claim states that for sufficiently small $\epsilon$, the quantity
$\epsilon \lg\left(\frac{q-1}{\epsilon}\right)$ is close to
$-\epsilon\lg\epsilon$. We omit the proof, which is rather straightforward.
\begin{fact}\label{fact:qdom}
 Let $\epsilon_2 = \frac{1}{(q-1)^4}$. Then, for any $0 < \epsilon \leq
\epsilon_2$,
we have
\[
 \epsilon\lg\left(\frac{q-1}{\epsilon}\right) \leq \frac{5}{4}
\epsilon\lg(1/\epsilon).
\]
\end{fact}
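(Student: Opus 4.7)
The plan is to prove this by elementary algebraic manipulation of logarithms. First I would split the left-hand side using the additivity of the logarithm:
\[
\epsilon\lg\!\left(\tfrac{q-1}{\epsilon}\right) \;=\; \epsilon\lg(q-1) + \epsilon\lg(1/\epsilon).
\]
Thus the claimed inequality $\epsilon\lg((q-1)/\epsilon)\le \tfrac54\,\epsilon\lg(1/\epsilon)$ is equivalent (after dividing by $\epsilon>0$) to
\[
\lg(q-1) \;\le\; \tfrac14 \lg(1/\epsilon),
\]
i.e.\ to $(q-1)^4 \le 1/\epsilon$, which is exactly the hypothesis $\epsilon \le \epsilon_2 = 1/(q-1)^4$. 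So the whole fact reduces to a one-line rearrangement once one separates the $\lg(q-1)$ term.

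The only mild corner case is $q=2$, where $\lg(q-1)=\lg 1 = 0$ and the inequality reduces to $\epsilon\lg(1/\epsilon)\le \tfrac54\,\epsilon\lg(1/\epsilon)$, which holds trivially (and for which the hypothesis $\epsilon\le 1$ is automatic). For $q\ge 3$ one has $q-1\ge 2$ so $\lg(q-1)>0$ and the equivalence above is genuine; the bound $\epsilon\le 1/(q-1)^4$ is precisely what is needed to absorb the additive $\epsilon\lg(q-1)$ term into a $1/4$ fraction of $\epsilon\lg(1/\epsilon)$.

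There is no main obstacle here; the statement is essentially a tautology once one notices that the threshold $\epsilon_2 = (q-1)^{-4}$ was chosen exactly to make the inequality $4\lg(q-1)\le \lg(1/\epsilon)$ hold. The only choice in writing the proof is whether to present it as a two-line derivation or to simply remark that it follows by rearrangement; I would opt for the explicit two-line derivation so the role of the exponent $4$ in $\epsilon_2=1/(q-1)^4$ is transparent to the reader.
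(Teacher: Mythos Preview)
Your argument is correct and complete. The paper itself omits the proof of this fact, stating only that it ``is rather straightforward,'' so your two-line derivation---splitting $\lg\!\left(\tfrac{q-1}{\epsilon}\right)=\lg(q-1)+\lg(1/\epsilon)$ and observing that the hypothesis $\epsilon\le (q-1)^{-4}$ is exactly $4\lg(q-1)\le \lg(1/\epsilon)$---is precisely the intended elementary verification.
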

We present one final fact.
\begin{fact}\label{fact:incrfunc}
 The function $f(x) = x\lg(1/x)$ is increasing on the interval $(0, 1/e)$ and 
decreasing on the interval $(1/e, 1)$.
\end{fact}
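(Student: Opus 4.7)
The plan is to verify the claim by a direct calculus computation: I would simply differentiate $f(x) = x \lg(1/x) = -x \lg x$ on $(0,1)$ and read off the sign of the derivative.

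First I would compute $f'(x)$. Writing $\lg t = \ln t / \ln 2$ and applying the product rule, one obtains
\[
f'(x) \;=\; -\lg x - x \cdot \frac{1}{x \ln 2} \;=\; -\lg x - \lg e \;=\; -\lg(ex),
\]
where I have used $1/\ln 2 = \lg e$ in the middle step. Thus $f'$ is a continuous, strictly decreasing function on $(0,1)$, with a unique zero at the point determined by $\lg(ex) = 0$, i.e., at $x = 1/e$.

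From this formula it is immediate that $f'(x) > 0$ for $x \in (0, 1/e)$ (since $ex < 1$ forces $\lg(ex) < 0$) and $f'(x) < 0$ for $x \in (1/e, 1)$ (since $ex > 1$ forces $\lg(ex) > 0$). Combining this sign information with the continuity of $f$ on $[0,1]$ gives that $f$ is strictly increasing on $(0, 1/e)$ and strictly decreasing on $(1/e, 1)$, which is exactly the claim.

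There is no genuine obstacle here; the whole proof is a one-line differentiation followed by a sign check. The only subtlety worth flagging is the factor $1/\ln 2$ coming from converting the base-$2$ logarithm to the natural logarithm before differentiating, so that $f'$ is correctly identified as $-\lg(ex)$ rather than $-\lg x$ (which would vanish at $x=1$ and would give the wrong critical point and monotonicity picture on the right subinterval).
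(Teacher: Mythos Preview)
Your proof is correct and is essentially identical to the paper's: both compute $f'(x)$ (the paper writes it as $\tfrac{1}{\ln 2}(-1+\ln(1/x))$, which equals your $-\lg(ex)$) and read off the sign on either side of $x=1/e$.
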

\begin{proof}
 The statement is a simple consequence of the fact that $f'(x) = \frac{1}{\ln
2} (-1+\ln(1/x))$ is positive on the interval $(0,1/e)$ and negative on the 
interval $(1/e, 1)$.
\end{proof}

\subsection{Low Entropy Variables}
Now, we prove lemmas that provide bounds on the entropy of a
probability distribution
that samples one symbol in $\zq$ with high probability, i.e., a
distribution that 
has low entropy.
\begin{lemma}\label{lem:hwtlowbd}
Suppose $0 < \epsilon < 1$. If $p$ is a distribution on $\zq$ with
mass $1-\epsilon$ on one symbol, then
\[
 H(p) \geq \frac{\epsilon\lg(1/\epsilon)}{\lg q}.
\]
\end{lemma}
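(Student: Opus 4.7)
Without loss of generality, let the ``heavy'' symbol be $0$, so $p(0) = 1-\epsilon$ and the remaining probabilities $p(1),\dots,p(q-1)$ are nonnegative and sum to $\epsilon$. The plan is to drop the nonnegative contribution from the heavy symbol and lower-bound only the contribution of the light symbols. Concretely, I would write
\[
 H(p) \;=\; \frac{1}{\lg q}\Bigl( -(1-\epsilon)\lg(1-\epsilon) \;+\; \sum_{i=1}^{q-1} p(i)\lg\frac{1}{p(i)}\Bigr) \;\ge\; \frac{1}{\lg q}\sum_{i=1}^{q-1} p(i)\lg\frac{1}{p(i)}.
\]

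The key elementary observation is that, since $p(i) \le \sum_{j=1}^{q-1} p(j) = \epsilon$ for each $i \in \{1,\dots,q-1\}$, we have $\lg(1/p(i)) \ge \lg(1/\epsilon)$ (the function $\lg(1/x)$ is decreasing). Therefore
\[
 \sum_{i=1}^{q-1} p(i)\lg\frac{1}{p(i)} \;\ge\; \lg\frac{1}{\epsilon}\sum_{i=1}^{q-1} p(i) \;=\; \epsilon\lg(1/\epsilon),
\]
and substituting gives $H(p) \ge \epsilon\lg(1/\epsilon)/\lg q$ as desired. Note that terms with $p(i)=0$ contribute $0$ under the usual convention $0\lg 0 = 0$, so the argument is unaffected by zeros.

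There is essentially no obstacle here: the only step requiring any care is noting that each $p(i)$ for $i \neq 0$ is individually at most $\epsilon$ (which follows from nonnegativity and the constraint that they sum to $\epsilon$), and that the base-$q$ normalization merely divides through by $\lg q$ at the end. No primality of $q$ or any structural property of $\zq$ is used; the bound is a purely information-theoretic monotonicity fact.
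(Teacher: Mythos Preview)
Your proof is correct. It differs slightly from the paper's route: the paper first invokes concavity (really Schur-concavity) of $H$ to reduce to the extremal distribution $(1-\epsilon,\epsilon,0,\dots,0)$ and then drops the nonnegative term $-(1-\epsilon)\lg(1-\epsilon)$, whereas you drop that term immediately and then bound the light-symbol contribution termwise via $p(i)\le\epsilon \Rightarrow \lg(1/p(i)) \ge \lg(1/\epsilon)$. Your argument is a bit more elementary since it avoids any appeal to an extremal/majorization step; the paper's version, on the other hand, makes explicit that equality is attained at $(1-\epsilon,\epsilon,0,\dots,0)$.
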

\begin{proof}
Recall that the normalized entropy function $H$ is concave. Therefore,
\[
 H(p) \geq H(1-\epsilon, \epsilon, \underbrace{0, 0, \dots, 0}_{q-2}).
\]
Note that
\[
 H(1-\epsilon, \epsilon, \underbrace{0, 0, \dots, 0}_{q-2}) =
\frac{1}{\lg q}(-(1-\epsilon)\lg(1-\epsilon) - \epsilon\lg\epsilon) \geq
 \frac{-\epsilon\lg\epsilon}{\lg q},
\]
which establishes the claim.
\end{proof}

\begin{lemma}\label{lem:hwtupperbd}
 Suppose $0 < \epsilon \leq \min\{\epsilon_1, \epsilon_2\}$, where $\epsilon_1
= 
\frac{1}{500}$ and $\epsilon_2 = \frac{1}{(q-1)^{4}}$. If $p$ is a
distribution on $\zq$ with mass $1-\epsilon$ on one symbol, then
\[
 H(p) \leq \frac{17\epsilon\lg(1/\epsilon)}{12\lg q}.
\]
\end{lemma}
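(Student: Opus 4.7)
The plan is to maximize the unnormalized entropy over all distributions $p$ that place mass exactly $1-\epsilon$ on a fixed symbol, and then bound each of the two resulting terms using the technical estimates already developed (Lemma~\ref{lem:tail} and Fact~\ref{fact:qdom}).

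First I would observe that among probability distributions on $\zq$ whose largest atom has mass $1-\epsilon$, the one of maximal entropy spreads the remaining $\epsilon$ mass uniformly across the other $q-1$ symbols. This follows from the concavity of $H$ (equivalently, from the strict concavity of $x \mapsto -x\lg x$ together with symmetry among the non-dominant coordinates), so
\[
H(p) \;\leq\; \frac{1}{\lg q}\Bigl(-(1-\epsilon)\lg(1-\epsilon) \;+\; \epsilon \lg\!\tfrac{q-1}{\epsilon}\Bigr).
\]

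Next I would estimate the two summands separately. Since $\epsilon \leq \epsilon_1 = 1/500$, Lemma~\ref{lem:tail} (with the remark that $\epsilon_1 = 1/500$ is an admissible choice) gives
\[
-(1-\epsilon)\lg(1-\epsilon) \;\leq\; \tfrac{1}{6}\,\epsilon \lg(1/\epsilon).
\]
Since $\epsilon \leq \epsilon_2 = 1/(q-1)^4$, Fact~\ref{fact:qdom} gives
\[
\epsilon \lg\!\tfrac{q-1}{\epsilon} \;\leq\; \tfrac{5}{4}\,\epsilon \lg(1/\epsilon).
\]
Adding the two bounds yields $\tfrac{1}{6} + \tfrac{5}{4} = \tfrac{17}{12}$, so
\[
H(p) \;\leq\; \frac{1}{\lg q}\cdot\tfrac{17}{12}\,\epsilon\lg(1/\epsilon) \;=\; \frac{17\epsilon\lg(1/\epsilon)}{12\lg q},
\]
as required.

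There is no real obstacle here; the only non-routine step is the reduction to the extremal distribution, which is a short concavity argument. The hypotheses $\epsilon \leq \epsilon_1$ and $\epsilon \leq \epsilon_2$ are chosen precisely so that the two inputs to Lemma~\ref{lem:tail} and Fact~\ref{fact:qdom} are valid simultaneously, and the constant $17/12$ arises as the sum $\tfrac{1}{6}+\tfrac{5}{4}$ of the two coefficients produced by those estimates.
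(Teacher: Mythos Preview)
Your proposal is correct and follows essentially the same approach as the paper: reduce to the extremal distribution $(1-\epsilon,\tfrac{\epsilon}{q-1},\dots,\tfrac{\epsilon}{q-1})$ via concavity of $H$, then bound the two resulting terms using Lemma~\ref{lem:tail} and Fact~\ref{fact:qdom} to produce the constant $\tfrac{1}{6}+\tfrac{5}{4}=\tfrac{17}{12}$.
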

\begin{proof}
By concavity of the normalized entropy function $H$, we have that
\[
 H(p) \leq H\left(1-\epsilon, \frac{\epsilon}{q-1}, \frac{\epsilon}{q-1}, \dots,
\frac{\epsilon}{q-1}\right).
\]
Moreover,
\begin{eqnarray*}
 H\left(1-\epsilon, \frac{\epsilon}{q-1}, \frac{\epsilon}{q-1}, \dots,
\frac{\epsilon}{q-1}\right) &=& \frac{1}{\lg
q}\left(-(1-\epsilon)\lg(1-\epsilon) +
(q-1)\cdot\left(\frac{\epsilon}{q-1}\lg\frac{q-1}{\epsilon}\right)\right)\\
&=& \frac{-(1-\epsilon)\lg(1-\epsilon)}{\lg q} +
\frac{\epsilon\lg\left(\frac{q-1}{\epsilon}\right)}{\lg q}.
\end{eqnarray*}
By Lemma~\ref{lem:tail} (and the remark following it) and Fact~\ref{fact:qdom},
the above quantity is bounded
from above by
\begin{eqnarray*}
 \frac{\frac{1}{6}\epsilon\lg(1/\epsilon)}{\lg q} +
\frac{\frac{5}{4}\epsilon\lg(1/\epsilon)}{\lg q} =
\frac{17\epsilon\lg(1/\epsilon)}{12\lg q},
\end{eqnarray*}
as desired.
\end{proof}

\begin{rmk}\label{rmk:lowent}
 Lemmas~\ref{lem:hwtlowbd} and \ref{lem:hwtupperbd} show that for sufficiently 
small $\epsilon$, a random variable $X$ over $\zq$ having weight $1-\epsilon$ 
on a particular symbol in $\zq$ has entropy $\Theta(\epsilon\lg(1/\epsilon)/\lg 
q)$. This allows us to prove Lemma~\ref{lem:lowentropyconvolution}. Therefore, 
the constant $17/12$ in Lemma~\ref{lem:hwtupperbd} is not so critical except 
that it is close enough to 1 for our purposes.
\end{rmk}

\begin{lemma} \label{lem:lowentropyconvolution}
 Let $X, Y$ be random variables taking values in $\zq$ such that $H(X) \geq 
H(Y)$, and assume $0 < \epsilon, \epsilon' \leq \min\{\epsilon_1,\epsilon_2\}$, 
where $\epsilon_1 = \frac{1}{500}$ and $\epsilon_2 = \frac{1}{(q-1)^4}$. 
Suppose that $X$ 
has mass $1-\epsilon$ on one symbol, while $Y$ has mass $1-\epsilon'$ on a 
symbol. Then,
\begin{eqnarray}
 H(X+Y) - \frac{2H(X)+H(Y)}{3} \geq \frac{1}{51}\cdot H(Y)(1-H(Y)). 
\label{eq:lentinc}
\end{eqnarray}
\end{lemma}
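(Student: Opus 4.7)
I would begin by translating $X$ and $Y$ so that both their heavy symbols sit at $0\in\zq$ (this does not change any of the entropies involved). Writing $p_i=\Pr[X=i]$, $q_i=\Pr[Y=i]$ with $p_0=1-\epsilon$, $q_0=1-\epsilon'$, a direct computation yields
\[
\Pr[X+Y=0] \;=\; p_0 q_0 + \sum_{j\neq 0} p_j q_{-j} \;\in\; \bigl[(1-\epsilon)(1-\epsilon'),\ (1-\epsilon)(1-\epsilon')+\epsilon\epsilon'\bigr],
\]
since $\sum_{j\neq 0}p_j=\epsilon$ and $\max_{j\neq 0} q_j\leq \epsilon'$. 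Thus $X+Y$ is concentrated on $0$ with weight $1-\eta$ for some $\eta$ satisfying $\eta \geq (\epsilon+\epsilon')(1-2\min(\epsilon,\epsilon')) \geq (499/500)(\epsilon+\epsilon')$, using the hypothesis $\epsilon,\epsilon'\leq 1/500$. Setting $f(x) := x\lg(1/x)/\lg q$, Lemma~\ref{lem:hwtlowbd} together with the monotonicity of $f$ on $(0,1/e)$ gives $H(X+Y)\geq f(\eta)\geq (499/500)\,f(\epsilon+\epsilon')$.

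Applying Lemma~\ref{lem:hwtupperbd} to $H(X)$ and $H(Y)$ in the right-hand side of \eqref{eq:lentinc} and using the trivial $H(Y)(1-H(Y))\leq H(Y)$, the goal reduces to showing
\[
H(X+Y) \;\geq\; \tfrac{17}{18} f(\epsilon) + \tfrac{1}{2} f(\epsilon').
\]
A direct expansion yields the identity $f(\epsilon)+f(\epsilon')-f(\epsilon+\epsilon') = \bigl[\epsilon\lg(1+\epsilon'/\epsilon) + \epsilon'\lg(1+\epsilon/\epsilon')\bigr]/\lg q$, and a short calculus argument (verifying $(1+c)\lg(1+c)+c\lg(1/c) \leq 1+c$ on $c\in[0,1]$ by differentiation, with $c=\min(\epsilon,\epsilon')/\max(\epsilon,\epsilon')$) shows that this correction is bounded by $(\epsilon+\epsilon')/\lg q$. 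Combined with the lower bound on $H(X+Y)$ above, the goal collapses to a linear inequality in $f(\epsilon)$, $f(\epsilon')$, and $(\epsilon+\epsilon')/\lg q$, which I expect to verify with room to spare using the estimates $\lg(1/\epsilon),\lg(1/\epsilon')\geq \lg 500 > 8.97$, provided $\min(\epsilon,\epsilon')/\max(\epsilon,\epsilon')$ is not too small (roughly at least $0.15$).

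The main obstacle is the complementary regime in which $\epsilon$ and $\epsilon'$ have very different orders of magnitude, where the correction term can eat into the slack. Here I plan to split on the ratio $f(\epsilon)/f(\epsilon')$: if $f(\epsilon)\geq \tfrac{3}{2} f(\epsilon')$, then Lemmas~\ref{lem:hwtlowbd} and \ref{lem:hwtupperbd} together give $H(X)\geq \tfrac{18}{17}H(Y)$, and Lemma~\ref{lem:maxentropy} alone suffices since
\[
H(X+Y)-\tfrac{2H(X)+H(Y)}{3}\;\geq\;\tfrac{H(X)-H(Y)}{3}\;\geq\;\tfrac{1}{51}H(Y).
\]
In the complementary range $\tfrac{12}{17}f(\epsilon')\leq f(\epsilon)<\tfrac{3}{2}f(\epsilon')$ (with the lower bound coming from $H(X)\geq H(Y)$), the values $\epsilon,\epsilon'$ are forced to be of comparable logarithmic order, so the sharp estimate $f(\eta)\geq (499/500)f(\epsilon+\epsilon')$ together with the identity and calculus bound closes the argument. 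The hypothesis $H(X)\geq H(Y)$ enters precisely to make the asymmetric target align with the easy branch of this split.
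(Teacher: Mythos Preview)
Your proposal is correct but follows a somewhat different route than the paper. Both arguments start the same way: bound the heavy mass of $X+Y$ to show it is concentrated on $j+j'$ with deficiency $\eta \gtrsim \epsilon+\epsilon'$, invoke Lemma~\ref{lem:hwtlowbd} for a lower bound on $H(X+Y)$, and invoke Lemma~\ref{lem:hwtupperbd} to control $\frac{2H(X)+H(Y)}{3}$ from above. The divergence is in how the resulting one-variable inequality is closed.

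The paper treats the expression $\frac{17}{18}(\epsilon+\epsilon')\lg\bigl(\tfrac{18}{17(\epsilon+\epsilon')}\bigr) - \frac{17}{18}\epsilon\lg(1/\epsilon) - \frac{17}{36}\epsilon'\lg(1/\epsilon')$ as a function of $\epsilon$ with $\epsilon'$ fixed, differentiates, finds the global minimum at $\epsilon=17\epsilon'$, and evaluates there to get the $\frac{1}{51}H(Y)$ bound directly, with no case analysis. Your approach instead splits on the ratio $f(\epsilon)/f(\epsilon')$: when this ratio is at least $3/2$, the hypotheses force $H(X)\geq\frac{18}{17}H(Y)$ via Lemmas~\ref{lem:hwtlowbd} and~\ref{lem:hwtupperbd}, so Lemma~\ref{lem:maxentropy} alone already gives the required gain; when the ratio lies in $[\tfrac{12}{17},\tfrac{3}{2})$, you show (correctly) that $\epsilon,\epsilon'$ must be within a bounded multiplicative factor, which is exactly what is needed for your identity $f(\epsilon)+f(\epsilon')-f(\epsilon+\epsilon')\leq(\epsilon+\epsilon')/\lg q$ combined with $\lg(1/\epsilon)\gtrsim 9$ to close the linear inequality. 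The paper's single-optimization is a bit slicker and avoids the split, while your approach has the merit of isolating the ``entropies far apart'' regime where the trivial bound $H(X+Y)\geq H(X)$ is already enough, making the role of the hypothesis $H(X)\geq H(Y)$ more transparent.
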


\noindent\textbf{Overview of proof}. The idea is that $\epsilon, \epsilon'$ are 
small enough that we are able to invoke Lemmas~\ref{lem:hwtlowbd} and 
\ref{lem:hwtupperbd}. In particular, we show that $X+Y$ also has high weight 
on a particular symbol, which allows us to use Lemma~\ref{lem:hwtlowbd} to 
bound $H(X+Y)$ from below. Furthermore, we use Lemma~\ref{lem:hwtupperbd} in 
order to bound $H(X)$, $H(Y)$, and, therefore, $\frac{2H(X)+H(Y)}{3}$ from 
above. This gives us the necessary entropy increase for the left-hand side of 
\ref{eq:lentinc}. Note that the constant $1/51$ on the right-hand side of 
\ref{eq:lentinc} is not of any particular importance, and we have not made any 
attempt to optimize the constant.

\begin{proof}
 Let $j\in \zq$ such that $\Pr[X=j] = 1-\epsilon$, and let $j'\in \zq$ 
such 
that $\Pr[X=j'] = 1-\epsilon'$. Then,
\begin{eqnarray}
\Pr[X+Y = j+j'] \geq (1-\epsilon)(1-\epsilon')
\geq \left(\frac{499}{500}\right)^2.  \label{eq:wtprod}
\end{eqnarray}
(In a slight abuse of notation, $j+j'$ will mean $j+j'\ \text{(mod $q$)}$.)

Similarly, let us find an upper bound on $\Pr[X+Y = j+j']$. Let $p$ and $p'$ be
the underlying probability distributions of $X$ and $X'$, respectively. Then,
observe that $\Pr[X+Y=j+j']$ can be bounded 
from above as follows:
\begin{eqnarray}
 \sum_{k=0}^{q-1} p(k)p'(j+j'-k) &=& p(j)p'(j') + \sum_{k\neq j} 
p(k)p'(j+j'-k)\nonumber\\
&\leq& (1-\epsilon)(1-\epsilon') + \sum_{k\neq j} \left(\frac{p(k) + 
p'(j+j'-k)}{2}\right)^2\nonumber\\
&\leq& (1-\epsilon)(1-\epsilon') + \left(\frac{\sum_{k\neq j} (p(k) + 
p'(j+j'-k))}{2}\right)^2\nonumber\\
&=& (1-\epsilon)(1-\epsilon') + \left(\frac{\sum_{k\neq j} p(k) + \sum_{k\neq 
j'} p'(k)}{2}\right)^2\nonumber\\
&=& (1-\epsilon)(1-\epsilon') + \left(\frac{\epsilon +
\epsilon'}{2}\right)^2\nonumber\\
&=& 1 - \left(\epsilon + \epsilon' - \frac{3}{2}\epsilon\epsilon' - 
\frac{\epsilon^2}{4} - \frac{\epsilon'^2}{4}\right)\nonumber\\
&\leq& 1 - \frac{17}{18}(\epsilon + \epsilon'). \label{eq:wtineqbd}
\end{eqnarray}
Now, by Lemma~\ref{lem:hwtupperbd}, we have
\[
 H(X) \leq \frac{17\epsilon\lg(1/\epsilon)}{12\lg q}
\]
and
\[
 H(Y) \leq \frac{17\epsilon'\lg(1/\epsilon')}{12\lg q}.
\]
Also, by (\ref{eq:wtprod}) and (\ref{eq:wtineqbd}), we know that $X$ has mass
$1-\delta$ on a symbol, where $\frac{17}{18}(\epsilon + \epsilon') \leq \delta <
\frac{1}{e}$. Thus, by Lemma~\ref{lem:hwtlowbd} and Fact~\ref{fact:incrfunc}, we
have
\begin{eqnarray}
 H(X+Y) - \frac{2H(X) + H(Y)}{3} &\geq&	H(X+Y) - \frac{17}{18\lg 
q}\epsilon\lg(1/\epsilon) - \frac{17}{36\lg q}\epsilon'\lg(1/\epsilon') 
\nonumber\\
&\geq& \frac{1}{\lg q}\left(\frac{17}{18}(\epsilon +
\epsilon')\lg\left(\frac{1}{\frac{17}{18}(\epsilon + 
\epsilon')}\right) - \frac{17}{18}\epsilon\lg(1/\epsilon) - 
\frac{17}{36}\epsilon'\lg(1/\epsilon')\right) \nonumber\\
&\geq& \frac{1}{\lg 
q}\left(\frac{17}{18}(17\epsilon'+\epsilon')\lg\left(\frac{1}{\frac{17}{18}
(17\epsilon'+\epsilon')}\right)  
\right. \nonumber \\
& & \left. - \frac{17}{18}(17\epsilon') \lg(1/17\epsilon') 
\vphantom{\left(\frac{1}{\frac{17}{18}
(17\epsilon'+\epsilon')}\right)} - \frac{17}{36}\epsilon'\lg(1/\epsilon') 
\right) \label{eq:derstuff} \\
&=& \frac{1}{\lg q} \left(\frac{17}{18}\epsilon'\lg(1/17\epsilon') - 
\frac{17}{36}\epsilon'\lg(1/\epsilon')\right) \nonumber \\
&\geq& \frac{1}{36\lg q}\epsilon'\lg(1/\epsilon') \nonumber\\
&\geq& \frac{1}{51} H(Y)(1-H(Y)), \nonumber
\end{eqnarray}
were (\ref{eq:derstuff}) follows from the fact that
\begin{eqnarray*}
\frac{d}{d\epsilon}\left(\frac{17}{18}(\epsilon+\epsilon')\lg\left(\frac{1
}{\frac{17}{18} (\epsilon+\epsilon')}\right) - 
\frac{17}{18}\epsilon\lg(1/\epsilon) - 
\frac{17}{36}\epsilon'\lg(1/\epsilon')\right) = 
\frac{17}{18}\left(\lg\left(\frac{\epsilon}{\frac{17}{18}(\epsilon+\epsilon')} 
\right)\right),
\end{eqnarray*}
which is negative for $\epsilon < 17\epsilon'$ and positive for $\epsilon > 
17\epsilon'$.
\end{proof}

\subsection{High Entropy Variables}
For the remainder of this section, let $f(x) = -\frac{x\lg x}{\lg q}$. The 
following lemma proves lower and upper bounds on $f(x)$.
\begin{lemma}\label{lem:taylorapprox}
For $-\frac{1}{q} \leq t \leq \frac{q-1}{q}$, we have
\begin{eqnarray}
 \frac{1}{q} + \left(1-\frac{1}{\ln q}\right)t - \frac{q}{\ln q}t^2 \leq 
f\left(\frac{1}{q}+t\right) \leq \frac{1}{q} + \left(1 - \frac{1}{\ln 
q}\right)t - \frac{q(q\ln q - (q-1))}{(q-1)^2 \ln q}t^2. \label{eq:taylorbd}
\end{eqnarray}
\end{lemma}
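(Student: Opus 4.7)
My plan is to observe that both bounds have the form ``tangent line to $f$ at $x = 1/q$ plus a quadratic correction in $t$'', with the quadratic coefficients tuned so that each bound matches $f$ exactly at one endpoint of $[0,1]$. Computing $f(1/q) = 1/q$ and $f'(1/q) = 1 - 1/\ln q$ confirms that the affine piece $\frac{1}{q} + (1 - 1/\ln q)\, t$ common to both bounds is precisely the tangent line at $1/q$, so it suffices to sandwich the remainder $r(t) := f(1/q+t) - \frac{1}{q} - (1 - 1/\ln q)\, t$ between the two prescribed quadratics, using that $r''(t) = f''(1/q + t) = -\frac{1}{(1/q+t)\,\ln q}$ is negative and strictly increasing in $t$.

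For the upper bound, I would set $h(t) := r(t) + \frac{q(q\ln q - (q-1))}{(q-1)^2 \ln q}\, t^2$. Direct algebra gives $h(0) = h'(0) = 0$ and (using $f(1)=0$) $h((q-1)/q) = 0$, and the coefficient of $t^2$ is chosen precisely to force the latter. Since $h''$ is strictly increasing, $h$ is concave on some interval $[-1/q, t^\ast]$ containing $0$ (one checks $h''(0) < 0$, equivalent to the elementary inequality $\ln q \le (q - 1/q)/2$, valid for all $q \ge 2$) and convex on $[t^\ast, (q-1)/q]$. On the concave piece, the tangent-at-critical-point bound $h(t) \le h(0) + h'(0)\, t = 0$ gives $h \le 0$; on the convex piece, $h$ lies below its chord between $h(t^\ast) \le 0$ and $h((q-1)/q) = 0$, so $h \le 0$ there as well.

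The lower bound is analogous with $k(t) := r(t) + \frac{q}{\ln q}\, t^2$. One checks $k(0) = k'(0) = 0$ and (using $f(0)=0$) $k(-1/q) = 0$, with $q/\ln q$ being exactly the coefficient forced by this left-endpoint match. This time $k''(t) = -\frac{1}{(1/q+t)\,\ln q} + \frac{2q}{\ln q}$ changes sign at the explicit point $t = -1/(2q)$, making $k$ concave on $[-1/q, -1/(2q)]$ and convex on $[-1/(2q), (q-1)/q]$. On the convex piece, $k'(0) = 0$ makes $t = 0$ the global minimum with value $0$, so $k \ge 0$ there; in particular $k(-1/(2q)) \ge 0$. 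On the concave piece, $k$ lies above the chord between $(-1/q, 0)$ and $(-1/(2q), k(-1/(2q)))$, which is a linear interpolation of two non-negative values and is hence non-negative, giving $k \ge 0$.

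The only real content of the argument is the algebra that pins $h((q-1)/q)$ and $k(-1/q)$ to zero; those identities reveal the seemingly exotic coefficients in the statement as exactly the values needed to make $h$ and $k$ vanish at the respective boundary points. Once they are in hand, the concave-then-convex shape analysis above (in opposite order for the two bounds) closes both inequalities uniformly on $[-1/q, (q-1)/q]$, with no delicate estimates required.
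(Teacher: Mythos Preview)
Your proposal is correct and follows essentially the same approach as the paper's proof. Both arguments introduce the same difference functions (your $k$ is the paper's $g$, and your $h$ is the negative of the paper's $h$), verify they vanish at $t=0$ and at the relevant endpoint, and exploit the monotonicity of the second derivative to pin down the convexity structure; the paper phrases the conclusion as ``the minimum is attained at $t=0$ or at the endpoint'', while you phrase it via tangent-line and chord bounds, but these are the same argument.
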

\begin{proof}
 Let
 \[
  g(t) = f\left(\frac{1}{q}+t\right) - \frac{1}{q} - \left(1-\frac{1}{\ln 
q}\right)t + \frac{q}{\ln q}t^2.
 \]
To prove the lower bound in (\ref{eq:taylorbd}), it suffices to show that $g(t) 
\geq 0$ for all $-\frac{1}{q}\leq t\leq\frac{q-1}{q}$. Note that the first and 
second derivatives of $g$ are
\begin{eqnarray*}
 g'(t) &=& -\frac{\ln\left(\frac{1}{q}+t\right)}{\ln q} - 1 + \frac{2qt}{\ln 
q}\\
g''(t) &=& -\frac{1}{\left(\frac{1}{q}+t\right)\ln q} + \frac{2q}{\ln q}.
\end{eqnarray*}
It is clear that $g''(t)$ is an increasing function of 
$t\in\left(-\frac{1}{q}, \frac{q-1}{q}\right)$, and $g''(-1/2q) = 0$. Since 
$g'(-1/2q) = \frac{\ln 2 - 1}{\ln q} < 0$, it follows that $g(t)$ is minimized 
either at $t=-1/q$ or at the unique value of $t > -\frac{1}{2q}$ for which
$g'(t) 
= 0$. Note that this latter value of $t$ is $t=0$, at which $g(t) = 0$. 
Moreover, $g(-1/q) = 0$. Thus, $g(t) \geq 0$ on the desired domain, which 
establishes the lower bound.

Now, let us prove the upper bound in (\ref{eq:taylorbd}). Define
\[
 h(t) = \frac{1}{q} + \left(1-\frac{1}{\ln q}\right) t - \frac{q(q\ln q 
-(q-1))}{(q-1)^2 \ln q}t^2 - f\left(\frac{1}{q}+t\right).
\]
Note that it suffices to show that $h(t) \geq 0 $ for all $-\frac{1}{q} \leq t 
\leq \frac{q-1}{q}$. Observe that the first and second derivatives of $h$ are
\begin{eqnarray*}
 h'(t) &=& 1 - \frac{2q(q\ln q - (q-1))}{(q-1)^2\ln q}t + 
\frac{\ln\left(\frac{1}{q}+t\right)}{\ln q}\\
h''(t) &=& -\frac{2q(q\ln q - (q-1))}{(q-1)^2 \ln q} + 
\frac{1}{\left(\frac{1}{q}+t\right)\ln q}.
\end{eqnarray*}
Now, observe that $h'(0) = 0$ and $h''(0) > 0$. Moreover, $h''(t)$ is 
decreasing on $t\in \left(-\frac{1}{q},\frac{q-1}{q}\right)$. Thus, it follows 
that the minimum value of $h(t)$ occurs at either $t=0$ or $t=\frac{q-1}{q}$.
Since $h(0) = 
h\left(\frac{q-1}{q}\right) = 0$, we must have that $h(t)\geq 0$ on the desired
domain, which 
establishes the upper bound.
\end{proof}

Next, we prove a lemma that provides lower and upper bounds on the entropy of a 
distribution that samples each symbol in $\zq$ with probability close to 
$\frac{1}{q}$.
\begin{lemma} \label{lem:hdev}
 Suppose $p$ is a distribution on $\zq$ such that
for each $0\leq i\leq q-1$, we have $p(i) = \frac{1}{q} + \delta_i$ with
$\max_{0\leq i < q} |\delta_i| = \delta$. Then,
\[
 1-\frac{q^2}{\ln q}\delta^2 \leq H(p) \leq 1-\frac{q^2 (q\ln q - 
(q-1))}{(q-1)^3 \ln q}\delta^2.
\]
\end{lemma}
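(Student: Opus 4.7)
The plan is to apply Lemma~\ref{lem:taylorapprox} term-by-term to $H(p) = \sum_i f(1/q + \delta_i)$, where $f(x) = -x \lg x / \lg q$, and then sum. The key observation that makes the linear terms vanish is that $\sum_i \delta_i = 0$ (since both $p$ and the uniform distribution sum to $1$), so after summing the Taylor-type bounds, only the constant $1$ and the quadratic-in-$\delta_i$ terms survive. Concretely, summing the upper bound in \eqref{eq:taylorbd} gives
\[
H(p) \;\le\; 1 \;-\; \frac{q(q\ln q - (q-1))}{(q-1)^2 \ln q}\,\sum_{i=0}^{q-1} \delta_i^2,
\]
and summing the lower bound gives
\[
H(p) \;\ge\; 1 \;-\; \frac{q}{\ln q}\,\sum_{i=0}^{q-1} \delta_i^2.
\]
So everything reduces to controlling $\sum_i \delta_i^2$ in terms of $\delta = \max_i |\delta_i|$.

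For the lower bound on $H(p)$, I would simply use the crude bound $\sum_i \delta_i^2 \le q\delta^2$ (since each $|\delta_i| \le \delta$), which immediately yields $H(p) \ge 1 - \frac{q^2}{\ln q}\delta^2$.

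For the upper bound on $H(p)$, I need a matching lower bound on $\sum_i \delta_i^2$. Here is where a small optimization step is needed: assume without loss of generality that some $\delta_{i^*} = \pm\delta$ (by replacing $p$ with a cyclic shift and/or a sign flip does not matter, since we just care about the multiset of $\delta_i$'s). Then the remaining $q-1$ values $\delta_i$ sum to $\mp\delta$, and subject to that constraint the sum of their squares is minimized (by Cauchy–Schwarz, or Lagrange multipliers) when they are all equal to $\mp\delta/(q-1)$. This gives
\[
\sum_{i=0}^{q-1} \delta_i^2 \;\ge\; \delta^2 + (q-1)\cdot\frac{\delta^2}{(q-1)^2} \;=\; \frac{q}{q-1}\,\delta^2.
\]
Plugging this into the upper-bound display above produces exactly $H(p) \le 1 - \frac{q^2(q\ln q - (q-1))}{(q-1)^3 \ln q}\delta^2$, as claimed.

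The main (mild) obstacle is the second-moment minimization under the constraints $\sum_i \delta_i = 0$, $|\delta_i|\le \delta$, and $\max_i|\delta_i| = \delta$; nothing here is deep, but one has to be a little careful to verify that the minimum is indeed attained in the symmetric configuration above and not at a corner where some $|\delta_i| = \delta$ for more than one index (which only increases $\sum_i \delta_i^2$, so the chosen configuration is genuinely the minimizer). Everything else is a direct substitution into Lemma~\ref{lem:taylorapprox}.
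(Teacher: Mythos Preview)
Your proposal is correct and follows essentially the same approach as the paper: apply Lemma~\ref{lem:taylorapprox} termwise, use $\sum_i \delta_i = 0$ to kill the linear terms, then bound $\sum_i \delta_i^2$ above by $q\delta^2$ for the lower bound on $H(p)$ and below by $\frac{q}{q-1}\delta^2$ for the upper bound. The paper states the latter inequality without the Cauchy--Schwarz justification you give, but the argument is otherwise identical.
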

\begin{proof}
 Observe that $\sum_{i=0}^{q-1} \delta_i = 0$. Thus, for the lower bound on
$H(p)$, note that
\begin{eqnarray*}
 H(p) &=& \sum_{i=0}^{q-1} f\left(\frac{1}{q}+\delta_i\right)\\
 &\geq& \sum_{i=0}^{q-1} \left(\frac{1}{q} + \left(1-\frac{1}{\ln
q}\right)\delta_i - \frac{q}{\ln q}\delta_i^2\right)\\
&=& 1 - \frac{q}{\ln q}\sum_{i=0}^{q-1}\delta_i^2\\
&\geq& 1 - \frac{q^2}{\ln q}\delta^2,
\end{eqnarray*}
where the second line is obtained using Lemma~\ref{lem:taylorapprox}, and the 
final
line uses the fact that $|\delta_i|\leq \delta$ for all $i$.

Similarly, note that the upper bound on $H(p)$ can be obtained as follows:
\begin{eqnarray*}
 H(p) &=& \sum_{i=0}^{q-1} f\left(\frac{1}{q}+\delta_i\right)\\
&\leq& \sum_{i=0}^{q-1} \left(\frac{1}{q} + \left(1-\frac{1}{\ln
q}\right)\delta_i - \frac{q(q\ln q - (q-1))}{(q-1)^2\ln q}\delta_i^2\right)\\
&=& 1 - \frac{q(q\ln q - (q-1))}{(q-1)^2\ln q} \sum_{i=0}^{q-1} \delta_i^2\\
&\leq& 1 - \frac{q^2(q\ln q - (q-1))}{(q-1)^3\ln q}\delta^2,
\end{eqnarray*}
where we have used the fact that
\[
 \sum_{i=0}^{q-1} \delta_i^2 \geq \delta^2 + 
(q-1)\cdot\left(\frac{\delta}{q-1}\right)^2 = \frac{q}{q-1}\delta^2.
\]
\end{proof}

\begin{rmk}
 Lemma~\ref{lem:hdev} shows that if $p$ is a distribution over $\zq$ with 
$\max_{0\leq i < q} | p(i) - \frac{1}{q} | = \delta$, then $H(p) = 
1 - \Theta_q(\delta^2)$.
\end{rmk}

\begin{lemma} \label{lem:uniformconv}
  Let $X$ and $Y$ be random variables taking values in $\zq$ such that 
$H(X) 
\geq H(Y)$. Also, assume $0 < \delta, \delta' \leq \frac{1}{2q^2}$. Suppose 
$\Pr[X=i] = \frac{1}{q} + \delta_i$ and $\Pr[Y=i] = 
\frac{1}{q} + \delta'_i$ for $0\leq i\leq q-1$, such that $\max_{0\leq i < q} 
|\delta_i| = \delta$ and $\max_{0\leq i < q} |\delta'_i| = \delta'$. Then,
\begin{eqnarray}
 H(X+Y) - H(X) \geq \frac{\ln q}{16q^2}\cdot H(X)(1-H(X)). \label{eq:hientgain}
\end{eqnarray}
\end{lemma}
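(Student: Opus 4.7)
The plan is to exploit the fact that $X$, $Y$, and hence $X+Y$ are all close to uniform on $\zq$, so that Lemma~\ref{lem:hdev} provides tight two-sided quadratic estimates on $1-H(\cdot)$ in terms of the maximal deviations from $1/q$. The gain $H(X+Y)-H(X)$ will then be extracted by pairing the lower bound on $H(X+Y)$ against the upper bound on $H(X)$ supplied by Lemma~\ref{lem:hdev}, and normalizing the result against $H(X)(1-H(X))$ using the complementary lower bound on $H(X)$.

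First I would compute the convolution explicitly. Writing $\Pr[X+Y=k] = \sum_i (1/q+\delta_i)(1/q+\delta'_{k-i})$ and using $\sum_i \delta_i = \sum_j \delta'_j = 0$, the cross terms involving a single $\delta$ or $\delta'$ vanish, leaving $\Pr[X+Y=k] = 1/q + \eta_k$ with $\eta_k := \sum_i \delta_i \delta'_{k-i}$. A direct triangle-inequality bound gives $\eta := \max_k |\eta_k| \le q\delta\delta'$, and since $\delta' \le 1/(2q^2)$, this yields $\eta \le \delta/(2q)$, hence $\eta^2 \le \delta^2/(4q^2)$. Thus $X+Y$ is quantitatively \emph{more} uniform than $X$, which is the source of the entropy gain.

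Next I would feed these into Lemma~\ref{lem:hdev} three times. Its lower bound applied to $X+Y$ gives $H(X+Y) \ge 1 - (q^2/\ln q)\eta^2 \ge 1 - \delta^2/(4\ln q)$; its upper bound applied to $X$ gives $H(X) \le 1 - \frac{q^2(q\ln q - (q-1))}{(q-1)^3 \ln q}\delta^2$; and its lower bound applied to $X$ gives $1-H(X) \le (q^2/\ln q)\delta^2$. Subtracting the first two yields
\[
H(X+Y) - H(X) \ge \left(\frac{q^2(q\ln q - (q-1))}{(q-1)^3 \ln q} - \frac{1}{4\ln q}\right)\delta^2,
\]
while the third, combined with $H(X)(1-H(X)) \le 1 - H(X)$, gives $\frac{\ln q}{16q^2} H(X)(1-H(X)) \le \delta^2/16$. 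The theorem therefore reduces to the single closed-form inequality
\[
\frac{q^2(q\ln q - (q-1))}{(q-1)^3 \ln q} - \frac{1}{4\ln q} \ge \frac{1}{16} \qquad \text{for all } q \ge 2.
\]

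The main obstacle is simply verifying this last numeric inequality uniformly in $q$. The dominant term rewrites as $\frac{q^3}{(q-1)^3} - \frac{q^2}{(q-1)^2\ln q}$, which equals $\tfrac{4(2\ln 2-1)}{\ln 2} \approx 2.23$ at $q=2$ and tends to $1$ as $q\to\infty$, while the subtracted $1/(4\ln q)$ is maximized at $q=2$ where it is $\approx 0.36$; a short case check at small $q$ together with a monotonicity argument for $q\ge 3$ finishes this step with considerable slack over $1/16$. Note that the hypothesis $H(X) \ge H(Y)$ is not actually invoked in these estimates---its only role is to fix which variable is tagged as $X$ in the statement of the conclusion.
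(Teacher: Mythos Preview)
Your proposal is correct and follows essentially the same argument as the paper: compute the convolution to show the maximal deviation of $X+Y$ from uniform is at most $q\delta\delta' \le \delta/(2q)$, apply Lemma~\ref{lem:hdev} to lower-bound $H(X+Y)$ and upper-bound $H(X)$, and then verify the resulting numerical inequality in $q$ (which the paper states as $\frac{q\ln q - (q-1)}{(q-1)^3} - \frac{1}{4q^2} \geq \frac{\ln q}{16q^2}$, an equivalent rearrangement of your final inequality). Your observation that the hypothesis $H(X)\ge H(Y)$ is never used is also consistent with the paper's proof.
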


\noindent \textbf{Overview of proof}. We show that since $X$ and $Y$ 
sample all symbols in $\zq$ with probability close to $1/q$, it follows that 
$X+Y$ also samples each symbol with probability close to $1/q$. In particular, 
one can show that $X+Y$ samples each symbol with probability in 
$\left[\frac{1}{q} - \frac{\delta}{2q}, \frac{1}{q} + 
\frac{\delta}{2q}\right]$. Thus, we can use Lemma~\ref{lem:hdev} to get a lower 
bound on $H(X+Y)$. Similarly, Lemma~\ref{lem:hdev} also gives us an upper bound 
on $H(X)$. This allows us to bound the left-hand side of (\ref{eq:hientgain}) 
adequately.

\begin{proof}
By Lemma~\ref{lem:hdev}, we know that
\begin{eqnarray}
 1 - \frac{q^2}{\ln q}\delta^2 \leq  H(X) \leq 1 - \frac{q^2(q\ln q 
- (q-1))}{(q-1)^3 \ln q }\delta^2. \label{eq:xentropy}
\end{eqnarray}

Note that
\begin{align*}
 \Pr[X+Y=k] &= \sum_{i=0}^{q-1} \Pr[X=i]\Pr[Y=k-i]\\
 &= \sum_{i=0}^{q-1} \left(\frac{1}{q} + 
\delta_i\right)\left(\frac{1}{q}+\delta'_{k-i}\right)\\
&= \frac{1}{q} + \sum_{i=0}^{q-1} \delta_i \delta'_{k-i} \\
& \leq \frac{1}{q} + q\delta\delta'\\
&\leq \frac{1}{q} + \frac{\delta}{2q}.
\end{align*}
Similarly,
\begin{equation*}
 \Pr[X+Y=k] = \frac{1}{q} + \sum_{i=0}^{q-1} \delta_i\delta_{k-i} \geq \frac{1}{q} - q\delta\delta' \geq \frac{1}{q} - \frac{\delta}{2q}
\ . 
\end{equation*}
Thus, Lemma~\ref{lem:hdev} implies that
\begin{equation}
 H(X+Y) \geq 1 - \frac{q^2}{\ln q}\left( \frac{\delta}{2q} 
\right)^2 
= 1 - \frac{1}{4\ln q}\delta^2. \label{eq:sumlbd}
\end{equation}
Therefore, by (\ref{eq:xentropy}) and (\ref{eq:sumlbd}), we have
\begin{eqnarray*}
 H(X+Y) - H(X) &\geq& \left(1 - \frac{1}{4\ln q}\delta^2\right) - 
\left(1 - \frac{q^2(q\ln q - (q-1))}{(q-1)^3 \ln q }\delta^2\right)\\
&=& \left(\frac{q\ln q - (q-1)}{(q-1)^3} - 
\frac{1}{4q^2}\right)\cdot\frac{q^2}{\ln q}\delta^2\\
&\geq& \frac{\ln q}{16q^2}\cdot\frac{q^2}{\ln q}\delta^2\\
&\geq& \frac{\ln q}{16q^2}(1-H(X))\\
&\geq& \frac{\ln q}{16q^2}H(X)(1-H(X)),
\end{eqnarray*}
as desired.
\end{proof}

\section{Rough Polarization} \label{sec:polarizeproofs}
\begin{proof}[Proof of Lemma~\ref{lem:sqrtdecay}:]
 Fix a $0\leq j < 2^n$. Also, let $h = H(W_n^{(j)})$, and let $\delta = 
H((W_n^{(j)})^-) - H(W_n^{(j)}) = 
H(W_n^{(j)}) - H((W_n^{(j)})^+)$. Then, note that
\begin{eqnarray}
 \sqrt{T(W_{n+1}^{(2j)})} + \sqrt{T(W_{n+1}^{(2j+1)})} 
= \sqrt{h(1-h)+(1-2h)\delta-\delta^2}+\sqrt{ 
h(1-h)-(1-2h)\delta-\delta^2}. \label{eq:tsum}
\end{eqnarray}
For ease of notation, let $f: [-1,1]\to\R$ be the function given by
\[
 f(x) = \sqrt{h(1-h) + x} + \sqrt{h(1-h)-x}.
\]
By symmetry, we may assume that $h\leq\frac{1}{2}$ without loss of
generality. Moreover, if we let $\alpha=\alpha(q)$ be the constant described
in Theorem~\ref{thm:main-intro}, then we know that $\delta \geq \alpha h(1-h)$.
Then, since $f'''(x) \leq 0$ for $0\leq x\leq h(1-h)$, Taylor's Theorem implies
that
\begin{eqnarray*}
 \sqrt{T(W_{n+1}^{(2j)})} + \sqrt{T(W_{n+1}^{(2j+1)})} &\leq& f((1-2h)\delta)\\
&\leq& f(0) + f'(0) ((1-2h)\delta) + \frac{f''(0)}{2} ((1-2h)\delta)^2\\
&=& 2\sqrt{h(1-h)} - \frac{((1-2h)\delta)^2}{4(h(1-h))^{3/2}}\\
&\leq& 2\sqrt{h(1-h)} - \frac{(\alpha h(1-h)(1-2h))^2}{4(h(1-h))^{3/2}}\\
&=& 2\sqrt{h(1-h)} - \frac{\alpha^2}{4}(1-2h)^2 \sqrt{h(1-h)}.
\end{eqnarray*}
Thus, if $1-2h \geq \frac{\alpha}{8+\alpha}$, then the desired result follows
for $\Lambda \geq 1 - \frac{1}{2}\left(\frac{\alpha^2}{16+2\alpha}\right)^2$.

Next, consider the case in which $1-2h < \frac{\alpha}{8+\alpha}$. Then,
$\frac{4}{8+\alpha} < h \leq \frac{1}{2}$. Hence, $\delta \geq \alpha h(1-h)
\geq \frac{2\alpha}{8+\alpha}$, which implies that $\delta \geq 2(1-2h)$. It 
follows that
\[
 (1-2h)\delta-\delta^2 \leq -\frac{\delta^2}{2}.
\]
Hence, by plugging this into (\ref{eq:tsum}), we have that
\begin{eqnarray*}
 \frac{1}{2}\left(\sqrt{T(W_{n+1}^{(2j)})} + \sqrt{T(W_{n+1}^{(2j+1)})}\right) 
\leq \sqrt{h(1-h)-\frac{\delta^2}{2}}
\end{eqnarray*}
Now, recall that $\delta\geq \frac{2\alpha}{8+\alpha}$, a constant 
bounded away from 0. Moreover, if $c$ is a positive constant, then 
$\frac{\sqrt{x-c}}{\sqrt{x}}$ is an increasing function of $x$ for $x>c$. 
Since $h(1-h) \leq \frac{1}{4}$, it follows that
\[
\dfrac{\frac{1}{2}\left(\sqrt{T(W_{n+1}^{(2j)})} +  
\sqrt{T(W_{n+1}^{(2j+1)})}\right)}{T(W_n^{(j)})} \leq 
\frac{\sqrt{h(1-h)-\frac{\delta^2}{2}}}{\sqrt{h(1-h)}} \leq 
\frac{\sqrt{\frac{1}{4}-\frac{\delta^2}{2}}}{\sqrt{\frac{1}{4}}} \leq 
\sqrt{1-\frac{8\alpha^2}{(8+\alpha)^2}}.
\]

We conclude that the desired statement holds for $\Lambda = 
\max\left\{1-\frac{1}{2}\left(\frac{\alpha^2}{16+2\alpha}\right)^2, 
\sqrt{1-\frac{8\alpha^2}{(8+\alpha)^2}}\right\}$.
\end{proof}

\section{Fine polarization: Proof of Theorem~\ref{thm:finepolarize}}
\label{app:fine-polar}

\finepolarizethm*
\begin{proof}
 Let $\rho \in (\Lambda^2, 1)$ be a fixed constant, where $\Lambda$ is the 
constant described in Theorem~\ref{thm:roughpolarize}, and choose $\gamma 
> \lg(1/\rho)$ such that $\beta = \left(1+\frac{1}{\gamma}\right)\delta <
\frac{1}{2}$. Then, let us set $m = \left\lfloor \frac{n_0}{1+\gamma} 
\right\rfloor$ and $n = \left\lceil \frac{\gamma n_0}{1+\gamma} \right\rceil$,
so 
that $n_0 = m + n$. Moreover, let $d = \left\lfloor
\frac{12n\lg q}{m\lg(1/\rho)}\right\rfloor$ and choose a constant $a_\rho > 0$
such that
\[
 a_\rho > \frac{12(\ln 2)(\lg q)}{(1-2\beta)^2 \lg(1/\rho)}\left(1 + 
\lg\left(\frac{48\gamma\lg q}{\lg(1/\rho)}\right)\right).
\]

Now, we choose
\begin{eqnarray}
n_0 > (1+\gamma)\max\left\{2b_\rho \lg(2/\epsilon), 
\frac{24\lg(1/\beta)\lg q}{\beta\lg(1/\rho)}, 2a_\rho \lg(2/\epsilon), 1, 
\frac{1}{\gamma}\right\}, \label{eq:n0bd}
\end{eqnarray}
where $b_\rho$ is the constant described in Theorem~\ref{thm:roughpolarize}. 
Note that this guarantees that
\begin{eqnarray}
 m > \max\left\{b_\rho \lg(2/\epsilon), \frac{12 \lg(1/\beta)\lg q 
}{\beta\lg(1/\rho)}, a_\rho \lg(2/\epsilon) \right\}. \label{eq:mbd}
\end{eqnarray}

Then, Theorem~\ref{thm:roughpolarize} implies that there exists a set 
\begin{eqnarray}
\mathcal{W}' \subseteq \{W_m^{(i)}: 0\leq i \leq 2^m-1\}
\end{eqnarray}
such that for all 
$M\in\mathcal{W}'$, we have $\zmax(M)\leq 2\rho^m$ and 
\begin{eqnarray}
\Pr_i[W_m^{(i)}\in\mathcal{W}'] \geq 1-H(W)-\frac{\epsilon}{2}. 
\label{eq:wmiprob}
\end{eqnarray}
Let $T$ be 
the set of indices $i$ for which $W_m^{(i)} \in \mathcal{W}'$.

Fix an arbitrary $M\in\mathcal{W}'$. Recursively define 
$\left\{\tilde{Z}_k^{(i)}\right\}_{0\leq i\leq 2^k-1}$ by $\tilde{Z}_0^{(0)} = 
\zmax(M)$ and 
\begin{eqnarray*}
 \tilde{Z}_{k+1}^{(i)} = \begin{cases}
                          \left(\tilde{Z}_k^{\lfloor i/2\rfloor}\right)^2, 
\quad &i\equiv 1\pmod{2}\\
q^3 \tilde{Z}_k^{\lfloor i/2\rfloor}, \quad &i\equiv 0\pmod{2}
                         \end{cases}.
\end{eqnarray*}
Now, let us define the sets $G_j(n)\subseteq \{i\in\Z: 0\leq i \leq 2^{n}-1\}$,
for $j=0,1,\dots,d-1$ as follows:
\[
 G_j(n) = \left\{i: \sum_{\frac{jn}{d}\leq k < \frac{(j+1)n}{d}} i_k \geq \beta 
n/d\right\},
\]
where $\overline{i_{n-1}i_{n-2}\cdots i_0}$ is the binary representation of 
$i$. Also, let $G(n) = \bigcap_{0\leq j < d} G_j(n)$. Note that if we choose 
$i$ uniformly among $0,1,\dots, 2^n-1$, then $i_0, i_1, \dots, i_{n-1}$ are 
i.i.d. Bernoulli random variables. Thus, Hoeffding's inequality implies that
\begin{eqnarray*}
 \Pr_{0\leq i < 2^n} [i\in G_j(n)] \geq 1-\exp(-(1-2\beta)^2 n/2d)
\end{eqnarray*}
for every $j$. Hence, by the union bound,
\begin{eqnarray}
 \Pr_{0\leq i < 2^n} [i\in G(n)] \geq 1-d\exp(-(1-2\beta)^2 n/2d). 
\label{eq:gnunion}
\end{eqnarray}

Now, assume $i\in G(n)$. Note that $\tilde{Z}_{(j+1)n/d}^{\left(\lfloor 
i/2^{n(d-j-1)/d} 
\rfloor\right)}$ 
can be obtained by taking $\tilde{Z}_{jn/d}^{\left(\lfloor i/2^{n(d-j)/d} 
\rfloor\right)}$ and performing a sequence of $n/d$ operations, each of which 
is either $z\mapsto z^2$ (squaring) or $z\mapsto q^3 z$ 
($q^3$-fold increase). Since 
$i\in G_j(n)$, at least $\beta n/d$ of the operations must be squarings. 
Hence, it is not too difficult to see that the maximum possible value of 
$\tilde{Z}_{(j+1)n/d}^{\left(\lfloor i/2^{n(d-j-1)/d}\rfloor\right)}$ is 
obtained when we have $(1-\beta)n/d$ $q^3$-fold increases followed by 
$\beta n/d$ squarings. Hence,
\[
 \lg \tilde{Z}_{(j+1)n/d}^{\left(\lfloor i/2^{n(d-j-1)/d} \rfloor\right)} \leq 
2^{\beta n/d} \left(\frac{n}{d} (1-\beta)(3 \lg q) + \lg
\tilde{Z}_{jn/d}^{\left(\lfloor 
i/2^{n(d-j)/d}\rfloor\right)}\right).
\]
Making repeated use of the above inequality, we see that
\begin{eqnarray}
 \lg Z(M_n^{(i)}) &\leq& \lg \tilde{Z}_n^{(i)} \nonumber\\
 &\leq& 2^{\beta n} \lg \zmax(M) + 
\frac{n}{d}(1-\beta)(3\lg q)\left(2^{\beta n/d} + 2^{2\beta n/d} + \cdots + 
2^{\beta n}\right) \nonumber\\
&\leq& 2^{\beta n} \lg \zmax(M) + \frac{n}{d}(3\lg q)\frac{(1-\beta) 2^{\beta 
n}}{1-2^{-\frac{\beta n}{d}}}\nonumber\\
&\leq& 2^{\beta n}\left(\lg(2\rho^m) + \frac{n}{d}(3\lg 
q)\right)\label{eq:onefrac}\\
&\leq& -2^{\beta n}, \label{eq:minusone}
\end{eqnarray}
where (\ref{eq:onefrac}) follows from (\ref{eq:mbd}) and
\begin{eqnarray*}
 2^{-\frac{n}{d}\beta} &\leq& 2^{-\frac{\beta m\lg(1/\rho)}{12\lg q}}\\
 &\leq& \beta,
\end{eqnarray*}
while (\ref{eq:minusone}) follows from (\ref{eq:mbd}) and
\begin{eqnarray*}
 \lg(2\rho^m) + \frac{n}{d}(3\lg q) &\leq& \lg(2\rho^m) + \dfrac{3n\lg 
q}{\frac{6n\lg q}{m\lg(1/\rho)}}\\
 &\leq& 1 - m\lg(1/\rho) + \frac{m\lg(1/\rho)}{2}\\
 &=& 1 - \frac{m\lg(1/\rho)}{2}\\
 &\leq& -1.
\end{eqnarray*}
Therefore, for any $0\leq k < 2^{n_0}$ that can be written as $k = 2^n i' + 
i$, for $0\leq i' < 2^m$ and $0\leq i < 2^n$ such that $i'\in T$ and $i\in 
G(n)$, we have that for $M = W_m^{(i')}$,
\[
 \lg \zmax(W_{n_0}^{(k)}) = \lg \zmax(M_n^{(i)}) \leq -2^{\beta n} \leq
-2^{\delta n_0}.
\]
Moreover, by (\ref{eq:wmiprob}), (\ref{eq:gnunion}), and the union bound, we 
see that the probability that a uniformly chosen $0\leq k < 2^{n_0}$ is of the 
above form is at least
\begin{eqnarray*}
 1 - H(W) - \frac{\epsilon}{2} - d e^{-\frac{(1-2\beta)^2 n}{2d}} 
&\geq& 1 - H(W) - \frac{\epsilon}{2} - \frac{12n\lg 
q}{m\lg(1/\rho)}\exp\left(-\frac{(1-2\beta)^2 m\lg(1/\rho)}{12\lg q} 
\right)\\
&\geq& 1 - H(W) - \frac{\epsilon}{2} - \frac{48\gamma\lg 
q}{\lg(1/\rho)}\exp\left(-\frac{(1-2\beta)^2 m\lg(1/\rho)}{12\lg q} 
\right)\\
&\geq& 1-H(W)-\frac{\epsilon}{2} - \frac{48\gamma\lg 
q}{\lg(1/\rho)}\left(\frac{\epsilon}{2}\right)^{\frac{a_{\rho} (1-2\beta)^2 
\lg(1/\rho)}{12(\ln 2)(\lg q)}}\\
&\geq& 1-H(W)-\frac{\epsilon}{2}-\frac{48\gamma\lg 
q}{\lg(1/\rho)}\left(\frac{\epsilon}{2}\right)^{1+\lg\left(\frac{48\gamma\lg 
q}{\lg(1/\rho)}\right)}\\
&\geq& 1-H(W)-\frac{\epsilon}{2}-\frac{48\gamma\lg 
q}{\lg(1/\rho)}\cdot\frac{\epsilon}{2}\cdot\left(\frac{1}{2}\right)^{
\lg\left(\frac{48\gamma\lg q}{\lg(1/\rho)}\right)}\\
&=& 1-H(W)-\epsilon.
\end{eqnarray*}
So if we take $c_\delta = \max\left\{4(1+\gamma)a_\rho, 4(1+\gamma)b_\rho,
1+\gamma, \frac{1+\gamma}{\gamma}, \frac{24(1+\gamma)\lg(1/\beta)\lg
q}{\beta\lg(1/\rho)}\right\}$, then $n_0 > c_\delta\lg(1/\epsilon)$ would
guarantee (\ref{eq:n0bd}). This completes the proof.
\end{proof}
\end{document}